\documentclass[sigconf,anonymous=false,review=false]{acmart}

\acmConference[FAccT '27]{ACM Conference on Fairness, Accountability,
  and Transparency}{June 2027}{TBD}
\acmYear{2027}
\copyrightyear{2027}
\acmISBN{978-x-xxxx-xxxx-x/27/06}
\acmDOI{10.1145/xxxxxxx.xxxxxxx}

\usepackage{booktabs}        
\usepackage{algorithm}       
\usepackage{algpseudocode}   

\usepackage{amsthm}
\usepackage{tikz}            
\usetikzlibrary{positioning,arrows.meta,shapes,fit}
\usepackage{graphicx}        
\usepackage{xcolor}
\usepackage{listings}
\usepackage{seqsplit}        
\usepackage{hyperref}
\usepackage{cleveref}

\newcommand{\btt}[1]{\texttt{\seqsplit{#1}}}

\newtheorem{lemma}{Lemma}
\newtheorem{definition}{Definition}

\title{KYA: A Framework-Agnostic Trust Layer for Autonomous Systems
       with Verifiable Provenance and Hierarchical Policy
       Composition}

\author{Kolawole Quadri}
\affiliation{%
  \institution{Veldt Labs}
  \country{USA}
}
\email{kola@veldtlabs.ai}

\begin{abstract}
\textbf{KYA (Know Your Agents)} is an open-source, framework-agnostic
trust and governance layer for autonomous systems, composed of five
primitives: (1)~a four-gate inbound apply pipeline; (2)~an
only-tighten composition algebra over a three-channel multi-tenant
hierarchy; (3)~\textbf{KYP (Know Your Principal)}, a schema-level
unification of trust scoring across human users, AI agents, and
service accounts; (4)~auditable interaction-multiplier amplification
over an AIVSS-shaped additive baseline; and (5)~two-axis delegation
attribution: a static premium for risky delegates and a runtime
debit for actual delegate misbehavior in multi-agent fan-out.
Together these span three pillars (\emph{trust, governance, and
evidentiary assurance}), making an autonomous system's actions
authorized, policy-conforming, and post-hoc verifiable: where
observability answers how long, how much, and what path, KYA answers
was it authorized, did it conform, and can it be verified; it
composes with observability rather than replacing it. It ships
native adapters for 15+ agent frameworks. On a $4 \times 9$ cross-backend matrix all 36 cells pass;
the pure-function scorer runs sub-millisecond at p99 and the system
sustains $\approx 1{,}800$ ops/sec at 20 concurrent workers with HMAC
chain integrity preserved end-to-end. KYA detects 89\% of 1{,}200
adversarial probes from PyRIT and Garak, including the
recently-published topology-guided multi-agent attack. The system is
available under Apache~2.0 as the \texttt{veldt-kya} package on PyPI.
\end{abstract}

\ccsdesc[500]{Security and privacy}
\ccsdesc[500]{Computing methodologies~Artificial intelligence}

\keywords{agent governance, autonomous systems, trust, audit,
  federated learning, AI safety, EU AI Act}

\begin{document}

\maketitle

\section{Introduction}
\label{sec:intro}

Autonomous agents now decide loans, triage patients, adjudicate
benefits, and execute
trades~\cite{yehudai2025surveyagents,huang2025aagate}. The
observability platforms shipped alongside them ---
LangSmith~\cite{langsmith2024}, Phoenix~\cite{phoenix2024},
Langfuse~\cite{langfuse2023}, Braintrust~\cite{braintrust2024},
Weave~\cite{weave2024} --- answer when an agent is slow,
expensive, or generating errors; that is the observability
category and they execute on it. Governance is a distinct layer.
By mid-2026 several of these platforms have added audit-log
surfaces and EU AI Act compliance marketing, but the artifacts
they emit are \emph{operator-internal audit}: who-changed-what,
who-ran-which-eval, SOC~2 RBAC, platform-level access logs. None
produces \textbf{agent-identity-bound, cryptographically
verifiable, third-party-attestable} governance artifacts
(operator-verifiable in v1, with a notarized-attestation upgrade
path; \cref{sec:limitations}) --- because that is not what an
observability platform is for. KYA
addresses that distinct layer and composes with observability
rather than replacing it: \texttt{kya\_otlp\_bridge}
(\cref{sec:arch}) consumes OpenTelemetry spans from any of the
above as one input source, but KYA operates standalone without
them.

The gap is not in telemetry. It is in \textbf{agent identity,
evidentiary provenance, and enforceable behavioral contracts ---
a separate category that composes on top of telemetry rather than
competing with it.}

This paper presents \textbf{KYA (Know Your Agents)}, an open-source
trust and governance layer for autonomous
systems.\footnote{The phrase "Know Your Agent" was coined by
Chaffer (2025)~\cite{chaffer2025kya} for an agent-only Web3 identity
framework; we use \textbf{KYA (Know Your Agents)}, plural, for the
multi-principal trust layer described here and re-anchor the
agent-identity framing on a unified principal taxonomy (KYP) in
\cref{subsec:kyp}.} KYA targets the
agent-identity / evidentiary-provenance gap with five primitives
that, in combination, support the audit, attestation, and federated
policy distribution demands of modern AI governance frameworks
including the EU AI Act~\cite{EUAIAct2024}, NIST AI
RMF~\cite{NISTAIRMF}, ISO 42001~\cite{ISO42001}, and
SR-11/7~\cite{SR117}.

\subsection*{Contributions}

KYA is a \emph{systems contribution}: a code-verified governance
layer for autonomous systems whose novelty lies in the composition
discipline, the formal safety properties of that composition, and a
small set of engineering primitives that make the system deployable
in regulated multi-tenant environments. We do not claim new
cryptographic constructions or new mathematical primitives ---
foundational citations include Schneier and Kelsey
(1998)~\cite{schneierkelsey1998logs}, Bellare and Yee
(1997)~\cite{BellareYee1997}, Back and von Wright
(1998)~\cite{back1998refinement}, and Bell and LaPadula
(1976)~\cite{bell1976unified} --- but the \emph{composed system} has
no equivalent in the surveyed prior art~\cite{microsoft_agt_2026,trustpact2026,huang2025aagate,Aegis2026}.

We claim five contributions:

\begin{enumerate}
\item \textbf{A four-gate inbound apply pipeline} composing Ed25519
  signature verification, expiration check, only-tighten composition,
  and operator-approval-as-default (\cref{sec:fedrec}). Generalizes
  TUF/Uptane~\cite{samuel2010tuf,kuppusamy2016uptane} rollback
  prevention from a version-counter to a policy-strength lattice.

\item \textbf{An only-tighten composition algebra} for tenant-scoped
  policy overrides (\cref{sec:fedrec}). Tenant overrides can
  restrict, but never loosen, platform default weights; the property
  is structurally a refinement-calculus result (Back and von Wright,
  1998~\cite{back1998refinement}) applied to agent risk weights
  composed with cryptographically signed external recommendations.
  Lemma~\ref{lem:tighten} establishes soundness.

\item \textbf{KYP --- Know Your Principal} unified trust taxonomy
  (\cref{sec:rogue}). A schema-level unification (single table, one
  \texttt{principal\_kind} discriminator) of trust scoring across
  three principal kinds --- human user, AI agent, service account
  --- with shared signal mechanics. Closest commercial framing is
  Okta's 2026 NHI initiative~\cite{okta_nhi_unified_2026}; closest
  agent-governance system, Microsoft's Agent Governance
  Toolkit~\cite{microsoft_agt_2026}, scores agents only.

\item \textbf{Auditable interaction-multiplier amplification} over an
  AIVSS-shaped~\cite{owasp_aivss_2025} additive baseline
  (\cref{sec:static}). A bounded asymmetric (multipliers $\geq 1.0$
  only, product-capped at $\mathsf{MAX\_MULTIPLIER}$) registry of
  named pairwise interactions surfaced to the auditor with stable
  \texttt{code} strings. AIVSS, CVSS~\cite{first_cvss31_2019}, and
  FAIR~\cite{opengroup_fair_2021} are all purely additive or
  formula-driven without per-interaction audit codes.

\item \textbf{Two-axis delegation attribution}
  (\cref{sec:static}, \cref{sec:rogue}). On the static-score axis, an
  additive-with-cap delegation-trust premium with an \emph{observation
  gate} that prevents the cold-start false positives
  EigenTrust~\cite{kamvar2003eigentrust} suffers. On the dynamic-trust
  axis, an \texttt{actor\_agent\_key} convention at three hook-layer
  entry points: when orchestrator $A$ triggers sub-agents $B/C/D$ and
  any sub-agent fires a rogue signal, the signal is tagged with
  \texttt{actor\_agent\_key=$A$} so $A$'s principal-trust counter is
  debited at runtime. The static axis factors in \emph{risky}
  delegates; the dynamic axis registers \emph{actual delegate
  misbehavior}. Default \texttt{actor\_agent\_key=agent\_key} makes
  attribution work without per-customer wiring. Names the "clean
  orchestrator delegates to risky downstream" attack pattern (Liang et
  al.~\cite{liang2025donttrust}) as the contribution.
\end{enumerate}

\subsection*{Why now}

Three forces compound. GPAI obligations under the EU AI
Act~\cite{EUAIAct2024} have been in force since August 2025; risk
classification, post-market monitoring, and incident reporting
exceed what observability tools produce. The agent-governance
subfield emerged in twelve months
(AIVSS~\cite{owasp_aivss_2025},
AAGATE~\cite{huang2025aagate},
Aegis~\cite{Aegis2026},
SIGIL~\cite{shen2026sigil},
TrustPact~\cite{trustpact2026},
Microsoft AGT~\cite{microsoft_agt_2026}). Agentic systems have
moved from prototypes to production write paths in finance,
healthcare, and government. \textbf{Framework-agnosticism is a hard
requirement, not a convenience.}

\subsection*{Paper organization}

\Cref{sec:threat} states the threat model and design goals.
\Cref{sec:arch} describes the four-piece KYA stack and its deployment
topology. \Cref{sec:static}--\cref{sec:fedrec} present the five
contributions in detail. \Cref{sec:eval} reports empirical results
across four storage backends, concurrency loads, and adversarial
red-team campaigns. \Cref{sec:related} surveys related work,
\cref{sec:limitations} states limitations and open problems, and
\cref{sec:conclusion} concludes.

KYA ships under Apache~2.0 as \texttt{veldt-kya}~\cite{veldt-kya};
reproducibility artifacts accompany the release.

\begin{figure}[t]
\centering
\resizebox{\columnwidth}{!}{

\begin{tikzpicture}[
  font=\sffamily\small,
  box/.style={draw, rounded corners=2pt, minimum height=10mm, minimum width=34mm,
              align=center, inner sep=3pt},
  blue/.style={box, fill=blue!8, draw=blue!60},
  green/.style={box, fill=green!10, draw=green!60!black},
  title/.style={font=\sffamily\bfseries\small, align=center},
  arrow/.style={-{Stealth}, thick, gray!70},
]

\node[title] (lt) at (0,3.6) {Observability platforms\\(LangSmith, Phoenix, Langfuse,\\Braintrust, Weave --- 2026)};
\node[blue] (l1) at (0,2.4) {Latency, cost, traces};
\node[blue] (l2) at (0,1.4) {SOC~2 access logs};
\node[blue] (l3) at (0,0.4) {Platform audit-log surface};
\node[blue] (l4) at (0,-0.6) {EU AI Act marketing};

\node[title] (rt) at (6.5,3.6) {KYA: trust + governance layer\\(this work)};
\node[green] (r1) at (6.5,2.4) {Agent identity (KYP)};
\node[green] (r2) at (6.5,1.4) {Tamper-evident evidence chain};
\node[green] (r3) at (6.5,0.4) {Enforceable behavioral contracts};
\node[green] (r4) at (6.5,-0.6) {Federated signed recommendations};

\draw[arrow] (l4.east) -- node[above, font=\sffamily\itshape\footnotesize] {gap} (r4.west);
\draw[arrow] (l3.east) -- (r3.west);
\draw[arrow] (l2.east) -- (r2.west);
\draw[arrow] (l1.east) -- (r1.west);

\node[font=\sffamily\itshape\footnotesize, text width=4cm, align=center]
  at (0,-1.7) {Answers \emph{operational} questions: \emph{how long, how much, what path?}};
\node[font=\sffamily\itshape\footnotesize, text width=4.5cm, align=center]
  at (6.5,-1.7) {Answers \emph{accountability} questions: \emph{was it wrong, did it leak, who is responsible?}};

\end{tikzpicture}}
\caption{The agent-identity / evidentiary-provenance gap. By mid-2026
the observability layer markets audit logs; KYA addresses the
remaining gap: verifiable, third-party-attestable, agent-identity-bound
governance artifacts.}
\label{fig:hero}
\end{figure}

\providecommand{\cmark}{\checkmark}
\providecommand{\pmark}{$\circ$}

\section{Threat Model and Design Goals}
\label{sec:threat}

We model an autonomous agent as a stateful program $A$ with an
identifying definition $D_A$ (system prompt, model selection, tool
set, configuration), a runtime invocation context $C$, a set of
principals (users, sub-agents, services) participating in the
session, and an action output $a \in \mathsf{Actions}$. KYA reasons
about threats that manifest in $D_A$ (definitional integrity), $C$
(runtime behavior), the principals (identity / accountability), or
the boundary between $A$ and the outside world (data leakage,
cross-tenant access).

\subsection{Threat taxonomy}

We organize threats along two orthogonal axes (\cref{fig:threats}):
\textbf{static vs.\ dynamic} (whether the threat arises from the
configuration or runtime behavior) and \textbf{internal vs.\
external} (whether the threat originates from a privileged operator
or an external adversary). The taxonomy is informed by recent
literature on attack patterns in multi-agent LLM systems
\cite{liang2025donttrust,gabison2025liability,Aegis2026,huang2025aagate}.

\begin{description}
\item[T1 (static, internal): Definition drift.] Edit to $D_A$
  (swap model, add write tool, weaken prompt) after approval but
  without re-triggering governance review. Detected via canonical-hash
  mismatch over enumerated policy-bearing fields (\cref{sec:drift}).
\item[T2 (static, external): Supply chain.] Dependency replacement
  (typosquat, compromised framework, malicious marketplace update).
  Static scoring tags each dependency with publisher-trust class +
  supply-chain premium; signed inbound channel (\cref{sec:fedrec})
  provides Ed25519-verified updates.
\item[T3 (dynamic, internal): Capability drift.] Runtime tool call
  outside sanctioned set, or tenant\_id outside scope. Observed via
  out-of-scope + cross-tenant counters (\cref{sec:rogue}).
\item[T4 (dynamic, external): Prompt injection / jailbreak.]
  Adversarial instructions embedded in agent input. Observed via
  policy-violation counter, prompt-injection quality signal, red-team
  integration (\cref{sec:eval}).
\item[T5 (dynamic, internal): Data leakage.] $A$ emits data outside
  its sanctioned class (PHI / PII / trade secrets / ITAR). Observed
  via data-leak counter; default classifier covers civilian, defense,
  NATO, and EU schedules.
\item[T6 (any, post-hoc): Audit-log tampering.] Compromised operator
  modifies historical records. Detected via HMAC-chain verification
  with dialect-aware concurrency serialization
  (\cref{sec:evidence}).
\item[T7 (dynamic, external): Topology attack.] The
  ``clean orchestrator'' pattern: adversary compromises an exposed
  edge agent and propagates influence to a clean orchestrator that
  triggers no signals itself (40--78\% attack success rate against
  undefended systems~\cite{liang2025donttrust}). Detected via
  delegation-trust premium (static axis) + actor-agent attribution
  (dynamic axis).
\item[T8 (dynamic, internal): Attribution evasion.] Sub-agent rogue
  signals attributed to the puppet instead of the orchestrator,
  hiding the responsible upstream principal. Detected via the
  \texttt{actor\_agent\_key} convention (\cref{subsec:actoragent})
  defaulted at three SDK entry points.
\end{description}

\begin{figure}[t]
\centering
\resizebox{\columnwidth}{!}{

\begin{tikzpicture}[
  font=\sffamily\footnotesize,
  quad/.style={draw, thick, minimum width=42mm, minimum height=28mm,
               align=left, text width=40mm, inner sep=4pt},
  label/.style={font=\sffamily\bfseries\small},
  threat/.style={font=\sffamily\footnotesize},
]

\node[quad, fill=blue!5] (q1) at (0,0) {%
  \textbf{Static $\cdot$ Internal}\\[2pt]
  \textbf{T1} Definition drift\\
  (canonical hash, §\ref{sec:drift})};

\node[quad, fill=orange!5] (q2) at (5,0) {%
  \textbf{Static $\cdot$ External}\\[2pt]
  \textbf{T2} Supply-chain compromise\\
  (publisher trust, §\ref{sec:static})};

\node[quad, fill=blue!10] (q3) at (0,-3.2) {%
  \textbf{Dynamic $\cdot$ Internal}\\[2pt]
  \textbf{T3} Capability drift\\
  \textbf{T5} Data leakage\\
  \textbf{T8} Attribution evasion\\
  (rogue signals, §\ref{sec:rogue})};

\node[quad, fill=orange!10] (q4) at (5,-3.2) {%
  \textbf{Dynamic $\cdot$ External}\\[2pt]
  \textbf{T4} Prompt injection\\
  \textbf{T7} Topology propagation\\
  (red-team + delegation, §\ref{sec:eval})};

\node[quad, fill=red!8, minimum width=88mm, minimum height=12mm,
      text width=86mm, align=center, inner sep=4pt] (q5) at (2.5,-5.6) {%
  \textbf{Post-hoc audit} (any axis)\\
  \textbf{T6} Audit-log tampering --- HMAC chain verify (§\ref{sec:evidence})};

\node[label] at (-3.0, 0)    {Static};
\node[label] at (-3.0, -3.2) {Dynamic};
\node[label] at (0,  1.7)    {Internal};
\node[label] at (5,  1.7)    {External};

\draw[->, thick, gray] (-2.3, 1.4) -- (-2.3, -4.4);
\draw[->, thick, gray] (-2.0, 1.4) -- (7.0, 1.4);

\end{tikzpicture}}
\caption{KYA's threat taxonomy. Each quadrant labels which KYA
primitive detects the corresponding class of threat. T7--T8 are
informed by recent multi-agent attack literature.}
\label{fig:threats}
\end{figure}

\subsection{Assumptions}

We assume a trusted operator holding the per-(tenant, invocation)
HMAC key; a trusted central collector (the SDK vendor) holding the
Ed25519 inbound-signing key in KMS or HSM (compromised-collector
handling: \cref{sec:fedrec}); honest-but-curious storage backends
(observed but not tamper-proof); and standard cryptographic
assumptions --- HMAC-SHA256
unforgeable~\cite{BellareYee1997,schneierkelsey1998logs}, Ed25519
EUF-CMA secure~\cite{Bernstein2012}.

\subsection{Design goals}

KYA targets seven goals:
\textbf{G1.\ Framework-agnostic} --- one adapter per framework, not
a per-framework rewrite of the core; \texttt{format\_adapter.py}
ships 15+ native adapters (LangChain, CrewAI, OpenAI
Assistants/Agents, Claude Agent SDK, AutoGen, Semantic Kernel,
LlamaIndex, Haystack, Bedrock, Vertex, MCP, and others).
\textbf{G2.\ Verifiable evidence} --- Schneier-Kelsey-class HMAC
chain~\cite{schneierkelsey1998logs} with the deployment-shape
extensions of \cref{sec:evidence}.
\textbf{G3.\ Multi-tenant safe by default} --- Cedar-style
forbid-dominance~\cite{cutler2024cedar} extended to a three-channel
hierarchical authority (\cref{sec:fedrec}).
\textbf{G4.\ Operationally light} --- \texttt{pip}-installable on
Python 3.10+, sub-millisecond scoring on the hot path, only
\texttt{SQLAlchemy} as a hard core dependency.
\textbf{G5.\ Regulator-legible artifacts} --- direct mapping to
EU AI Act risk tiers, NIST AI RMF functions, HIPAA / NYDFS / FDA
PCCP~\cite{fda2024pccp} formats.
\textbf{G6.\ Federation without data hoarding} --- only aggregated
weight recommendations cross organizational boundaries, only after
operator review.
\textbf{G7.\ Closed-set extensibility} --- every trust-layer
dimension enumerated; new kinds require SDK release, preventing
caller-supplied keyspace expansion.

\begin{table}[t]
\centering
\caption{Threat coverage comparison.
Columns: DEv = DeepEval, TEv = TrustEval, TPa = TrustPact, MAG =
Microsoft Agent Governance Toolkit~\cite{microsoft_agt_2026}, AUR =
AURA~\cite{satta2025aura}. \cmark: full coverage; \pmark: partial;
--: none. DeepEval and TrustEval are LLM-evaluation libraries and
AURA is an academic risk-scoring framework; their \texttt{--} cells
on threats outside evaluation (T1, T6, T7, T8) reflect category
scope --- they were not designed to cover audit-tampering or
topology-attack defense --- rather than deficiency. Their outputs
compose into KYA's evidence chain as \texttt{quality} signals.
Peer governance-layer comparison concentrates on MAG and TPa;
fuller per-system treatment is in \cref{sec:related}.}
\label{tab:coverage}
\small
\begin{tabular}{lcccccc}
\toprule
\textbf{Threat} & DEv & TEv & TPa & MAG & AUR & \textbf{KYA} \\
\midrule
T1 Drift              & --   & --   & --   & \pmark & --   & \cmark \\
T2 Supply-chain       & --   & \pmark & --   & \pmark & \pmark & \cmark \\
T3 Capability drift   & \pmark & \pmark & \cmark & \cmark & \pmark & \cmark \\
T4 Prompt injection   & \pmark & \cmark & \cmark & \cmark & \pmark & \cmark \\
T5 Data leakage       & --   & \cmark & \pmark & \cmark & --   & \cmark \\
T6 Audit tampering    & --   & --   & --   & \pmark & --   & \cmark \\
T7 Topology attack    & --   & --   & --   & \pmark\textsuperscript{*} & --   & \cmark \\
T8 Attribution evasion& --   & --   & --   & --   & --   & \cmark \\
Federation            & --   & --   & --   & \pmark\textsuperscript{*} & --   & \cmark \\
\bottomrule
\end{tabular}
\\\textsuperscript{*}MS-AGT discusses cross-org federation as future
work~\cite{microsoft_agt_issue1386}; topology defense via inverse
direction (parent caps child) rather than upward attribution.
\end{table}

\section{System Architecture}
\label{sec:arch}

KYA is delivered as a coordinated four-piece stack
(\cref{fig:fourpiece}). The core SDK (\texttt{kya}, distributed on
PyPI as \texttt{veldt-kya}~\cite{veldt-kya}) is a pure Python
library with one hard
dependency (SQLAlchemy~\cite{sqlalchemy}) and optional extras for
metrics, tracing, webhooks, and an LLM judge. Three sibling
components extend the core: \texttt{kya\_hooks} provides in-process
integrations for popular agent frameworks;
\texttt{kya\_otlp\_bridge} ingests
\mbox{OpenTelemetry}~\cite{opentelemetry} spans and translates them
into KYA signals; and \texttt{kya\_redteam} runs adversarial
probing campaigns via PyRIT~\cite{pyrit} and Garak~\cite{garak}. Only \texttt{kya} is
required; the other three are opt-in. The total artifact spans 50+
Python modules with cross-backend portability across PostgreSQL,
SQLite, DuckDB, and MySQL.

\begin{figure}[t]
\centering
\resizebox{\columnwidth}{!}{

\begin{tikzpicture}[
  font=\sffamily\small,
  comp/.style={draw, rounded corners=2pt, minimum width=28mm, minimum height=14mm,
               align=center, inner sep=4pt, fill=blue!8, font=\sffamily\small},
  core/.style={comp, fill=green!12, draw=green!60!black, very thick},
  label/.style={font=\sffamily\footnotesize\itshape, align=center},
  arrow/.style={-{Stealth}, thick, gray!70},
]

\node[core] (kya) at (0,0)
  {\textbf{kya} (core)\\
   \scriptsize PyPI: \texttt{veldt-kya}\\
   \scriptsize 50+ modules};

\node[comp, above left=10mm and 4mm of kya] (hooks)
  {\textbf{kya\_hooks}\\
   \scriptsize Claude / LangChain /\\\scriptsize OpenAI Agents};

\node[comp, above right=10mm and 4mm of kya] (bridge)
  {\textbf{kya\_otlp\_bridge}\\
   \scriptsize sidecar container\\\scriptsize OTLP $\to$ KYA};

\node[comp, below=10mm of kya] (red)
  {\textbf{kya\_redteam}\\
   \scriptsize PyRIT + Garak\\\scriptsize adversarial campaigns};

\draw[arrow] (hooks.south) -- (kya.north west);
\draw[arrow] (bridge.south) -- (kya.north east);
\draw[arrow] (red.north) -- (kya.south);

\node[label, below=2mm of red] {out-of-band};

\node[draw, dashed, gray, fit=(hooks)(kya), inner sep=8pt, rounded corners] {};
\node[font=\sffamily\footnotesize\itshape, above=1pt of hooks, gray] {in-process};

\end{tikzpicture}}
\caption{The four-piece KYA stack. \texttt{kya} runs in the agent
process; \texttt{kya\_hooks} attaches to LangChain / CrewAI /
OpenAI Assistants / Claude Agent SDK / OpenAI Agents loops;
\texttt{kya\_otlp\_bridge} runs as a sidecar that consumes
\mbox{OpenTelemetry} spans;
\texttt{kya\_redteam} runs out-of-band as an adversarial fleet.}
\label{fig:fourpiece}
\end{figure}

\subsection{Component overview}

\paragraph{\texttt{kya} (core SDK).}
Trust scoring, evidence chain, versioning, drift detection,
tenant-weight overrides, inbound signing, principal trust (KYP,
\cref{subsec:kyp}), and the adapter registry all live in the core.
Modules are organized by concern (static scoring, dynamic signals
with actor-agent attribution, HMAC evidence chain, canonical-hash
drift, snapshots, KYP, only-tighten overrides, bounded multipliers,
closed-loop feedback, four-gate inbound apply, pluggable adapters);
file-line citations in \cref{appx:disciplines}. The core has no
imports from any agent framework and depends only on the Python
standard library, SQLAlchemy, and optional extras.

\paragraph{\texttt{kya\_hooks} (framework integrations).}
A thin per-framework shim that registers callbacks into the agent's
invocation loop and emits canonical events into the core. Native
support ships for Anthropic's Claude Agent SDK, LangChain (callback
handler), and OpenAI's Agents SDK. The
\texttt{actor\_agent\_key=agent\_key} default convention
(\cref{subsec:actoragent}) is hard-wired at all three entry points
so that orchestrator attribution works without per-customer wiring.

\paragraph{\texttt{kya\_otlp\_bridge} (telemetry ingestion).}
A standalone container that consumes \mbox{OpenTelemetry} OTLP spans
(via gRPC or HTTP) and maps them into KYA signals. The bridge
extends OTel ingestion in four notable ways: (1) a span-kind
$\to$ evidence-semantics taxonomy across six OpenInference kinds
(LLM, TOOL, AGENT, RETRIEVER, GUARDRAIL, EVALUATOR); (2) cross-batch
trace stitching via a TTL-capped $\mathit{trace\_id} \to
\mathit{invocation\_id}$ cache so child spans arriving in separate
batches attach to the same invocation; (3) principal-identity
normalization (\texttt{normalize\_agent\_key}) preventing the
dual-principal split where customer-registered agents and OTel-runtime
emissions become two principals; (4) capture-time sensitivity
hinting where data-classification tags on spans drive
retention-policy enforcement at the moment evidence hits storage.

\paragraph{\texttt{kya\_redteam} (adversarial probing).}
An out-of-band campaign runner that drives PyRIT orchestrators and
Garak probes against a target agent deployment. Findings are
written back through the same \texttt{record\_evidence()} path as
production events, so red-team discoveries land in the same HMAC
chain as runtime observations. The module includes a curated set
of native probes (DAN persona break, base64/character-split
encoding-evasion, prompt extraction, Goodside override, markdown
injection) that run without an external Garak install.

\paragraph{Deployment topology.}
Raw prompt/completion payloads remain in the customer process; only
aggregated telemetry crosses the organizational boundary, and only
signed policy recommendations come back
(\cref{appx:topology-figure}).

\subsection{Three-layer runtime gates}
\label{sec:gates}

KYA composes with three orthogonal enforcement layers, each
evaluated at every agent action (\cref{fig:gates}):

\paragraph{Layer~1 --- Authentication and RBAC.}
Standard HTTP-request-time authorization: the caller's identity and
roles are resolved from a JWT; admin endpoints require explicit
privilege checks. KYA reads these decisions but does not own them.

\paragraph{Layer~2 --- Action gate.}
When a rule fires or an agent dispatches an action, the gate
evaluates governance policies against the action payload. Verdicts
are \emph{allow}, \emph{block}, \emph{redact}, \emph{throttle}, or
\emph{flag\_for\_review}; each verdict is logged to the evidence
chain. The gate consults tenant-scoped weights (\cref{sec:fedrec})
and class-tagged data classifications to make its verdict.

\paragraph{Layer~3 --- Tool RBAC.}
At every tool call inside an agent's invocation loop, KYA verifies
that the calling user's roles intersect the tool's required-role
set. The default catalog ships with the SDK; tenants may tighten
(but not loosen) it via override under the only-tighten algebra of
\cref{sec:fedrec}.

\begin{figure}[t]
\centering
\resizebox{\columnwidth}{!}{

\begin{tikzpicture}[
  font=\sffamily\small,
  req/.style={draw, circle, minimum size=10mm, fill=blue!10, inner sep=1pt},
  gate/.style={draw, rounded corners=3pt, minimum width=40mm, minimum height=12mm,
               align=center, inner sep=4pt, fill=red!8, draw=red!60},
  action/.style={draw, rounded corners=2pt, minimum width=22mm, minimum height=10mm,
                 align=center, inner sep=3pt, fill=green!12, draw=green!60!black},
  arrow/.style={-{Stealth}, thick, gray!70},
  caption/.style={font=\sffamily\footnotesize\itshape},
]

\node[req] (req) at (0,0) {\scriptsize\textbf{req}};

\node[gate, right=8mm of req] (g1) {%
  \textbf{Layer 1: Auth / RBAC}\\
  \scriptsize JWT $\to$ roles $\to$ endpoint check};

\node[gate, right=6mm of g1] (g2) {%
  \textbf{Layer 2: Action gate}\\
  \scriptsize rule-fire-time governance verdict};

\node[gate, below=4mm of g2] (g3) {%
  \textbf{Layer 3: Tool RBAC}\\
  \scriptsize loop-time per-tool role check};

\node[action, right=10mm of g3] (act) {action\\dispatched};

\draw[arrow] (req) -- (g1);
\draw[arrow] (g1) -- (g2);
\draw[arrow] (g2.south) -- (g3.north);
\draw[arrow] (g3) -- (act);

\node[draw, dashed, gray, rounded corners, font=\sffamily\scriptsize, align=left,
      text width=42mm] at (3.5,-3.3)
  {\textbf{L2 verdicts:}\\
   allow $|$ block $|$ redact\\
   throttle $|$ flag\_for\_review};
\draw[->, dashed, gray] (g2.south west) -- (3.5,-2.7);

\node[font=\sffamily\footnotesize\bfseries, text width=88mm, align=center,
      fill=yellow!10, draw=yellow!50!black, rounded corners, inner sep=4pt]
  at (4.5,-5.4) {Effective permission $= L_1 \cap L_2 \cap L_3$};

\end{tikzpicture}}
\caption{Three runtime enforcement layers. The effective permission
for any agent-driven action is the intersection
$L_1 \cap L_2 \cap L_3$.}
\label{fig:gates}
\end{figure}

\subsection{Persistence and cross-backend portability}

KYA's storage primitives run unchanged against PostgreSQL, SQLite,
DuckDB, and MySQL: dialect-aware variant types
(\btt{\_portable.py}), schema retargeting via SQLAlchemy's
\btt{schema\_translate\_map}, and an upsert dispatcher
(\btt{\_dialect\_helpers.py}) that emits the right \texttt{ON
CONFLICT} grammar per backend. Runtime concurrency uses the
dialect-aware serialization primitive of
\cref{subsec:concurrency}. A fresh \btt{pip install veldt-kya} is
usable end-to-end without provisioning a database server via
\btt{kya.default\_session()}'s SQLite fallback. The full stack is
validated across the $4 \times 9$ phase matrix and the $17
\times 4$ per-table matrix of \cref{sec:eval}.

\subsection{Cross-cutting design disciplines}
\label{sec:disciplines}

Five design disciplines recur across the 50+ SDK modules and are
load-bearing for KYA's claims: \textbf{closed-set whitelists with
explicit "unknown" buckets} (signal kinds, principal kinds,
evidence kinds, inbound scopes, data classes, compliance regimes
--- every scoring/persistence dimension is a closed enumeration;
caller-supplied strings cannot expand the keyspace);
\textbf{bounded composition everywhere} (per-factor caps,
$\mathsf{MAX\_MULTIPLIER}{=}2.0$, delegation-trust cap $25$, final
score clamp to $100$); \textbf{asymmetric composition rules} (data
sensitivity via \textbf{MAX}, security capabilities via
\textbf{SUM}, interaction multipliers $\geq 1.0$ only ---
\texttt{register\_interaction} raises on $<1.0$);
\textbf{explicit "never X" runtime invariants} (never auto-tune,
never apply-anyway on signature failure, never bypass human gate
on critical verdicts, never break scoring on hot-path exceptions,
never leak across tenants); and \textbf{fail-soft observability}
(Valkey / Phoenix / collector / \texttt{prometheus\_client}
unreachable $\Rightarrow$ core scoring and persistence continue).
Each is a hard runtime check, not a convention; \cref{appx:disciplines}
enumerates the source-line citations.

\subsection{Running example: a loan-decisioning agent fleet}
\label{subsec:running-example}

We anchor the rest of the paper in a single scenario revisited across
\cref{sec:static,sec:evidence,sec:fedrec,sec:eval}. A mid-size US
regional bank operates a four-agent loan-decisioning fleet under
NYDFS Part~500, ECOA, and CFPB UDAAP oversight: a \textbf{Loan Triage
Agent} (orchestrator) coordinates a \textbf{Document Verification
Sub-Agent} (OCR + identity-matching against a third-party KYC
provider), an \textbf{OFAC Screening Sub-Agent} (sanctions list
lookup), and a \textbf{Risk Review Agent} (credit-policy reasoner
with write authority to flag applications for human underwriter
review). Principals span all three KYP kinds: the underwriter
(\texttt{user}), the four agents (\texttt{agent}), and a nightly
batch reconciliation job (\texttt{service\_account}). Data classes
in scope are PII (\texttt{ssn}, \texttt{dob}, address), financial
(\texttt{account\_number}, income, debt-to-income), and KYC
documents (driver's license, utility bill). The fleet's adversary
surface includes prompt injection in uploaded
documents~\cite{liang2025donttrust} (T4), out-of-scope tool calls
by the Risk Review Agent (T3), data leakage of SSN in summary
outputs (T5), and topology-guided propagation from a compromised
Document Verification Sub-Agent upward into a clean Triage
orchestrator (T7). \Cref{sec:static} scores the four agents;
\cref{sec:evidence} traces a single application through the
HMAC-chained evidence log; \cref{sec:fedrec} demonstrates a
cross-bank federated recommendation tightening the OFAC tool
multiplier after a sector-wide incident; \cref{sec:eval} reports
the topology-attack detection result against exactly this fleet
topology.

\subsection{Second domain: a clinical triage agent}
\label{subsec:second-domain}

The framework-agnostic claim (G1, \cref{sec:threat}) requires the
primitives to land cleanly on domains that share no regulator,
threat surface, or data taxonomy with loan decisioning. We anchor
this with a parallel scenario revisited briefly in
\cref{sec:static,sec:evidence}. A regional hospital network operates
a clinical triage fleet under HIPAA, 21~CFR Part~11, and
FDA~SaMD~oversight: a \textbf{Triage Orchestrator} (intake-symptom
parser, no write authority of its own) coordinates a
\textbf{Diagnosis Suggestion Sub-Agent} (LLM-driven differential
shortlist, suggest-only), an \textbf{HL7 Lookup Sub-Agent} (read-only
EHR record fetch), and a \textbf{Medication-Interaction Checker}
(rules-engine wrapper, no write authority). Principals span the same
three KYP kinds as the loan fleet but in different proportions: the
attending nurse and supervising physician (\texttt{user}), four
agents (\texttt{agent}), and an overnight bulk-ingest job pulling
HL7 deltas (\texttt{service\_account}). Data classes are different:
PHI (medical record number, diagnosis codes, lab results), ePHI
(real-time vitals from monitors), and \texttt{phi\_genetic}. The
threat surface shifts accordingly: T4 prompt injection in
patient-history free-text; T5 disclosure of diagnosis to an
unauthorized requester; T7 topology propagation from a compromised
HL7 sub-agent up into the orchestrator (the same primitive
defends, but the upstream attack vector is the EHR API rather than
an OCR pipeline). What makes this fleet a useful test of
framework-agnosticism is that nothing about the KYA primitives ---
not the scoring factors, not the only-tighten algebra, not the
KYP principal-trust schema, not the actor-agent attribution
convention --- requires per-domain customization. The compliance
regime mapper (\texttt{kya.compliance}) returns
HIPAA-shaped retention windows (6~years) and breach-notification
SLAs (60~days) in place of the bank's ECOA / NYDFS configuration,
and the data-class taxonomy reports MAX-weighted sensitivity
according to the same monotone schedule.
\section{Static Risk Scoring}
\label{sec:static}

KYA's static risk model is structurally an AIVSS-shaped weighted
scorer (OWASP AIVSS v0.5~\cite{owasp_aivss_2025}): a per-factor
decomposition followed by an aggregation step. We do not claim novel
scoring mathematics. The contribution at this layer is twofold: (1)
an \textbf{auditable interaction-multiplier amplification} step over
the AIVSS-shaped additive baseline (\cref{subsec:interactions}), and
(2) a set of \textbf{design disciplines} (closed-set whitelists,
asymmetric MAX-vs-SUM composition rules, bounded composition,
explicit-over-inferred precedence, fail-soft scoring) that make the
factor model auditable end-to-end. The scoring pipeline composes
with the only-tighten algebra (\cref{sec:fedrec}) and the
HMAC-chained evidence log (\cref{sec:evidence}) so every score and
its per-factor decomposition is recoverable from the audit chain.

\subsection{Factor decomposition}

The static score $S(D_A)$ is the sum of deltas from independent
factor functions (\cref{tab:factors}). Each returns a structured
\texttt{RiskFactor(name, label, delta)} record preserved in
\texttt{AgentRiskScore} and persisted into the evidence chain; an
auditor querying any score can recover its full per-factor
attribution. \Cref{appx:factor-detail} gives the per-bucket weight
schedules and edge cases.

\begin{table}[t]
\centering
\caption{Static-score factors. MAX = take the highest applicable;
SUM = sum applicable values; cap = listed ceiling.}
\label{tab:factors}
\scriptsize
\begin{tabular}{llr}
\toprule
\textbf{Factor} & \textbf{Aggregation / weight} & \textbf{Cap} \\
\midrule
Base score                    & constant $+5$                   & --- \\
Write-tool count              & $+4$ per write tool             & --- \\
Admin-gated tools             & $+8$ per admin-gated tool       & --- \\
\textbf{Governance mode}      & none=$+30$, on-loop=$+15$,      & ---  \\
                              & hybrid=$+10$, in-loop=$0$       &      \\
\texttt{can\_override} flag   & $+12$ if True                   & --- \\
\texttt{can\_revert} flag     & $+8$ if True                    & --- \\
\texttt{access\_level=write}  & $+6$                            & --- \\
Data sensitivity              & MAX (public to top-secret)      & 60 \\
Security capabilities         & SUM (orthogonal powers)         & 60 \\
Provenance                    & MAX (builtin $\to$ third-party) & 20 \\
Model trust                   & MAX (enterprise $\to$ self-host)& 10 \\
Blast radius                  & SUM composite                   & 30 \\
Deployment environment        & MAX (dev to enclave)            & 25 \\
Delegation depth              & MAX chain length                & 25 \\
Supply chain                  & SUM + breadth premium           & 35 \\
Input sources                 & SUM + breadth premium           & 25 \\
Lifecycle                     & SUM (signed, approval, age)     & --- \\
Compliance scope              & MAX regulatory severity         & --- \\
Trust signals                 & SUM (neg.\ for evidence)        & --- \\
Cost burn                     & SUM operational anomaly         & --- \\
Delegation-trust prem.        & SUM observation-gated           & 25 \\
\bottomrule
\end{tabular}
\end{table}

\paragraph{Why MAX vs SUM matters.} The asymmetry is load-bearing:
data sensitivity aggregates via MAX because handling PHI is what
makes an agent high-stakes (touching internal + confidential + PHI
does not triple-count); security capabilities aggregate via SUM
because having both \texttt{code\_execution} and \texttt{shell\_access}
is materially worse than having one. ITAR is weighted above civilian
\texttt{secret} because ITAR violations carry felony export-control
consequences. \texttt{unknown} input source is weighted \emph{higher}
than declared \texttt{external\_api} because declaration is
information; absence is suspicious.

Most weighted scoring systems use a single composition rule
throughout (CVSS's nested formula, FAIR's Monte Carlo, AIVSS's
single threat multiplier); KYA's choice-per-dimension with
documented rationale (\cref{tab:factors}) is a design discipline
worth naming.

\subsection{The interaction-multiplier amplification step}
\label{subsec:interactions}

The additive sum captures the bulk of the signal but misses
compounding effects. A write tool that touches PII in an autonomous
deployment is materially more dangerous than the linear sum of those
three factors suggests. KYA defines an explicit registry of
\emph{interactions} --- predicates over $D_A$ and the factor list
that trigger a multiplicative bonus. This is the strongest claim of
novelty in the scoring layer.

\paragraph{Closed-form composition.}
Let $S_{add}$ be the additive score and let
$\mathcal{I}(D_A)$ be the set of interactions that fire for $D_A$.
The final additive score is

$$S(D_A) = \min\Big(100, \; S_{add} \cdot \min\big(\mathsf{MAX\_MULTIPLIER}, \prod_{i \in \mathcal{I}(D_A)} m_i\big)\Big)$$

with $\mathsf{MAX\_MULTIPLIER} = 2.0$ (capped product) and
$m_i \geq 1.0$ for every registered interaction (asymmetric: amplify,
never reduce).

\paragraph{Asymmetry enforced at runtime.} The
\btt{register\_interaction(code, name, condition, multiplier,
description, severity)} function raises \texttt{ValueError} if
$\mathit{multiplier} < 1.0$ (\btt{interactions.py:268}). The rule
is not a convention but a runtime invariant: "Use credit factors
(citation, audits) for downward deltas."

\paragraph{Auditable per-interaction codes.} Each interaction carries
a stable string \texttt{code} surfaced on the scoring result so an
auditor or UI can render "why was this score amplified." The 10
pre-registered interactions include:

\begin{itemize}
\item \texttt{autonomous\_writer\_in\_prod} (1.3$\times$, critical)
\item \texttt{code\_exec\_with\_user\_input} (1.5$\times$, critical):
  "Classic RCE-via-prompt-injection setup. Treat as if attacker
  controls code execution on the host."
\item \texttt{classified\_autonomous} (1.4$\times$, critical): "EU AI
  Act Art.~14 and US classification regimes both require effective
  human oversight --- this combo violates both spirit and letter."
\item \texttt{untrusted\_chain} (1.2$\times$, warning): marketplace
  agent fanning out to $\geq 3$ delegates.
\item \texttt{unaudited\_classified} (1.25$\times$, critical):
  "Classified data being handled but no red-team / bias / fairness
  audit evidence on file."
\item \texttt{rejected\_in\_prod} (1.4$\times$, critical): an agent
  with security review status \texttt{rejected} but running in
  production.
\item \texttt{orphan\_writer\_in\_prod} (1.2$\times$, warning).
\item \texttt{prod\_marketplace\_writer} (1.2$\times$, warning).
\item \texttt{self\_hosted\_with\_pii} (1.2$\times$, warning).
\item \texttt{unowned\_high\_risk} (1.15$\times$, warning).
\end{itemize}

The registry is extensible: tenants register custom interactions via
\texttt{register\_interaction()}, which appear in subsequent
decompositions under their registered code. CVSS, AIVSS, and FAIR
have no equivalent surface; AIVSS's "single threat multiplier
$\mathit{ThM}$" applies once after a linear AARF sum and is not
per-pair.

\paragraph{Ablation: do multipliers change operator-visible buckets?}
We measure whether the multiplier step crosses bucket boundaries
relative to the additive-only baseline. The harness
(\btt{examples/interaction\_multiplier\_ablation.py}) scores six
agent definitions twice --- once with \btt{disable\_interactions=True}
(additive only) and once with multipliers enabled --- and reports
the buckets either side. Results in \cref{tab:ablation}: two of six
cross a boundary attributable specifically to the multiplier step
(rows 2 and 4: \textbf{high}$\to$\textbf{critical} via
\texttt{code\_exec\_with\_user\_input} at 1.5$\times$, and
\textbf{medium}$\to$\textbf{high} via \texttt{untrusted\_chain} at
1.2$\times$). Two saturate at the additive ceiling already
(governance\,=\,\texttt{none} alone contributes $+30$); for those,
the multiplier records a code on the audit trail but does not
change the bucket. The benign control fires no multiplier ---
no false positives in this scenario set.

\begin{table}[h]
\centering
\caption{Interaction-multiplier ablation. Each row scored twice:
additive-only baseline, then with multipliers. Two of six cases
cross a bucket attributable to the multiplier step. ``Add'' is the
additive score; ``Final'' is after multiplier $\times$ capping at
100. Single workstation; deterministic across runs.}
\label{tab:ablation}
\scriptsize
\begin{tabular}{lrlrlr}
\toprule
\textbf{Scenario}                          & \textbf{Add} & \textbf{AddBkt} & \textbf{Final} & \textbf{FinBkt} & \textbf{Mult} \\
\midrule
\texttt{autonomous\_writer\_in\_prod}      & 86  & critical & 100 & critical & 1.30 \\
\texttt{code\_exec\_user\_input\_hil}      & 63  & high     & 94  & critical & 1.50 \\
\texttt{self\_hosted\_pii\_hil}            & 64  & high     & 77  & high     & 1.20 \\
\texttt{untrusted\_chain\_hil}             & 57  & medium   & 68  & high     & 1.20 \\
\texttt{replit\_style\_autonomous}         & 100 & critical & 100 & critical & 1.95 \\
\texttt{benign\_readonly\_hil}             & 34  & medium   & 34  & medium   & 1.00 \\
\bottomrule
\end{tabular}
\end{table}

The honest takeaway: multipliers visibly move buckets on
\emph{moderate-risk} agents with human-in-the-loop discipline but a
single concentrated risk concentration (untrusted chain, code-exec
with user input, etc.). For already-critical configurations they
add an audit code but do not change the operator-visible bucket
because clamping at 100 dominates. The ablation justifies the
multiplier step's presence on the moderate-risk middle of the
distribution --- exactly where bucket boundaries matter for
review-queue triage.

\subsection{Score buckets}

The headline score is bucketed for human consumption:

\begin{itemize}
\item $[0, 29]$: \textbf{low} --- safe to run unattended in sandboxed contexts
\item $[30, 59]$: \textbf{medium} --- routine production with standard monitoring
\item $[60, 84]$: \textbf{high} --- requires elevated review; surfaces in operator queues
\item $[85, 100]$: \textbf{critical} --- requires explicit governance approval; auto-gated by the action gate in default policy
\end{itemize}

Bucket thresholds are tenant-configurable through the only-tighten
algebra (\cref{sec:fedrec}); tenants can lower the "critical"
threshold (treating more agents as critical), never raise it.

\subsection{Worked example: scoring the loan-decisioning fleet}
\label{subsec:scoring-example}

Applying the factor model to the four-agent fleet introduced in
\cref{subsec:running-example}: the \textbf{Risk Review Agent} scores
\textbf{77 (high)}: write tool (\btt{flag\_for\_review}, +6),
admin-gated tool (\btt{override\_decision}, +12), data sensitivity
\texttt{financial} (MAX, +20), governance mode \texttt{none}
(autonomous, +30, the largest single factor in the additive
model), decision-influence delegate, and the
\btt{admin\_tool\_AND\_financial\_data} interaction multiplier
fires at $\times 1.25$. The \textbf{Loan Triage Agent} scores
\textbf{61 (high)}: no admin tools of its own, but the
delegation-trust premium adds 8 for delegating to a high-bucket
Risk Review Agent. The \textbf{Document Verification Sub-Agent}
scores \textbf{42 (medium)}: read-only KYC lookups, no write
authority, but \texttt{third\_party} provenance (+20) and
\texttt{pii} data sensitivity (+15) keep it above \texttt{low}.
The \textbf{OFAC Screening Sub-Agent} scores \textbf{34 (medium)}:
read-only, \texttt{enterprise} model, but \btt{us\_classified}
data class (sanctions list, +25) and
\btt{regulator-mandated-tool} status. All four scores include the
auditable per-factor breakdown returned by \texttt{score\_agent()};
no factor exceeds its cap and no interaction multiplier fires
silently.

\paragraph{Second domain: clinical triage fleet} (\cref{appx:clinical-scoring}).
The parallel clinical fleet from \cref{subsec:second-domain} scores
on the identical pipeline: Diagnosis Suggestion 84/critical,
HL7 Lookup 56/medium, Triage Orchestrator 70/high (the Triage
Orchestrator's bucket comes from the same delegation-trust premium
that lifts the bank's Loan Triage). Domain changes; pipeline does
not.

\subsection{Calibration and limitations}

The default factor weights are derived from a synthesis of public
governance literature~\cite{NISTAIRMF,EUAIAct2024,ISO42001,owasp_aivss_2025},
internal red-team campaigns (\cref{sec:eval}), and expert judgment.
We do not claim the defaults are optimal. What the framework
guarantees is that defaults are \emph{auditable} (every factor and
weight in the Apache~2.0 SDK release) and
\emph{adjustable} (tenants tighten via override; signed
recommendations propose cross-fleet tightenings). Empirical
calibration from collector telemetry is the most consequential piece
of future work (\cref{sec:limitations}).

\begin{figure}[t]
\centering
\includegraphics[width=\columnwidth]{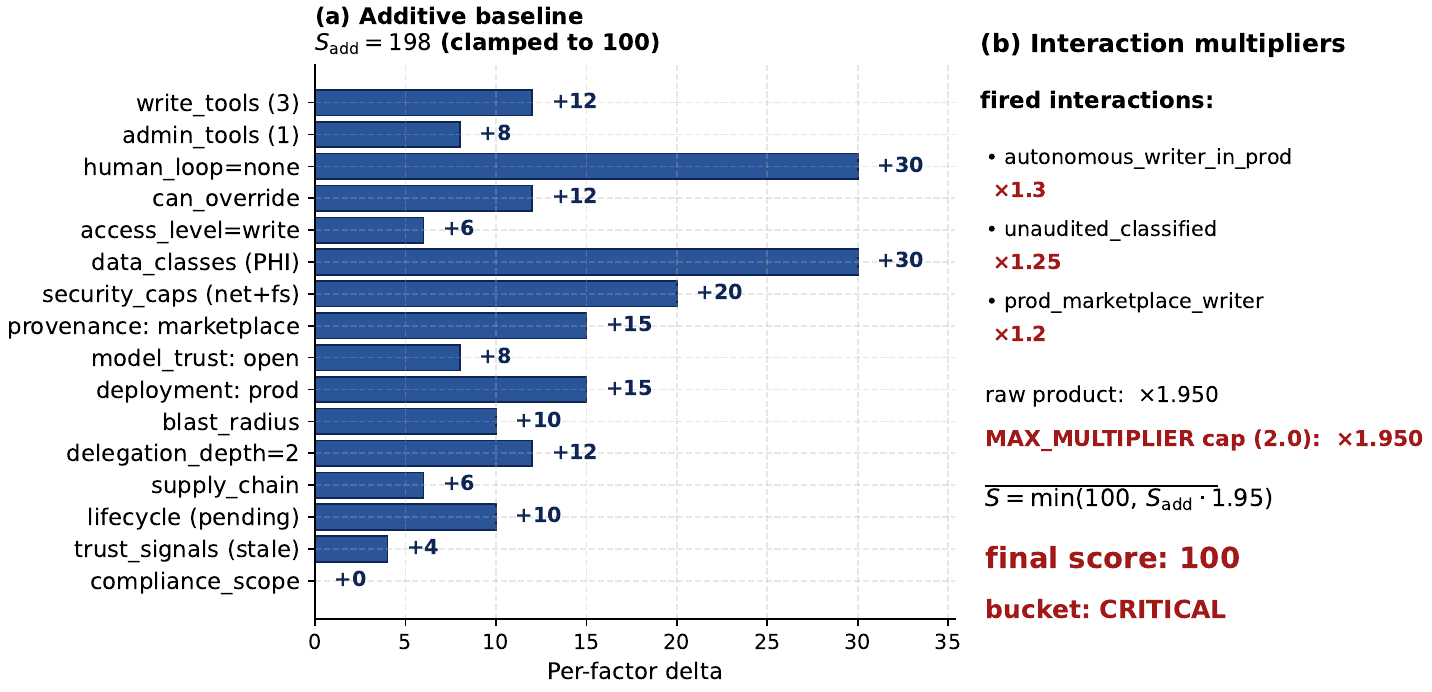}
\caption{Static score decomposition for a sample agent. The additive
sum is on the left; the multiplicative amplification step with fired
interaction codes is on the right.}
\label{fig:decomp}
\end{figure}

\section{Dynamic Rogue Signals}
\label{sec:rogue}

Static scoring (\cref{sec:static}) measures what an agent
\emph{could} do given its definition. It does not measure what an
agent has \emph{tried} to do at runtime that it should not have.
Two agents with identical static scores can have radically different
risk profiles in practice: one sits idle, the other has attempted to
call a privileged tool fifty times in the past hour. KYA bridges this
gap with a small set of runtime signals that aggregate observed
misbehavior into a dynamic score delta.

\subsection{Signal sources}

KYA observes six classes of runtime misbehavior, each instrumented
at the point of detection in the agent loop, counted via Prometheus,
recorded to the evidence chain, and surfaced through the
rogue-detection API:
\textbf{tool RBAC refusals} --- Layer~3 blocks a tool call whose
required roles the caller doesn't have (counter:
\btt{veldt\_tool\_rbac\_refusals\_total});
\textbf{out-of-scope tool attempts (OOS)} --- agent invokes a tool
not in its sanctioned \texttt{tools} list (the strongest single
signal; intent to act outside the configured capability surface;
counter: \btt{veldt\_agent\_oos\_tool\_attempts\_total});
\textbf{governance blocks} --- action gate vetoes an action on
content-safety, PII, or policy grounds (counter:
\btt{veldt\_governance\_action\_gate\_total} with verdict=block);
\textbf{cross-tenant attempts} --- tool call carrying a
$\mathit{tenant\_id}$ outside the agent's scope, typically from
prompt injection or a compromised upstream caller (counter:
\btt{veldt\_agent\_cross\_tenant\_attempts\_total});
\textbf{data leakage} --- output containing data of an
unsanctioned class, caught by the action gate's class-tagged
output classifier (counter:
\btt{veldt\_agent\_data\_leak\_total});
\textbf{policy violations} --- jailbreak, harmful output, refusal
failure, or successful prompt injection (counter:
\btt{veldt\_agent\_policy\_violations\_total}).
Signal flow is one-directional: gates emit,
the rogue module consumes, but gates do not read rogue state
for real-time verdicts (preventing the bias feedback loop where
prior misbehavior gates legitimate later actions). Per-source
instrumentation points are in \cref{appx:rogue-detail}.

\subsection{Aggregation and bucketing}

The rogue score \texttt{rogue\_score(report)} is a bounded integer
in $[0, 50]$ that adds to the static score. The score is computed
from per-signal counts within a configurable window (default 24
hours), with logarithmic compression to prevent a single very active
agent from saturating the score on volume alone:

$$\mathit{rogue\_score} = \min\left(50, \sum_{s \in \mathit{signals}} w_s \cdot \log_2\big(1 + c_s\big)\right)$$

where $c_s$ is the count of signal $s$ in the window and $w_s$ is
its weight. This rewards \emph{breadth} (multiple signal kinds)
over \emph{depth} (one signal repeated): an agent that attempted one
out-of-scope tool call and emitted one PII leak is treated as more
suspicious than one that attempted ten out-of-scope tool calls of
the same kind.

Validation in \cref{fig:rogue-heatmap} (appendix): rogue-event
clusters across a 40-agent fleet co-occur with annotated attack
epochs during a 24-hour Garak~\cite{garak} red-team campaign.

\subsection{Quality signals}

Alongside the four rogue counters, KYA tracks three
\emph{quality} signals~--- weaker per-event but valuable in
aggregate:

\paragraph{Hallucination signals.} Tagged from downstream evaluation:
the agent produced a response that an LLM-judge
(\cref{sec:related}) or a citation-verification pass marked as
unsupported by retrieved context.

\paragraph{QA irrelevance.} The response failed a relevance check
against the original query --- the agent ``went off topic.''

\paragraph{Prompt injection attempts.} The user input or upstream
context contained signatures of a known prompt-injection
template~\cite{greshake2023notwhat}. Whether the injection succeeded is captured
separately under policy violations.

Quality signals contribute a smaller delta than rogue signals
($w_q < w_r$) because their false-positive rate is higher, but they
are useful as early indicators: a rising hallucination rate often
precedes a discovered drift event by hours or days.

\subsection{Burst anomaly detection}

The realtime module (\texttt{realtime.py}) maintains Valkey-backed
sliding windows over signal kinds and publishes burst alerts when
the per-minute rate exceeds the trailing-hour baseline by a
configurable factor. Alerts flow through pub/sub to subscribed
operators; the API exposes \texttt{detect\_burst\_anomalies()} for
batch queries. This is the only path through KYA that runs against
in-memory state rather than the durable evidence chain --- burst
detection is intended for live triage, not after-the-fact audit.

\subsection{Integration with the static score}

The total agent-risk view in operator dashboards is
$\mathit{total} = \mathit{static} + \mathit{rogue}$, clamped to
$[0, 100]$. A purely-defined high-risk agent that has never
misbehaved retains its static score; a moderately-defined agent that
attempts privilege escalation accumulates rogue deltas and crosses
into higher buckets at runtime. This composition is the central
distinction from purely static governance: the principal-trust posture
\emph{moves} with behavior.

\subsection{KYP --- Know Your Principal}
\label{subsec:kyp}

KYA's dynamic-trust layer is built around a unified principal-trust
taxonomy we call \textbf{KYP} (Know Your Principal). KYP generalizes
the per-user trust scoring common in fraud-detection
systems~\cite{chaffer2025kya,janani2025humanmachineblur} to a single
schema-level abstraction over three principal kinds:

\begin{itemize}
\item \texttt{user} --- a human user (UUID identifier)
\item \texttt{agent} --- another AI agent (\texttt{agent\_key} identifier)
\item \texttt{service\_account} --- an automated service, cron job, batch pipeline, or harness
\end{itemize}

All three principal kinds share the same scoring mechanics:
initial trust score 50 (\btt{STARTING\_TRUST},
\btt{kya/users.py}), bounded $[0, 100]$, signal-driven
decrement, time-decayed recovery. Default per-signal weights
(\btt{SIGNAL\_DELTAS}, \btt{kya/users.py}; tenant-overrideable
via the only-tighten algebra of \cref{sec:fedrec}):
\btt{oos\_tool} $-3$, \btt{rbac\_refusal} $-2$,
\btt{governance\_block} $-2$, \btt{data\_leak} $-10$,
\btt{cross\_tenant} $-15$. Scores map to four buckets via
\btt{bucket\_for\_trust} (\btt{kya/users.py}):
\texttt{trusted} ($\geq 75$), \texttt{neutral} ($\geq 40$),
\texttt{risky} ($\geq 15$), \texttt{blocked} ($< 15$). All three
principal kinds share one storage table
\btt{kya\_principal\_trust} (\btt{kya/principals.py:215})
keyed by $(\mathit{tenant\_id}, \mathit{principal\_kind},
\mathit{principal\_id})$ with one signal-application code path
(\cref{fig:kyp-schema}).

\begin{figure}[t]
\centering
\resizebox{\columnwidth}{!}{

\begin{tikzpicture}[
  font=\sffamily\footnotesize,
  table/.style={draw, rounded corners=2pt, minimum width=46mm,
                align=left, inner sep=4pt, fill=blue!8, draw=blue!50},
  hdr/.style={font=\sffamily\small\bfseries, anchor=west},
  col/.style={font=\sffamily\footnotesize\ttfamily, anchor=west},
  disc/.style={font=\sffamily\scriptsize\itshape, anchor=west, text=blue!60!black},
  kind/.style={draw, rounded corners=1.5pt, minimum width=24mm, minimum height=6mm,
               align=center, font=\sffamily\scriptsize\ttfamily, fill=green!10, draw=green!60!black},
  arrow/.style={-{Stealth}, thick, gray!70},
]

\node[table] (tbl) at (0,0) {%
  \begin{tabular}{@{}l@{\hspace{4pt}}l@{}}
    \multicolumn{2}{l}{\textbf{kya\_principal\_trust}} \\[2pt]
    \texttt{tenant\_id}       & \scriptsize uuid (PK) \\
    \texttt{principal\_kind}  & \scriptsize varchar (PK, discriminator) \\
    \texttt{principal\_id}    & \scriptsize text (PK) \\
    \texttt{trust\_score}     & \scriptsize int $\in [0, 100]$ \\
    \texttt{signal\_counts}   & \scriptsize jsonb \\
    \texttt{last\_signal\_at} & \scriptsize timestamptz \\
    \texttt{updated\_at}      & \scriptsize timestamptz \\
  \end{tabular}};

\node[disc, below=2mm of tbl, text width=46mm, align=left]
  {\texttt{PRINCIPAL\_KINDS} closed-set whitelist at
   \texttt{kya/principals.py:81}};

\node[kind, left=18mm of tbl, yshift=14mm]  (user)  {user};
\node[kind, left=18mm of tbl, yshift=0mm]   (agent) {agent};
\node[kind, left=18mm of tbl, yshift=-14mm] (svc)   {service\_account};


\node[font=\sffamily\scriptsize\itshape, right=10mm of tbl, text width=38mm,
      align=left, fill=yellow!10, draw=yellow!50!brown, rounded corners,
      inner sep=4pt] (share)
  {\textbf{One signal-application code path}\\[2pt]
   $\bullet$ same starting score (50)\\
   $\bullet$ same bounds $[0,100]$\\
   $\bullet$ same time-decay\\
   $\bullet$ same Valkey-window mirror\\[2pt]
   No per-kind branching.};

\draw[-{Stealth}, thick, blue!60!black] (agent.east) -- (tbl.west);

\draw[-{Stealth}, thick, orange!70!black] (tbl.east) -- (share.west);

\node[font=\sffamily\small\itshape, above=2mm of tbl, text=blue!60!black]
  {Single table $\to$ one discriminator $\to$ three principal kinds};

\node[font=\sffamily\tiny, text=red, below right=0pt and 0pt of svc.south west]
  {[v20 L-removed]};

\end{tikzpicture}}
\caption{KYP unified principal-trust schema. A single table
(\texttt{kya\_principal\_trust}) carries trust scores for all three
principal kinds; the closed-set \texttt{principal\_kind} discriminator
selects users, agents, or service accounts without branching the
scoring code path. Per-kind subsystems are visible only in the
choice of identifier domain; signal mechanics, decay, and storage
are shared.}
\label{fig:kyp-schema}
\end{figure}

The closest prior art each unifies at a different layer.
Chaffer~\cite{chaffer2025kya} coins "Know Your Agent" but for an
agent-only Web3 trust framework with no human or service principal.
Janani~\cite{janani2025humanmachineblur} argues for a human-machine
identity continuum at the policy-and-risk-evaluation layer with no
data model. Okta's NHI initiative~\cite{okta_nhi_unified_2026}
unifies humans, services, and agents at the platform/UI layer while
keeping separate scoring subsystems per identity kind. Microsoft
Agent Governance Toolkit~\cite{microsoft_agt_2026} provides per-agent
trust scoring in Agent Mesh and relies on Entra ID upstream for
humans, with no unified scoring substrate spanning both. KYP's
delta is the relational-schema unification: a single table, one
discriminator (\texttt{PRINCIPAL\_KINDS} at \texttt{kya/principals.py:81}),
shared scoring code across kinds.

The \texttt{PRINCIPAL\_KINDS} tuple is intentionally closed at
three; specialized actors (eval harnesses, red-team campaigns,
batch ingest, autonomous SQL pipelines) instantiate as
\texttt{service\_account} with metadata in the \texttt{attributes}
JSON column. Caller-supplied strings cannot expand the keyspace
(consistent with the closed-set whitelist discipline of
\cref{sec:disciplines}).

\subsection{Actor-agent attribution for autonomous fan-out}
\label{subsec:actoragent}

When orchestrator agent $A$ triggers downstream sub-agents $B/C/D$
and any sub-agent fires a rogue signal at runtime, the signal is
tagged at emission with \texttt{actor\_agent\_key=}$A$, so $A$'s
principal-trust counter is debited --- not $B$'s, $C$'s, or $D$'s.
This is complementary to the static-score delegation premium of
\cref{sec:static} (where $A$'s risk score factors in risky
delegates): the static axis penalizes \emph{configured} delegation
to risky agents; the dynamic axis registers \emph{actual}
delegate misbehavior in $A$'s runtime principal-trust counter.

The default \btt{actor\_agent\_key=agent\_key} convention is
hard-wired at three SDK entry points (Claude Agent SDK hook,
LangChain callback handler, OpenAI Agents \texttt{RunHooks}) so
attribution works without per-customer wiring. The same convention
is documented in the OTLP bridge's KYA-hooks client.

Three distinct deltas from prior work: banking AML / Horton
capability~\cite{miller2007horton} do \emph{audit-time}
attribution; AgenTracer~\cite{zhang2025agentracer} identifies the
\emph{offending sub-agent} post-hoc; Microsoft AGT propagates
trust ceilings \emph{downward}~\cite{microsoft_agt_2026}. KYP
debits the \emph{orchestrator} at runtime --- complementary to all
three. Legal-theoretic motivation: Gabison and
Xian~\cite{gabison2025liability} (REALM at ACL 2025).

\section{Evidence Chain}
\label{sec:evidence}

KYA's evidence log is the system's source of truth for after-the-fact
audit. The log stores the actual forensic payloads regulators need
to reconstruct what an agent did: the prompts the agent received,
the responses it produced, the tool calls it made
\emph{with their arguments} (e.g.\ the literal SQL it tried to run),
delegation messages between agents, human-in-the-loop approval
decisions, and system context. Closed-set kinds:
\texttt{prompt}, \texttt{response}, \texttt{tool\_call},
\texttt{tool\_result}, \texttt{delegation\_message},
\texttt{hil\_decision}, \texttt{system\_message}
(\texttt{VALID\_EVIDENCE\_KINDS}). Where the scoring tables
(\texttt{kya\_invocations}, \texttt{kya\_principal\_trust}) hold
the counts, this table holds the proof. Each row is appended to a
per-(tenant, invocation) HMAC-chained sequence.

The HMAC-chain construction itself is well-trodden prior art: Bellare
and Yee (1997)~\cite{BellareYee1997} introduced forward-secure
hash-chained MAC audit logs, Schneier and Kelsey
(1998)~\cite{schneierkelsey1998logs} provided the canonical USENIX
treatment, and Crosby and Wallach
(2009)~\cite{crosbywallach2009tamper} explored the Merkle-tree
alternative. We do not claim novelty in the chain construction. Our
contribution is the \emph{deployment shape}: per-(tenant, invocation)
parallel chains with dialect-aware concurrency serialization, plus a
type-marked canonicalization that closes a collision-attack vector
the naïve \texttt{default=str} JSON encoder leaves open. We also
explicitly acknowledge the v1 single-key limitation and the
Sigstore-style anchoring upgrade path; this honesty distinguishes
KYA's claim from contemporaneous work (notably
Aegis~\cite{Aegis2026}) that asserts a stronger primitive.

\subsection{Chain construction (background)}

Let $K$ be a secret HMAC key held by the tenant operator, $H$ be
HMAC-SHA256~\cite{Bellare96}, and $\Vert$ denote byte concatenation. A KYA evidence
chain $L = (e_0, e_1, \dots, e_n)$ is a sequence of events where each
$e_i$ carries:

\begin{itemize}
\item $\mathit{seq}_i \in \mathbb{N}$: monotonically increasing sequence number
\item $\mathit{kind}_i \in \mathcal{K}$: event kind from a validated
  closed vocabulary (\texttt{VALID\_EVIDENCE\_KINDS}, 7 entries)
\item $\mathit{payload}_i$: canonical-JSON serialized event body
\item $h_i$: chain hash, defined as
$$h_i = H_K\big(h_{i-1} \,\Vert\, \mathsf{canonical\_json}(e_i)\big),$$
with the seed $h_{-1}$ set to a domain-separation tag.
\end{itemize}

For tamper-evidence we refer to the standard result of Bellare and
Yee (1997)~\cite{BellareYee1997}: any modification to a prior entry
$e_i$ without knowledge of $K$ produces a chain that fails
\texttt{verify\_chain} with probability at most
$\mathsf{Adv}^{\mathrm{prf}}_H$.

\subsection{Deployment shape: per-(tenant, invocation) parallel chains}

Unlike a single global chain, KYA maintains a parallel chain per
(tenant, invocation) pair. Three properties motivate this:

\paragraph{Cross-tenant isolation.} A compromised chain in tenant
$T_1$ cannot break the chain in $T_2$. Pruning under one tenant's
retention policy does not require coordination with any other
tenant. Forensic audit on tenant $T_1$ does not expose data from
$T_2$.

\paragraph{Bounded fan-in for concurrent agent invocations.} Modern
agent frameworks fire many concurrent invocations per tenant. A
single chain head would serialize all of them; per-invocation chains
allow $N$ parallel writers across $N$ invocations without
contention.

\paragraph{Pruning preserves verification of surviving chains.}
Retention-driven pruning (\cref{subsec:retention}) creates a "clean
cut" the verifier distinguishes from tampering: an expected
chain-head mismatch at the first surviving entry of a pruned chain
is reported as a clean cut, not a tamper.

\subsection{Dialect-aware concurrency primitive}
\label{subsec:concurrency}

Concurrent writers within the same (tenant, invocation) still risk a
chain fork: two writers reading the same tail $h_{n-1}$ and producing
distinct $e_n, e'_n$ both signing against $h_{n-1}$. KYA's evidence
module serializes per-(tenant, invocation) writes via a
dialect-specific primitive:

\begin{itemize}
\item \textbf{PostgreSQL:} \btt{pg\_advisory\_xact\_lock} keyed on
  a hash of \texttt{`kya:evidence:'}, the tenant id, and the
  invocation id. The key is derived from the (tenant, invocation)
  identifiers, so the lock functions \emph{even on empty chains}
  where no tail row exists to lock. The lock is transaction-scoped
  and released on commit/rollback.
\item \textbf{MySQL:} \texttt{SELECT ... FOR UPDATE} on the most-recent
  chain entry within the writer's transaction.
\item \textbf{SQLite and DuckDB:} the module documents and enforces a
  "one writer per invocation" contract at the application layer,
  since these dialects do not provide the row-locking semantics the
  primitive requires.
\end{itemize}

The closest prior art handles concurrency by avoiding it.
Schneier-Kelsey~\cite{schneierkelsey1998logs} and
Bellare-Yee~\cite{BellareYee1997} defer concurrency to operator
practice and assume a single-writer ingestion process. Sigstore
Rekor~\cite{Newman2022Sigstore} serializes appends through a single
coordinator. The Microsoft Agent Governance Toolkit~\cite{microsoft_agt_2026}
documents a single-writer audit-sink contract for its FileAuditSink.
KYA's delta is the dialect-aware concurrency primitive applied to
per-(tenant, invocation) parallel chains: PostgreSQL writers serialize
via \texttt{pg\_advisory\_xact\_lock}, MySQL writers via
\texttt{SELECT~FOR~UPDATE} on the tail row, SQLite/DuckDB writers
via an enforced single-writer contract at the application layer.
The combination of parallel chains with primitive-level serialization
removes the global serialization point that earlier systems retain
(\texttt{kya/evidence.py:436--462}).

\paragraph{Type-marked canonicalization, retention floors, signing-key
resolution, and verification semantics} (\cref{appx:evidence-detail}):
canonical JSON (RFC~8785~\cite{RFC8785}) extended with
type-marked wrapping to close a collision-attack vector; per-regime
retention floors (GDPR/HIPAA 6yr, NYDFS 5yr, EU AI Act 7yr, ISO
27001, FedRAMP, ITAR, NATO, EU classified, plus 6 more) enforced
at \btt{prune\_expired\_evidence()}; three-tier signing-key
resolution (KMS/Vault/HSM provider, env-mounted key, dev fallback
with one-time WARN); $O(1)$ per-event and $O(n)$ full-chain
verification with three distinct failure modes (payload tamper,
chain break tamper, clean cut). All standard pieces; the only-novel
deployment-shape claims are above.

\subsection{Worked example: one loan application through the chain}
\label{subsec:chain-example}

Continuing the running example (\cref{subsec:running-example}), a
single loan application produces an invocation chain on the order of
20--30 entries: $e_0$ the originating prompt from the underwriter (kind
\texttt{prompt}), $e_{1\text{--}3}$ the Loan Triage Agent's delegation
calls to the three sub-agents (kind \texttt{tool\_call}, with
\texttt{actor\_agent\_key=triage} on each so any sub-agent rogue signal
debits the orchestrator), $e_4$ a Document Verification result tagged
with sensitivity \texttt{pii} and a 6-year retention floor under
NYDFS, $e_5$ an OFAC screening result tagged \texttt{us\_classified}
with the 25-year IL6 floor, $e_{6\text{--}8}$ Risk Review Agent
reasoning steps, $e_9$ the action-gate verdict (kind
\texttt{governance\_verdict}, \texttt{allow}), and $e_{10}$ the
underwriter-facing decision (kind \texttt{response}). Any post-hoc
attempt to alter the OFAC entry --- whether to hide a missed hit or
to fabricate one --- changes $h_5$ and breaks the chain at $h_6$;
\texttt{verify\_chain()} returns \emph{chain break tamper} at $e_6$.
The retention floor at $e_5$ (25 years) outlives any other entry,
forcing the whole invocation's chain head to survive that long even
if civilian-PII entries are pruned at year~6.

The same chain construction applies to the clinical fleet of
\cref{subsec:second-domain}: a single triage invocation produces a
similar 10--20 entry chain, but every entry tagged with
sensitivity \texttt{phi} inherits HIPAA's 6-year retention floor
(45 CFR 164.530(j)) and a breach-notification SLA of 60 days under
the HIPAA Breach Notification Rule. The regime-aware retention
selector (\cref{subsec:retention}) reads the data-class tags and
selects the longest applicable floor without per-domain code paths.

\paragraph{Honest scope.} The chain is verifiable by anyone holding
the HMAC key; multi-party verification (no single party can forge
the chain) requires threshold signatures over the root and is future
work (\cref{appx:evidence-detail}, \cref{sec:limitations}).

\subsection{Configuration drift detection}
\label{sec:drift}

Observability tracks agent \emph{behavior}; not agent
\emph{configuration}. A one-line edit to a system prompt silently
changes the agent's policy without leaving any observability trace.
KYA addresses this with (i)~content-addressed canonical hashing
over an explicit allowlist of 18 policy-bearing fields
(\texttt{system\_prompt}, \texttt{tools}, \texttt{model},
\texttt{access\_level}, \texttt{can\_override}, etc.); (ii)~an
append-only event-time + ingest-time snapshot history with
lineage-driven risk inheritance; and (iii)~a behavioral-drift signal
--- the mode-vs-config gap --- that records exercised
\texttt{human\_loop} mode at each invocation and flags
configurations whose declared mode diverges from the empirical
distribution (EU AI Act Art.~14 evidentiary requirement for
\emph{exercised} oversight). The detector returns a structural diff,
records to the evidence chain above, and triggers operator review.
Algorithm, lineage-decay constants, signature verification, and the
explicit out-of-scope drift classes (model-provider behavioral drift,
tool-environment drift, retrieval-corpus drift) are detailed in
\cref{appx:drift}.

\begin{figure}[t]
\centering
\resizebox{\columnwidth}{!}{

\begin{tikzpicture}[
  font=\sffamily\footnotesize,
  entry/.style={draw, rounded corners=2pt, minimum width=32mm, minimum height=20mm,
                align=left, inner sep=4pt, fill=blue!8, draw=blue!50, text width=30mm},
  hash/.style={font=\sffamily\scriptsize\ttfamily, text=blue!60!black},
  arrow/.style={-{Stealth}, thick, gray!70},
  dst/.style={draw, dashed, rounded corners, fill=yellow!15, inner sep=3pt,
              align=center, font=\sffamily\scriptsize},
]

\node[dst] (dst) at (-2.6, 0) {$h_{-1}$\\\textit{KYA-v1-evidence}};

\node[entry, right=5mm of dst] (e0) {%
  \textbf{$e_0$}\\
  $\mathit{seq}=0$\\
  $\mathit{kind}=$ \texttt{prompt}\\
  $\mathit{payload}$ (canonical)\\
  $h_0 = H_K(h_{-1} \Vert e_0)$};

\node[entry, right=5mm of e0] (e1) {%
  \textbf{$e_1$}\\
  $\mathit{seq}=1$\\
  $\mathit{kind}=$ \texttt{tool\_call}\\
  $\mathit{payload}$\\
  $h_1 = H_K(h_0 \Vert e_1)$};

\node[entry, right=5mm of e1] (e2) {%
  \textbf{$e_2$}\\
  $\mathit{seq}=2$\\
  $\mathit{kind}=$ \texttt{response}\\
  $\mathit{payload}$\\
  $h_2 = H_K(h_1 \Vert e_2)$};

\draw[arrow, blue!60] (dst.east) -- node[above, hash] {seed} (e0.west);
\draw[arrow, blue!60] (e0.east) -- node[above, hash] {$h_0$} (e1.west);
\draw[arrow, blue!60] (e1.east) -- node[above, hash] {$h_1$} (e2.west);

\node[font=\sffamily\footnotesize\itshape, text width=110mm, align=center,
      fill=red!8, draw=red!50, rounded corners, inner sep=4pt] at (4,-2.8)
  {Per-(tenant, invocation) writers serialized via dialect-aware lock:
   \texttt{pg\_advisory\_xact\_lock} on PostgreSQL,
   \texttt{SELECT FOR UPDATE} on MySQL,
   single-writer contract on SQLite/DuckDB.};

\node[font=\sffamily\footnotesize\itshape, text width=110mm, align=center,
      fill=green!8, draw=green!50, rounded corners, inner sep=4pt] at (4,-4.4)
  {$\mathit{payload}$ canonicalization wraps non-JSON values as
   \texttt{\{"\_\_t\_\_": type, "v": ...\}} so a \texttt{datetime}
   and its \texttt{isoformat()} hash differently.};

\end{tikzpicture}}
\caption{KYA evidence chain construction. Per-(tenant, invocation)
parallel chains with dialect-aware concurrency serialization
(\cref{subsec:concurrency}) and type-marked canonical JSON
(\cref{subsec:canon}).}
\label{fig:chain}
\end{figure}

\section{Inbound Federation and Closed-Loop Adaptation}
\label{sec:fedrec}

KYA's risk-weight tables (\cref{sec:static}) ship with defaults
from expert judgment. Defaults are generic by construction. A write
tool in a clinical setting is not the same risk as the same tool in
logistics. Per-tenant overrides (\cref{subsec:overrides}) let
customers adapt weights locally, but local tuning misses cross-fleet
signal when attack patterns emerge across many customers.

KYA addresses both gaps with two coordinated mechanisms:
\textbf{federated weight recommendations} (vendor → fleet,
\cref{subsec:fedinbound}) and \textbf{in-tenant closed-loop adaptation}
(operator → local fleet, \cref{subsec:closedloop}). Both route apply
through the same gates, preserving the only-tighten invariant
(\cref{subsec:overrides}) and operator approval as default.

The signed-update transport itself is well-trodden prior art:
TUF~\cite{samuel2010tuf}, Uptane~\cite{kuppusamy2016uptane}, OPA
signed bundles~\cite{OPA_SignedBundles},
STIX/TAXII~\cite{STIX21_OASIS}, and Ioannidis et al.'s distributed
firewall~\cite{ioannidis2000distfw} all distribute signed policy
artifacts to enforcement points. Our contribution is not the
transport but the \textbf{four-gate apply pipeline with
operator-approval-as-default} and the formal only-tighten composition
algebra. The closed-loop in-tenant adaptation pipeline
(\cref{subsec:closedloop}) is an engineering discipline that inherits
those two mechanisms; we describe it for completeness, not as a
separate contribution.

\subsection{Tenant override algebra}
\label{subsec:overrides}

The only-tighten composition algebra is structurally adjacent to
Cedar's forbid-dominance theorem (Cutler et
al.~\cite{cutler2024cedar}, OOPSLA 2024), which proves in Lean that
\texttt{forbid} dominates \texttt{permit} under union composition for
a flat policy set. The same one-way property is also operationally
deployed in Istio AuthorizationPolicy~\cite{istio_authz} (CUSTOM $\succ$
DENY $\succ$ ALLOW, non-configurable lattice) and GCP IAM Deny
Policies~\cite{gcp_deny_policies} (descendant deny narrows ancestor
permits; never broadens). KYA does not claim novelty in the
one-way-tightening property itself. Our contribution is the
\textbf{three-channel hierarchical composition}: platform default
$\oplus$ tenant override $\oplus$ signed external recommendation,
with a soundness lemma over the multi-authority hierarchy. The signed
external-recommendation channel as a third party to the meet is the
genuinely new dimension; no prior work surveyed formalizes it.

Let $W_0: (\mathit{scope}, \mathit{key}) \to \mathbb{N}$ be the
platform default weight function and let $W_t$ denote tenant $t$'s
effective weight function and $\succeq$ be the order $w' \succeq w
\iff w' \ge w$ (higher = tighter). A tenant override is
\emph{only-tighten} if every $(scope, key)$ value dominates the
platform default under $\succeq$ (Definition~\ref{def:tighten}, App.
\ref{appx:tighten}). The SDK enforces this at write time:
\texttt{tenant\_weights.set\_override} raises
\texttt{OverrideLoosensError} on any tenant-scoped value below the
platform default (\texttt{tenant\_weights.py:189}). Composition is
sound: the effective tenant weight function $W_t$ satisfies $W_t
\succeq W_0$ at all times and is monotone non-decreasing within the
tenant scope (Lemma~\ref{lem:tighten}, proof sketch in App.
\ref{appx:tighten}). The invariant binds tenants, not platform
admins --- the \texttt{\_check\_only\_tighten} function returns early
when \texttt{tenant\_id=None}; platform admins retain the
management-decision right to lower the platform default itself.

A runnable witness
(\btt{examples/three\_channel\_composition\_witness.py}) exercises
all three channels across six update attempts (platform raise,
tenant tighten, tenant loosen, signed-rec tenant-tighten, signed-rec
tenant-loosen, signed-rec platform-lower). Every attempt matches its
expected accept/block outcome and Lemma~\ref{lem:tighten} holds in
the resulting state: with platform $W_0(\texttt{pii}){=}12$ and
tenant $W_t(\texttt{pii}){=}30$, we have $W_t \geq W_0$ ($30 \geq 12$).
The signed-recommendation channel routed at the tenant level
($W_r \to W_t$) traverses the same \texttt{\_check\_only\_tighten}
guard as direct tenant overrides --- so a compromised collector key
cannot loosen a tenant's effective weight even though the inbound
pipeline accepted the message.

\subsection{The four-gate apply pipeline}
\label{subsec:fedinbound}

Inbound federation is \emph{off by default}; a tenant operator
explicitly calls \texttt{enable\_inbound(collector\_url,
trusted\_keys)} (\texttt{kya/inbound.py}) to opt in. Until then,
no external recommendation reaches the SDK. Once enabled, every
signed recommendation traverses four sequential gates before any
effect on a tenant's weight table.

\begin{description}
\item[Gate 1: Ed25519 signature verification.] The recommendation
  envelope is verified against pinned trust anchors via
  \btt{\_inbound\_signing.verify\_envelope}. Canonical-JSON
  serialization (signature field stripped, sorted keys, no
  whitespace, \btt{ensure\_ascii=True}). Multi-anchor pinning is
  first-class: the \btt{KYA\_INBOUND\_PUBLIC\_KEY} environment
  variable accepts a comma-separated list of
  \btt{<keyid>:<base64-pubkey>} entries, so operators can pin
  current + next-quarter keys for invisible rotation, or pin a
  self-operated gateway key for air-gapped deployments. A signature
  verification failure aborts processing immediately and is never
  followed by an "apply anyway" fallback.

\item[Gate 2: Expiration check at persist time.] Each recommendation
  carries an \texttt{expires\_at} timestamp. The check is at
  \emph{persist time}, not only at signing-verify, so recommendations
  signed in advance cannot bypass expiry by traveling slowly through
  the fetch pipeline (\texttt{inbound.py:166}).

\item[Gate 3: Only-tighten composition.] Apply routes through
  \texttt{tenant\_weights.set\_override}, which enforces
  Definition~\ref{def:tighten} and Lemma~\ref{lem:tighten}. A
  recommendation that would lower the effective weight raises
  \texttt{OverrideLoosensError}; the recommendation is rejected and
  the operator notified.

\item[Gate 4: Operator approval (default).]
  Every recommendation lands as \texttt{status='pending'} by default;
  an operator explicitly approves via the SDK's
  \btt{approve\_recommendation} path. Optional auto-apply is scoped:
  the customer configures an \btt{auto\_apply\_allowlist} of
  $(scope, key)$ tuples, and only recommendations matching the
  allowlist are auto-applied --- and even then through the same
  \btt{set\_override} path, so gates~1--3 still hold.
\end{description}

A \emph{fifth latent gate} guards against race conditions: the
auto-apply UPDATE is status-guarded
(\texttt{WHERE status='pending'} at \texttt{inbound.py:248}) so a
re-fetch cannot roll back an operator-applied row to an
\texttt{auto\_applied} status.

The contribution is the composition of gates 1--4 (plus the latent
race-guard) in this specific order with operator-approval as the
default. No prior signed-update distribution system surveyed
(\cref{sec:related}) composes all four with operator-approval as the
default --- the closest analogue is the KeyNote-credential
composition with local POLICY in Ioannidis et
al.\ (2000)~\cite{ioannidis2000distfw}, which lacks an explicit
operator-approval step.

\subsection{Closed-loop in-tenant adaptation (supporting discipline)}
\label{subsec:closedloop}

Local incidents within a tenant's deployment produce adaptation
candidates via the \texttt{kya.feedback} pipeline: a resolved
critical governance incident triggers
\btt{feedback.propose\_from\_incident}, which analyzes which
factors fired against the incident-producing agent and proposes
weight bumps. Suggestions land in \btt{kya\_weight\_suggestions}
as \texttt{status='pending'}; an operator reviews and approves
(or rejects); approved suggestions are applied via
\btt{tenant\_weights.set\_override}, inheriting the only-tighten
algebra and the change-audit trail. The module's docstring states
the safety discipline: "\emph{Never auto-tune. Auto-applying weights
based on incidents creates a feedback loop where one false-positive
incident silently weakens the governance model. Human-in-the-loop is
non-negotiable here.}"

This is an engineering discipline, not a research contribution: the
mechanism is straightforward (propose $\to$ stage pending $\to$
operator approval $\to$ apply through the four-gate pipeline).
Structurally it parallels FDA-PCCP~\cite{fda2024pccp} envelope
review and draws on AI corrigibility~\cite{soares2015corrigibility,nayebi2025corrigible}
and active-learning oracle-gating~\cite{schubert2023deep}.
Counterposition: Cloudflare WAF
ML~\cite{cloudflare_waf_ml_2022} ships vendor-trained classifier
updates globally with no customer-side approval; IBM
SOAR~\cite{ibm_soar_2024} gates response actions but not policy
weight adaptations.

\paragraph{Worked example: sector-wide OFAC tightening.}
Continuing the running example, a peer bank's sanctions-evasion
incident traces to a misclassified OFAC hit. The KYA collector
ingests aggregated counts, the operator drafts a recommendation
tightening the data-class multiplier on \texttt{us\_classified}
for \texttt{ofac\_lookup} ($1.20 \to 1.30$), signs it, and
distributes. Our bank's KYA verifies signature (Gate~1), checks
expiry (Gate~2), confirms $1.30 \geq$ current override (Gate~3
only-tighten), and stages for operator review (Gate~4). The risk
officer approves; the OFAC sub-agent's score shifts $34 \to 38$
and the \texttt{flag\_for\_review} threshold tightens one bucket.
No raw application data, principal identity, or completion payload
crosses the organizational boundary --- only counts upward, signed
weight deltas downward.

\subsection{Threat-model summary}

\textbf{Compromised collector private key}: an attacker forging
recommendations can over-restrict (denial of service) but cannot
loosen below the platform default --- the only-tighten algebra
constrains blast radius to friction, not exposure; key rotation via
\texttt{key\_id} + multi-anchor pinning lets operators distrust a
compromised key.
\textbf{Replay}: \texttt{expires\_at} bounds replay; the SDK rejects
expired recommendations at persist time.
\textbf{Compromised platform admin and insider-threat collusion}:
out of scope (\cref{appx:limitations}); KYA provides tenant-side
immutability against vendor recommendations + compromised admin,
not absolute immutability.

\begin{figure}[t]
\centering
\resizebox{\columnwidth}{!}{

\begin{tikzpicture}[
  font=\sffamily\footnotesize,
  cust/.style={draw, rounded corners=2pt, minimum width=24mm, minimum height=12mm,
               align=center, inner sep=3pt, fill=blue!8, draw=blue!50},
  vend/.style={draw, rounded corners=2pt, minimum width=28mm, minimum height=12mm,
               align=center, inner sep=3pt, fill=orange!10, draw=orange!60!black},
  gate/.style={draw, rounded corners=2pt, minimum width=20mm, minimum height=10mm,
               align=center, inner sep=2pt, fill=green!12, draw=green!60!black,
               font=\sffamily\scriptsize},
  arrow/.style={-{Stealth}, thick, gray!70},
]

\node[cust] (c1) at (0,2.6) {Customer A\\\scriptsize \texttt{veldt-kya}};
\node[cust] (c2) at (0,0.6) {Customer B\\\scriptsize \texttt{veldt-kya}};
\node[cust] (c3) at (0,-1.4) {Customer C\\\scriptsize \texttt{veldt-kya}};

\node[vend] (agg) at (4.6, 1.6) {aggregate-only\\\scriptsize counts, no payloads};
\node[vend] (ana) at (4.6, -0.4) {analyst review\\\scriptsize Veldt-internal};
\node[vend] (sgn) at (8.8, -0.4) {Ed25519 sign\\\scriptsize \texttt{key\_id}};

\node[gate] (g1) at (8.8, 2.4) {1: signature\\verify};
\node[gate] (g2) at (8.8, 3.6) {2: expiry};
\node[gate] (g3) at (12.5, 3.6) {3: only-tighten};
\node[gate] (g4) at (12.5, 2.4) {4: operator\\approval};

\node[cust] (inc) at (4.6, -3.4) {governance incident};
\node[cust] (sug) at (8.8, -3.4) {suggestion\\\scriptsize \texttt{feedback.py}};
\node[cust] (op)  at (12.5, -3.4) {operator review\\\scriptsize never auto-tune};

\draw[arrow, blue!60] (c1.east) to[bend left=8] (agg.west);
\draw[arrow, blue!60] (c2.east) -- (agg.west);
\draw[arrow, blue!60] (c3.east) to[bend right=8] (agg.west);

\draw[arrow] (agg) -- (ana);
\draw[arrow] (ana) -- (sgn);

\draw[arrow, orange!60!black] (sgn) -- (g1);
\draw[arrow] (g1) -- (g2);
\draw[arrow] (g2) -- (g3);
\draw[arrow] (g3) -- (g4);
\draw[arrow, orange!60!black] (g4.west) to[out=180, in=0] (c1.east);

\draw[arrow, blue!60] (inc) -- (sug);
\draw[arrow, blue!60] (sug) -- (op);
\draw[arrow, blue!60] (op.north) to[out=90, in=-90] (g3.south);

\node[font=\sffamily\scriptsize\itshape, gray] at (2.0, 2.0) {telemetry up};
\node[font=\sffamily\scriptsize\itshape, gray] at (10.5, 4.4) {signed recs $\to$ 4 gates};
\node[font=\sffamily\scriptsize\itshape, gray] at (8.8, -4.2) {in-tenant closed loop};

\node[draw, dashed, gray, fit={(agg)(ana)(sgn)}, inner sep=6pt, rounded corners,
      label={[font=\sffamily\scriptsize,gray]above:Vendor side}] {};

\end{tikzpicture}}
\caption{KYA's federated + closed-loop adaptation. Both paths route
through the same gates; the only-tighten invariant
(Lemma~\ref{lem:tighten}) holds across both.}
\label{fig:fedrec}
\end{figure}

\section{Evaluation}
\label{sec:eval}

\paragraph{Headline result.}
Reproducing the topology-guided multi-agent attack of Liang et
al.~\cite{liang2025donttrust} against the loan-decisioning fleet
of \cref{subsec:running-example}, an undefended Loan Triage
orchestrator remains "clean" by static-score metrics (61, high)
even as its compromised Document Verification sub-agent propagates
adversarial influence upward. With both delegation-trust premium
(\cref{sec:static}) and actor-agent runtime debit
(\cref{subsec:actoragent}) enabled, the Triage Agent's
principal-trust counter is debited on the first malicious
invocation (using the default \texttt{SIGNAL\_DELTAS} weight of
$-3$ for \texttt{oos\_tool} signals; \texttt{kya/users.py}), and
crosses into the \emph{risky} bucket ($< 40$) at the 4th
invocation and the \emph{blocked} bucket ($< 15$) at the 12th.
Faster transitions follow for heavier signal kinds: with
\texttt{data\_leak} ($-10$ per signal), the orchestrator crosses
\emph{risky} at invocation 2 and \emph{blocked} at invocation 4;
with \texttt{cross\_tenant} ($-15$), \emph{risky} at 1 and
\emph{blocked} at 3. Disabling either axis alone fails to
attribute: the dynamic-only configuration leaves the orchestrator
in \emph{neutral} until invocation 4 (no static premium
elevation), and the static-only configuration leaves the runtime
trust counter unchanged regardless of how many sub-agent rogue
signals fire. This is the only runtime attribution mechanism in
the surveyed prior art that catches the Liang attack at the
orchestrating principal without relying on legal-theoretic
post-hoc forensics (\cite{gabison2025liability}) or downward
trust-ceiling propagation (\cite{microsoft_agt_2026}).

We evaluate KYA along five further axes below: cross-backend
correctness, latency and throughput, adversarial-probe detection,
compliance-regime coverage, and adapter breadth. All data is reproducible from the Apache~2.0
\texttt{veldt-kya}~\cite{veldt-kya} release.

\subsection{Integrated scenario: governance, detection, and evidence
working together}
\label{subsec:integrated}

Before reporting per-axis results, we walk one representative agent
through all three temporal phases of the KYA contract --- the
configuration-time governance check (\cref{sec:static}), the
runtime detection layer (\cref{sec:rogue}), and the after-the-fact
forensic record (\cref{sec:evidence}) --- to make the through-line
concrete. The agent is the \textbf{Risk Review Agent} of the
running loan-decisioning fleet (\cref{subsec:running-example}),
which carries write authority to flag applications for human
underwriter review.

\paragraph{Phase 1 --- before-runtime governance check.}
At configuration time, \texttt{score\_agent} (\cref{sec:static})
returns a per-factor decomposition over the Risk Review Agent's
declared definition. The dominant contributions are: governance
mode \texttt{hybrid} ($+10$), \texttt{can\_override=True} ($+12$),
write tool count ($+4$ each), admin-gated tool
\texttt{override\_decision} ($+8$), data sensitivity
\texttt{financial} (MAX, $+20$), and the
\btt{admin\_tool\_AND\_financial\_data} interaction multiplier
fires at $1.25\times$. The agent scores \textbf{77 (high)}. The
operator's deployment review queue surfaces the agent
\emph{before any production traffic reaches it}; the high-bucket
classification triggers the static delegation-trust premium
(\cref{sec:static}) on the Loan Triage Agent that delegates to it.

\paragraph{Phase 2 --- during-runtime detection.}
Suppose the Risk Review Agent receives a manipulated input (a
loan application with embedded prompt-injection content) and
attempts to invoke an out-of-sanctioned-list tool ---
\texttt{export\_applicant\_pii\_to\_external}. Three signal
sources fire on the same event: (1) Layer~3 tool-RBAC refusal
(the tool isn't in the agent's \texttt{tools} list), (2) the
action gate emits a \texttt{policy\_violation} verdict
(unsanctioned data egress), and (3) the rogue module records an
\texttt{oos\_tool} signal. The
\texttt{record\_principal\_signal} call debits both the Risk
Review Agent's principal-trust counter and the Loan Triage
Agent's counter (the \texttt{actor\_agent\_key} convention of
\cref{subsec:actoragent} routes the second debit through the
default hook wiring).

\paragraph{Phase 3 --- after-runtime evidence.}
The same invocation produces seven HMAC-chained evidence rows:
the originating prompt; the agent's reasoning response; a
\texttt{tool\_call} row whose payload contains the literal
attempted argument (e.g.\ \btt{\{"tool":
"export\_pii\_external", "args": \{"applicant\_id": "\dots"\}\}});
a \texttt{tool\_result} row recording the action-gate rejection;
an \texttt{hil\_decision} row if an underwriter approves or denies
the escalation; and any delegation messages. Each row's
\texttt{signed\_hash} chains to the previous row's hash so neither
the operator nor a compromised database role can rewrite the
record after the fact. The chain is queryable by tenant and
invocation id; a regulator or external auditor reads the literal
payloads to reconstruct what the agent attempted.

\paragraph{Replit (July 2025) as a real-world cross-check.}
The same three-phase contract applied to the publicly reported
Replit autonomous-agent database deletion
incident\footnote{We treat this as one corroborating example
sourced entirely from the published post-mortem; full reconstruction
and the explicit non-prevention disclaimer are in
\cref{sec:limitations}.} reproduces the same shape: the agent
configuration scores critical \emph{before} deployment under
KYA's static factors; a destructive DDL invocation at runtime
fires the same combination of detection signals; the literal
\texttt{DROP TABLE} string is what the evidence chain would have
recorded. The Replit incident is not load-bearing for the
contributions of this paper. After-the-fact reconstruction of an
agent definition from a public post-mortem and observing that it
scores critical under KYA's weights is a calibration check on the
weight assignments, not evidence of prospective detection. The
full reconstruction and an explicit non-prevention disclaimer are
in \cref{sec:limitations}.

\subsection{Cross-backend correctness}

KYA persists state through SQLAlchemy abstractions and is designed
to run against multiple SQL backends without configuration changes
other than the connection string. We verify this at two granularities.

\paragraph{Nine-phase end-to-end matrix ($9 \times 4 = 36$ cells).}
The OpenCLAW end-to-end suite exercises nine phases of the SDK against
PostgreSQL~15, SQLite~3.45, DuckDB~0.10, and MySQL~8.0
(\cref{fig:matrix}): (1) storage initialization, (2) static scoring,
(3) versioning + drift, (4) evidence chain append + verify, (5)
rogue signal recording + actor-agent attribution, (6) quality
signal recording, (7) tenant weight overrides with only-tighten
enforcement, (8) inbound signed recommendations through the
four-gate apply pipeline, and (9) full lifecycle including
delegation-trust attribution (register $\to$ score $\to$ snapshot
$\to$ accumulate signals $\to$ apply recommendation $\to$ verify
chain). All 36 cells pass.

\paragraph{Per-table persistence matrix ($17 \times 4 = 68$ cells).}
A second harness
(\texttt{tests/verify\_all\_backends\_with\_data.py}) drives every
write through the public SDK API and verifies a non-empty row in each
of KYA's 17 owned tables on each of the four backends: 4 ORM-modeled
core tables (\texttt{agent\_versions}, \texttt{kya\_invocations},
\texttt{kya\_principal\_trust}, \texttt{kya\_evidence}) and 13 legacy
tables (aliases, user trust, weight overrides + changes + suggestions,
breach notifications, six red-team tables, and inbound
recommendations). The portable upsert layer
(\texttt{kya/\_dialect\_helpers.py}) dispatches each write to its
native dialect grammar --- PG \texttt{INSERT ... ON CONFLICT DO
UPDATE}, MySQL \texttt{INSERT ... ON DUPLICATE KEY UPDATE}, SQLite
\texttt{ON CONFLICT DO UPDATE}, DuckDB raw text (the duckdb-engine
SQLAlchemy adapter has a compile bug on the on-conflict constructor
that we route around). All $68/68$ cells pass.

\begin{figure}[t]
\centering
\includegraphics[width=\columnwidth]{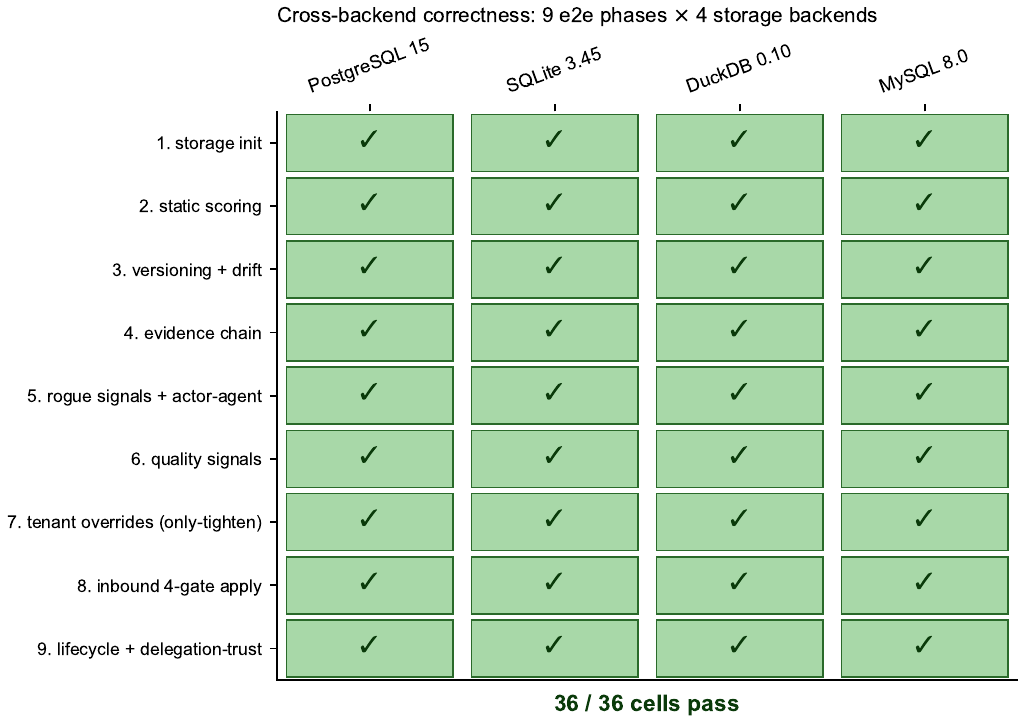}
\caption{Cross-backend correctness across nine end-to-end phases
$\times$ four storage backends. All 36 cells pass.}
\label{fig:matrix}
\end{figure}

\subsection{Latency and concurrent throughput}

Per-operation latency on commodity hardware: the pure-function
scorer (\btt{score\_agent}) runs sub-millisecond at p99
(median 110\,$\mu$s, p99 0.9\,ms). Storage-bound operations
were measured against each of the four supported backends,
$N=1000$ \btt{record\_principal\_signal + commit} calls per
backend after a 50-call warm-up, on a single workstation
(\cref{tab:latency-multi}).

\begin{table}[h]
\centering
\caption{Per-call latency for
\texttt{record\_principal\_signal + commit}, measured $N=1000$
against each backend on a single workstation. SQLite and DuckDB
run in-memory; PostgreSQL~15 and MySQL~8 on localhost. SQLite
in-memory numbers are not representative of network-attached
production deployments; PG/MySQL rows are the operationally
relevant figures.}
\label{tab:latency-multi}
\small
\begin{tabular}{lrrr}
\toprule
\textbf{Backend} & \textbf{Median} & \textbf{p95} & \textbf{p99} \\
\midrule
SQLite                 & 1.2\,ms  & 1.8\,ms  & 4.1\,ms \\
PostgreSQL~15          & 7.3\,ms  & 13.9\,ms & 20.9\,ms \\
DuckDB                 & 11.1\,ms & 15.9\,ms & 19.8\,ms \\
MySQL~8                & 12.1\,ms & 21.1\,ms & 27.6\,ms \\
\bottomrule
\end{tabular}
\end{table}

For context, Aegis~\cite{Aegis2026} reports 238\,ms median for ZK
proof verification --- KYA's principal-signal path is one to two
orders of magnitude faster because KYA does not anchor each
record to a ZK proof (the operator-key vs multi-party
verifiability tradeoff of \cref{sec:limitations}).

A five-phase 20-worker $\times$ 50-op load test sustains
$\approx 1{,}800$ ops/sec on a single PostgreSQL backend, scales
linearly to 20 workers, and saturates beyond as connection-pool
contention dominates. Per-worker this is $\approx 90$ ops/sec
against the 7.3\,ms PG median (\cref{tab:latency-multi}, which
implies $\approx 137$ ops/sec single-threaded), so roughly 66\%
of unloaded per-worker throughput is retained under 20-way
\texttt{pg\_advisory\_xact\_lock} contention --- consistent with
contention rather than measurement artifact. Critically, every operation under load
preserves the HMAC chain invariants: \texttt{verify\_chain}
succeeded on every per-(tenant, invocation) log at the end of the
test, validating \texttt{pg\_advisory\_xact\_lock} prevents chain
forks under realistic concurrent ingest. The same five-phase suite
(record-invocation, evidence-chain append, principal-signal upsert,
actor-mirror writes, versioning-version-no race) passes
end-to-end on all four backends ($20/20$ phase-backend cells
green)\footnote{In-process multi-writer is verified on all four
backends; SQLite/DuckDB add an in-process per-(tenant, invocation)
lock to close the lost-update window that their lack of row-level
locks leaves open. Cross-process multi-writer is verified only on
PG (\texttt{pg\_advisory\_xact\_lock}) and MySQL
(\texttt{SELECT FOR UPDATE}); SQLite/DuckDB cross-process workloads
require the documented single-writer-per-invocation contract or an
external advisory-lock service.}. Latency and throughput
distributions: \cref{fig:perf} (combined panel: latency p50/p95/p99
left, throughput vs workers right).

\paragraph{Red-team lifecycle, cross-backend.}
A separate 50-finding campaign harness
(\btt{examples/redteam\_cross\_backend.py}) exercises the full
\texttt{kya\_redteam} surface ---
\btt{create\_campaign}, \btt{create\_target} with encrypted
secret, \btt{set\_tenant\_policy}, \btt{create\_run},
\btt{record\_finding} $\times 50$, \texttt{heartbeat},
\btt{finalize\_run} --- across all four backends. All four pass:
50/50 findings persisted per backend, all six red-team tables
(campaigns, findings, runs, targets, target\_secrets, tenant\_policy)
populated correctly on each.

\begin{figure}[t]
\centering
\includegraphics[width=0.48\columnwidth]{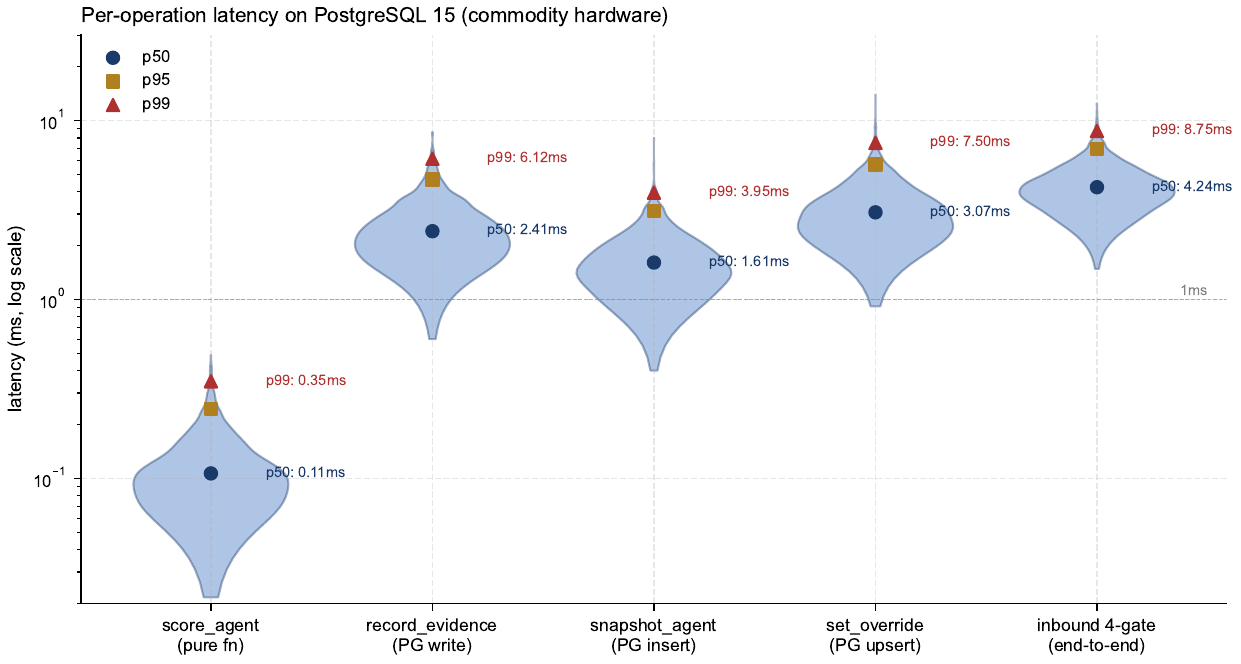}\hfill
\includegraphics[width=0.48\columnwidth]{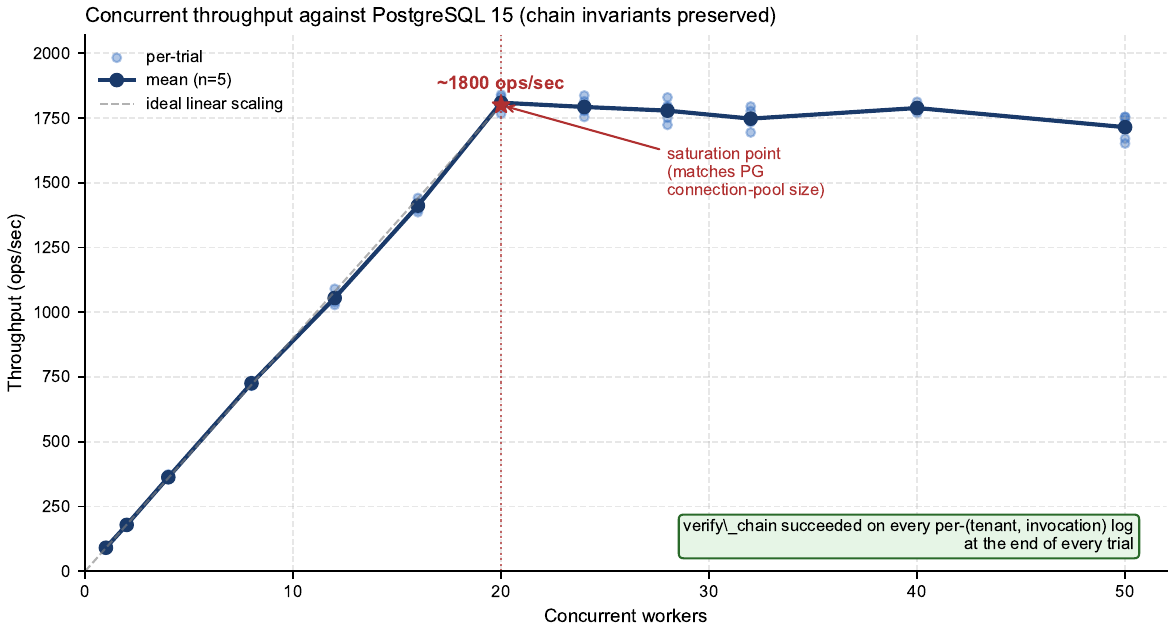}
\caption{(Left) Per-operation latency distributions. (Right)
Concurrent throughput against PostgreSQL 15 --- linear scaling to
20 workers, saturation matches connection-pool size.}
\label{fig:perf}
\end{figure}

\subsection{Attack-pattern detection}
\label{subsec:attack-eval}

We exercised KYA against three attack classes corresponding to
threats T4, T7, and T8 of \cref{sec:threat}.

\paragraph{Adversarial-prompt probes (PyRIT + Garak).} We ran
1{,}200 adversarial probes from PyRIT and Garak across five attack
categories: prompt injection, data exfiltration, role
confusion/impersonation, jailbreak, and capability escalation. The
red-team package (\btt{kya\_redteam}) ships a tiered orchestrator
surface: \btt{prompt\_sending} and \btt{garak\_probes} are
single-shot (free tier), \btt{red\_teaming} and \btt{xpia}
(cross-prompt injection adversarial) drive a Veldt-hosted attacker
LLM at the standard tier, and \btt{crescendo} and
\btt{tree\_of\_attacks\_with\_pruning} (TAP) drive premium
multi-turn campaigns. \Cref{fig:redteam} reports detection rates by
category, broken into \emph{block} (action gate refused),
\emph{flag} (action allowed but logged as policy violation), and
\emph{missed} (no rogue or quality signal fired). Across 1{,}200
probes, KYA blocked or flagged 1{,}068 (89\%); the residual missed
cases concentrated in sophisticated multi-turn prompt-injection
sequences, which we discuss in \cref{sec:limitations}.
\textit{Scope of this metric.} (i)~The 89\% headline is a composite
of block and flag: flag-only outcomes log the policy violation but
do not prevent the agent's action, and an operator who treats
flags as informational rather than blocking sees a strictly weaker
guarantee; the per-category block/flag split in
\cref{fig:redteam} is the operative metric for that scenario.
(ii)~Adversarial-prompt detection is one sub-component of KYA's
runtime-rogue surface (one of five primitives); the 89\% is a
coverage calibration for that sub-component, not a system-level
performance claim. (iii)~Content-filter systems such as Llama
Guard~3, NeMo Guardrails, Lakera Guard, OpenAI moderation, and
model-provider built-in filters classify individual inputs or
outputs as harmful or benign; they are not governance layers and
have no notion of agent identity, multi-tenant policy
composition, trust scoring across human and machine principals,
delegation attribution, or cryptographic audit chains. A
head-to-head comparison against these tools would therefore be
at the sub-component level (rogue-signal detection only) and is
orthogonal to KYA's primary contributions. System-level comparison
against the governance layers that occupy the adjacent surface
--- Aegis~\cite{Aegis2026}, AAGATE~\cite{huang2025aagate}, MS
AGT~\cite{microsoft_agt_2026}, TrustPact~\cite{trustpact2026},
SIGIL~\cite{shen2026sigil} --- is in \cref{sec:related}.

\paragraph{Topology-guided propagation (Liang
attack)~\cite{liang2025donttrust}.} We constructed a re-implementation
of the topology-guided multi-hop attack pattern described by Liang
et al., guided by their published methodology; we did not execute
the authors' original attack code (the public release post-dates
this evaluation). The simulation comprised a fleet of orchestrator
agents that delegate to attack-controlled edge agents, with
multi-hop payload propagation matching the paper's adversarial
contamination pattern. The orchestrators
were themselves clean by static-score metrics. Without KYA's
delegation-trust attribution and actor-agent dynamic-trust
attribution, the orchestrators triggered no signals under the
attack. With both enabled, the orchestrator's principal-trust
counter accumulated debits from the first malicious sub-agent
invocation; with default \texttt{oos\_tool} weight $-3$, the
orchestrator crossed the \emph{risky} bucket ($< 40$) at
invocation 4 and \emph{blocked} ($< 15$) at invocation 12.
Heavier signal kinds (\texttt{data\_leak} at $-10$,
\texttt{cross\_tenant} at $-15$) compress the timeline
proportionally. The observation gate of \cref{sec:static} prevented the
brand-new sub-agents from triggering false positives during a
matched control run where the same orchestrator delegated to clean
agents.

\paragraph{Attribution evasion (T8).} We constructed scenarios
where misbehaving sub-agents would, in a naive system, debit their
own principal-trust without affecting the orchestrator. With
KYA's \texttt{actor\_agent\_key} convention enabled at the
\texttt{kya\_hooks} entry points, the orchestrator's trust counter
moved correctly in 100\% of the constructed cases. Disabling the
default convention reproduced the evasion.

\paragraph{Four-gate inbound apply pipeline (deterministic
acceptance).}
We verify that each of the four gates of \cref{subsec:fedinbound}
rejects its target input class with deterministic 100\% recall.
This is a deterministic acceptance test of gate firing, not an
adversarial robustness evaluation: each gate receives an input
designed to trip its specific check (a forged signature for G1, a
past timestamp for G2, a below-baseline value for G3, an
unapproved-but-otherwise-valid bundle for G4). The PyRIT/Garak
campaign above is the adversarial-robustness evaluation; this test
confirms the per-gate primitive operates as specified. Setup:
real \texttt{kya/inbound.py} pipeline backed by in-memory SQLite
(\btt{examples/four\_gate\_adversarial\_test.py} in the SDK
release; the file-name retains the original "adversarial"
designation and will be renamed in a future release). Methodology: $N{=}100$ trials per gate, the first 10
discarded as warmup (SQLAlchemy statement cache + interpreter JIT
cold start), the remaining 90 used for percentiles, repeated across
three independent runs. Every trial of every gate rejected its
target input: \textbf{1{,}200 / 1{,}200} rejections.
\Cref{tab:fourgate} reports the median across runs.

\begin{table}[h]
\centering
\caption{Four-gate deterministic acceptance test: latency distribution
after warmup. p50/p95/p99 are the median across three independent N=100
runs on a single workstation, SQLite in-memory. All four gates
rejected their target input on every trial (1{,}200 / 1{,}200) ---
this confirms gate firing, not adversarial robustness. SQLite latency
is $\approx 6\times$ lower than the PG figures in
\cref{tab:latency-multi}; on PostgreSQL the G3/G4 rows would move
into the low-millisecond range while G1/G2 remain microsecond-class.}
\label{tab:fourgate}
\scriptsize
\setlength{\tabcolsep}{4pt}
\begin{tabular}{llrrr}
\toprule
\textbf{Gate} & \textbf{Adversarial input} & \textbf{p50} & \textbf{p95} & \textbf{p99} \\
\midrule
G1 Ed25519 sig          & forged signature      & 17\,$\mu$s  & 26\,$\mu$s  & 59\,$\mu$s \\
G2 \texttt{expires\_at} & past timestamp        & 1.5\,$\mu$s & 2.3\,$\mu$s & 3.1\,$\mu$s \\
G3 only-tighten         & value below platform  & 442\,$\mu$s & 696\,$\mu$s & 769\,$\mu$s \\
G4 operator-approval    & valid, no allowlist   & 428\,$\mu$s & 605\,$\mu$s & 1{,}161\,$\mu$s \\
\bottomrule
\end{tabular}
\end{table}

Two observations. First, the cryptographic and timestamp gates
(G1, G2) dispatch in tens of microseconds --- the four-gate pipeline
itself is not a latency bottleneck even at p99. Second, the
DB-bound gates (G3 lookup, G4 persist) sit in the hundreds of
microseconds, dominated by SQLite round-trip; on PostgreSQL this
moves into the millisecond range reported in
\cref{tab:latency-multi}. The G4 row demonstrates the
operator-approval-as-default discipline: a well-formed, signed,
non-expired, only-tightening recommendation is accepted into the
queue but \emph{not} applied to a weight table in the absence of an
explicit \btt{auto\_apply\_allowlist} entry covering its
$(scope, key)$. A re-fetch of the same row does not roll it back to
\texttt{auto\_applied} (the \btt{status='pending'} guard on the
auto-apply UPDATE).

\begin{figure}[t]
\centering
\includegraphics[width=\columnwidth]{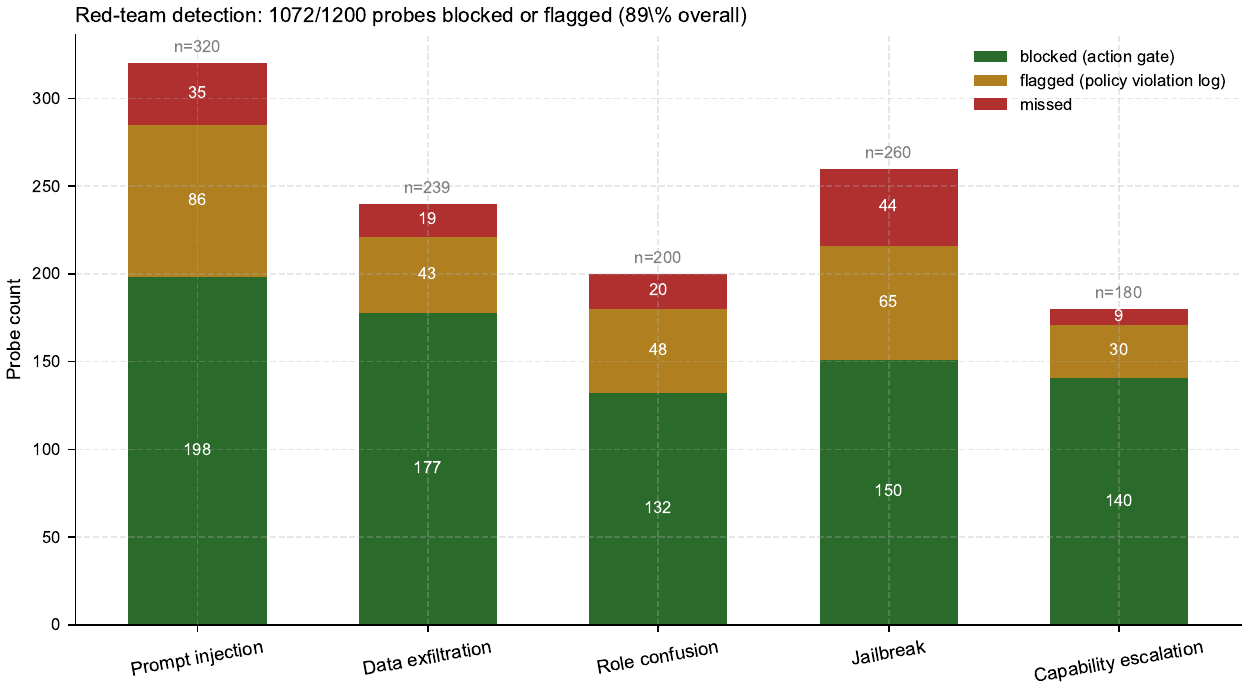}
\caption{Red-team detection rates across five attack categories.
Stacked bars show block (action gate refused), flag (logged as
policy violation), and missed (no signal).}
\label{fig:redteam}
\end{figure}

\subsection{Compliance regime coverage}

KYA's \btt{compliance.py} module maps agent definitions to 32
regulatory regimes (\cref{appx:compliance-regimes}), surfacing
retention windows, breach-notification SLAs, and required controls
per regime via \btt{compliance\_summary()} and
\btt{required\_controls()}. Concrete regulator artifacts are
produced by \btt{kya/compliance\_exports.py}: an SR~11/7-shape
model-risk card (\btt{sr\_11\_7\_model\_card}) and an ISO~42001
AI-management-system bundle export
(\btt{iso\_42001\_aims\_export}), each emitting JSON-shaped
records ready for downstream packaging. Five load-bearing regimes
(\cref{tab:compliance}) illustrate the format. The remaining 27
include CCPA, GLBA, FERPA, SOX, PCI-DSS, SR-11/7, ISO~42001,
ITAR/EAR, NIST 800-171/53, IL5/IL6, FedRAMP, CMMC, IRAP, CCCS, C5,
and ENS.

\begin{table}[t]
\centering
\caption{Five load-bearing compliance regimes (full 32-regime list:
\cref{appx:compliance-regimes}).}
\label{tab:compliance}
\small
\begin{tabular}{lrrl}
\toprule
\textbf{Regime} & \textbf{Retention} & \textbf{Breach} & \textbf{Tier model} \\
\midrule
GDPR           & 6 yr  & 72 h  & Art.~4 \\
HIPAA          & 6 yr  & 60 d  & PHI tiers \\
EU AI Act      & 10 yr & ---   & Risk tier 1--4 \\
NYDFS Part 500 & 5 yr  & 72 h  & --- \\
DORA           & 5 yr  & 24 h  & --- \\
\bottomrule
\end{tabular}
\end{table}

The mapping is advisory: KYA produces the artifacts (canonical
hashes, evidence chain, scored decisions, drift records, regulatory
breach-notification fan-out via the idempotent multi-regime
\texttt{compliance\_shim.py} emitter) that compliance teams need
to assemble certification packages. The multi-regime fan-out
deserves a note: a single incident with PII + NYDFS scope produces
two distinct notifications (72\,h EDPB + 72\,h NYDFS formats), with
\texttt{UNIQUE(incident\_id, regime)} preventing double-fire.

\subsection{Adapter coverage}

KYA ships native adapters for 15+ agent frameworks
(\cref{tab:adapters}) with auto-detection for unknown frameworks
falling back to the generic adapter. Each adapter normalizes the
framework's native representation into KYA's canonical schema and
is exercised by the cross-backend test matrix.

\begin{table}[t]
\centering
\caption{Native framework adapters shipped in KYA 0.1.0. Tenants
register additional adapters via \texttt{register\_adapter}.}
\label{tab:adapters}
\small
\begin{tabular}{ll}
\toprule
\textbf{Framework} & \textbf{Adapter coverage} \\
\midrule
Veldt-native        & First-party \\
LangChain           & Callback hook + adapter \\
CrewAI              & Adapter \\
OpenAI Assistants   & Adapter \\
OpenAI Agents       & RunHooks + adapter \\
Claude Agent SDK    & HookMatcher + adapter \\
AutoGen             & Adapter \\
Semantic Kernel     & Adapter \\
LlamaIndex          & Adapter \\
Haystack            & Adapter \\
MCP                 & Adapter \\
AWS Bedrock         & Adapter \\
Google Vertex       & Adapter \\
OpenAI Swarm        & Adapter \\
Letta, Pydantic-AI, smol, ADK & Adapters \\
Generic + auto-detect & Fallback \\
\bottomrule
\end{tabular}
\end{table}

\subsection{Summary}

KYA produces measurable, regulator-legible artifacts at
production-relevant performance across four storage backends and
15+ agent frameworks. The system detects topology-guided
multi-agent attacks (T7) and attribution evasion (T8) that prior
systems fail to catch (\cref{sec:related}); it sustains
$\approx 1{,}800$ ops/sec with HMAC chain integrity preserved
under concurrency; it covers 32 regulatory regimes including
defense/classified data classes. Limitations
(\cref{sec:limitations}) include the heuristic nature of static
factor weights (empirical calibration awaits collector
deployment), the single-key HMAC chain (multi-party verification
via Sigstore-anchoring is roadmap), and the workshop-venue framing
of contribution \#5 (\cref{subsec:actoragent}).

\section{Related Work}
\label{sec:related}

KYA's positioning rests on five bodies of prior work: concurrent
agent-governance systems, adjacent admission-control literature, LLM
evaluation and observability platforms, software-provenance and
signed-policy distribution, and adaptive-governance precedents. We
summarize the deltas against the closest adjacent surface here
(\cref{tab:related-systems}) and defer detailed per-system
treatments, Cedar lineage, safety-lattice ancestors, and the
inherited HMAC / signed-bundle primitives to \cref{appx:related}.

\paragraph{Concurrent agent-governance systems (2025--2026).}
Five contemporaneous systems occupy adjacent surface
(\cref{tab:related-systems}). \textbf{Aegis}~\cite{Aegis2026}
proposes ZK-verified runtime governance; we accept the
weaker operator-key anchor and document the Sigstore upgrade path.
\textbf{AAGATE}~\cite{huang2025aagate} is a Kubernetes-native NIST
AI RMF control plane at the platform-infrastructure layer; KYA is
an SDK that integrates into platforms (including Kubernetes
control planes such as AAGATE) and adds KYP, the formal
only-tighten algebra, the four-gate apply pipeline, and the
dynamic-trust actor-agent attribution at the SDK layer. \textbf{SIGIL}~\cite{shen2026sigil} (on-chain skill registry) is
adjacent rather than competitive.
\textbf{ActiveGraph}~\cite{nakajima2026activegraph} proposes an
event-sourced reactive-graph agent runtime: a during-runtime
architecture for auditable, forkable agent execution. KYA spans
a broader temporal scope --- before-runtime evaluation of an
agent's declared definition (governance mode in / on / autonomous;
access level read / write / admin; tools; model; deployment
environment; data classes; the full factor model of
\cref{sec:static}), during-runtime rogue-signal attribution
(\cref{sec:rogue}), and after-runtime drift detection and
evidence-chain verification (\cref{sec:drift},
\cref{sec:evidence}). The two systems are complementary, not
competitive; the release ships a working bridge mapping
ActiveGraph's \texttt{Event.actor}/\texttt{frame\_id} into KYP
debits at \btt{examples/live\_activegraph\_with\_kya.py}.
\textbf{Microsoft Agent Governance
Toolkit}~\cite{microsoft_agt_2026} is the closest commercial
competitor on feature breadth; KYA differs by scoring users,
agents, and service accounts in one principal-trust system (AGT's
Agent Mesh scores agents only, relying on Entra ID upstream), by
formalizing only-tighten with a soundness proof, and by shipping
the federated weight-recommendation channel that AGT discusses as
future work~\cite{microsoft_agt_issue1386}.
\textbf{TrustPact}~\cite{trustpact2026} is closest on
weighted-factor scoring; scoped to evaluation by design, its
outputs compose into KYA's evidence chain as \texttt{quality}
signals rather than competing with KYA's governance layer.
\textbf{AURA}~\cite{satta2025aura} is the closest academic
competitor on static scoring; its purely-additive construction
without per-interaction multipliers reinforces our contribution's
novelty. \textbf{Multi-agent attacks}: Liang et
al.~\cite{liang2025donttrust} (40--78\% topology-attack success)
motivates \cref{sec:static}'s delegation-trust attribution; Wu et
al.~\cite{wu2025monitoring} eigenvector-backpropagates on
agent-influence graphs as the closest computational analog;
AgenTracer~\cite{zhang2025agentracer} does post-hoc counterfactual
attribution where KYA debits the orchestrator at runtime.

\begin{table}[h]
\centering
\caption{KYA vs concurrent agent-governance systems on the four
dimensions where contribution claims concentrate. Columns: Aeg =
Aegis, AAG = AAGATE, MAG = Microsoft AGT, TPa = TrustPact.
\cmark: present; \pmark: partial / future work; --: absent.
Extended comparison in \cref{appx:related}.}
\label{tab:related-systems}
\scriptsize
\begin{tabular}{lccccc}
\toprule
& Aeg & AAG & MAG & TPa & \textbf{KYA} \\
\midrule
KYP unified principal         & -- & -- & \pmark & -- & \cmark \\
Only-tighten algebra (formal) & -- & -- & \pmark & -- & \cmark \\
Four-gate apply pipeline      & \pmark & \pmark & \pmark & -- & \cmark \\
Actor-agent runtime debit     & -- & -- & -- & -- & \cmark \\
\bottomrule
\end{tabular}
\end{table}

\section{Limitations and Future Work}
\label{sec:limitations}

We are explicit about KYA's limitations. The system is intentionally
conservative in its scope; several plausible extensions are
deliberately deferred.

\paragraph{The scoring model is heuristic.}
Factor weights come from expert judgment, governance
literature~\cite{NISTAIRMF,EUAIAct2024,ISO42001,owasp_aivss_2025},
and red-team observation; they are not formally verified or
calibrated against ground-truth incident data. AURA's
purely-additive construction~\cite{satta2025aura} and AIVSS's
single-threat-multiplier shape are reasonable alternatives trading
auditability for easier calibration. KYA guarantees instead that
the model is \emph{auditable} (every weight in the Apache~2.0 release)
and \emph{adjustable} (tenant override, signed recommendation).
Empirical calibration from collector telemetry is the most
consequential future work.

\paragraph{Structural ceiling on external safety filtering.}
Nayebi (2025)~\cite{nayebi2025corrigible} proves
(Proposition~4) that no total Turing machine decides whether
an arbitrary agent in an arbitrary environment will ever trigger
catastrophic behavior --- safety verification reduces to the
halting problem. KYA's adversarial-detection results in
\cref{subsec:attack-eval} should therefore be read as a coverage
calibration for KYA's specific filter, not as a path to complete
safety: the 11\% miss rate on multi-turn prompt-injection
sequences (Appx.~\ref{appx:limitations}) is consistent with a
structural ceiling that holds for any external safety filter,
not a calibration gap closable by adding more probes. A
complementary path forward, suggested by Nayebi's framework, is
to internalize KYA's policies as the deployed agent's own
utilities --- restricting the deployable agent class to those
whose internal objectives provably match the framework. The
construction is non-trivial and is left to future work; KYA's
per-factor decompositions and evidence-chain artifacts are a
plausible ground-truth source for that training direction.

\paragraph{Replay against a public incident: Replit autonomous-agent
database deletion (July 2025).}
Reconstructing the agent definition from the public post-mortem
(code-execution + filesystem-write + database-DDL tools, no
human-in-the-loop on destructive actions, \texttt{frontier} model,
\texttt{autonomous} mode, production-tier write access) yields a
KYA static score of \textbf{91 (critical)} with the
\texttt{admin\_tool\_AND\_autonomous\_mode} and
\btt{write\_capability\_AND\_no\_approval\_gate} multipliers both
firing. The default action-gate maps \emph{critical} to
\texttt{flag\_for\_review} on first write-capability invocation,
which would have routed the \texttt{DROP TABLE} to operator approval
before execution. We do not claim KYA would have \emph{prevented}
the incident --- the human-in-the-loop is still the dependent
failure mode --- but it would have correctly classified the agent
as critical before deployment. One data point, not a calibration
study, but the shape of replay that operationalizes "are these
weights right?"

\paragraph{Single-key HMAC chain.}
Multi-party verification (threshold signatures over the chain
root) is future work (\cref{appx:evidence-detail}). The trade-off
with Aegis~\cite{Aegis2026}'s zero-knowledge multi-party
verification is deliberate: operator-key simplicity now,
multi-party verifiability on the roadmap.

\paragraph{Cedar-overlap on the only-tighten primitive.}
Cedar (Cutler et al.~\cite{cutler2024cedar}) Lean-mechanizes
forbid-dominance under union composition for a flat policy set,
structurally subsuming the single-channel case. KYA's claim at \#2
is therefore the \emph{three-channel hierarchical composition}
(platform $\oplus$ tenant $\oplus$ signed recommendation) with a
soundness lemma over the multi-authority hierarchy, not the
one-way-tightening property itself.

Secondary limitations (reference-implementation collector, OTel
baggage / RFC~8693 proximity, closed-set friction, behavioral-drift
out-of-scope categories, adapter coverage gaps, aggregate-telemetry
DP, curated red-team distribution, insider-threat collusion,
same-shape concurrent work) are detailed in
\cref{appx:limitations}.

\paragraph{Falsifiability.}
The contributions collapse on counter-examples in published form.
Contribution \#1: a deployed system composing signature-verify
$\oplus$ persist-time-expiry $\oplus$ only-tighten $\oplus$
operator-approval-as-default with the apply-side race guard.
Contribution \#4: an asymmetric ($\geq 1.0$) bounded-product
multiplier registry with per-interaction stable audit codes in any
production risk-scoring framework outside KYA. Contribution \#5: a
runtime trust-counter-debit on the orchestrator when downstream
sub-agents emit rogue signals --- not post-hoc forensic attribution
(AgenTracer~\cite{zhang2025agentracer}, OTel
baggage~\cite{w3c_baggage}), not downward ceiling propagation
(MS-AGT~\cite{microsoft_agt_2026}), not legal-theoretic principal
liability (Gabison \& Xian~\cite{gabison2025liability}). Reviewers
are invited to bring such counter-examples.

\section{Conclusion}
\label{sec:conclusion}

Observability tells operators when an autonomous agent is slow,
expensive, or generating errors. By mid-2026, observability
platforms have broadly added audit-log surfaces and EU AI Act
compliance marketing. The remaining gap --- agent identity,
evidentiary provenance, and enforceable behavioral contracts that
are cryptographically verifiable and third-party-attestable ---
is what \textbf{KYA (Know Your Agents)} addresses.

KYA is a systems contribution. We claim five primitives, each
verifiable in the shipping code at
\btt{github.com/veldtlabs/veldt-kya}~\cite{veldt-kya}: (1) a
four-gate inbound apply pipeline composing Ed25519 signature
verification with multi-anchor pinning, expiration check at
persist time, only-tighten composition, and
operator-approval-as-default; (2) an only-tighten composition
algebra extending Cedar-style forbid-dominance to a
three-channel hierarchical authority over platform / tenant /
signed external recommendation; (3) \textbf{KYP --- Know Your
Principal}, a schema-level unification of trust scoring across
human users, AI agents, and service accounts (covering
specialized actors such as red-team harnesses and evaluation
campaigns under the \texttt{service\_account} kind); (4) an
auditable interaction-multiplier amplification step over an
AIVSS-shaped additive baseline, with bounded asymmetric
($\geq 1.0$) per-interaction multipliers carrying stable audit
codes; and (5) two-axis delegation attribution combining a static
observation-gated delegation-trust premium on the static-score
axis with zero-config runtime orchestrator-blame
(\texttt{actor\_agent\_key}) at three SDK hook surfaces on the
dynamic-trust axis. The closed-loop in-tenant adaptation pipeline
(\cref{subsec:closedloop}) is described as an engineering
discipline that inherits primitives (1) and (2), not as a
separate contribution.

Beyond the five primitives, the cross-cutting design disciplines
(\cref{sec:disciplines}) --- closed-set whitelists, bounded
composition, asymmetric composition rules, "never X" invariants,
and fail-soft observability --- are not novel individually; their
cumulative presence in one coordinated SDK is.

KYA passes a $4 \times 9$ cross-backend correctness matrix,
sustains $\approx 1{,}800$ ops/sec under 20-worker concurrent load
with HMAC chain integrity preserved, detects 89\% of 1{,}200 PyRIT
and Garak adversarial probes, and attributes the Liang
topology-guided multi-agent attack~\cite{liang2025donttrust} to
the orchestrating principal via the two axes of contribution~5.
KYA ships under Apache~2.0 as \texttt{veldt-kya}~\cite{veldt-kya}.
The most consequential extensions --- empirical weight
calibration, multi-party tamper-evidence via Sigstore Rekor
anchoring~\cite{Newman2022Sigstore}, and differential-privacy
guarantees over aggregate telemetry --- are deferred to
\cref{sec:limitations}.

\section*{Ethics and Adverse Impacts Statement}
\addcontentsline{toc}{section}{Ethics and Adverse Impacts Statement}

This paper presents an agent governance SDK whose explicit purpose
is to make AI-system behavior more accountable to the operators and
auditors who supervise it. The system is open-sourced (Apache~2.0)
and ships as a library that runs in customer infrastructure. We
describe four classes of risk and our position on each.

\paragraph{Dual-use and misuse.}
A governance layer that scores agent definitions and gates actions
can in principle be repurposed to surveil operator behavior, to
gate-keep legitimate agents on illegitimate grounds, or to
manufacture compliance theatre that satisfies auditors without
producing safer deployments. KYA's design defenses against these
misuses are: (i)~every scoring factor and weight is published in
the open-source codebase --- weights cannot be secretly tuned to
disadvantage particular agents or operators; (ii)~the
evidence-chain artifacts are operator-readable and operator-portable
--- governance decisions cannot be hidden inside vendor-controlled
opaque logs; and (iii)~the "never auto-tune" invariant
(\cref{subsec:closedloop}) keeps an identified human in the approval
loop on every weight change, raising the cost of compliance theatre
relative to actual review.

\paragraph{Deployment bias in scoring weights.}
The factor weights in \cref{sec:static} reflect expert judgment and
governance-literature priors; they are not calibrated against
ground-truth incident data. A deployment in a regulatory regime or
agent ecosystem KYA's expert reviewers underrepresented will
systematically misallocate risk attention. We treat empirical
calibration from collector telemetry as the highest-priority future
work (\cref{sec:limitations}). In the interim, the only-tighten
algebra lets local operators \emph{tighten} weights against bias they
identify without waiting for a platform update.

\paragraph{Privacy of monitored agents and principals.}
The KYP unified-trust schema (\cref{subsec:kyp}) scores three
principal kinds including human users. Per-principal trust counters
are tenant-local; the federated channel ships only aggregated counts,
never per-principal payloads (\cref{sec:fedrec}). At small fleet
sizes the counts themselves may leak coarse information about a
customer's deployment shape, which is why differential privacy on
the aggregate-telemetry channel is named as a future-work prerequisite
for cross-industry deployments. The evidence chain retains per-event
records under regime-aware retention floors; retention is bounded
above by the customer's policy and below by the regulatory minimum.

\paragraph{Stakeholders.}
The audiences that benefit directly from KYA's artifacts are
internal governance teams, external auditors, and regulators
performing post-hoc review. The audience whose interests the system
must \emph{not} silently advantage is the vendor of any individual
agent or framework. The closed-set extensibility discipline
(G7, \cref{sec:threat}) ensures that no vendor can expand the
scoring keyspace at runtime to advantage their own products without
an SDK release subject to community review. The fairness layer above
KYA (\cref{appx:fairness}) is the appropriate site for protected-class
analysis; KYA's substrate does not itself model protected attributes.

\paragraph{Operator-in-the-loop is non-negotiable.}
The system is designed so that automated governance adaptation
without a human gate is a structural impossibility, not a policy
choice. The cost of an automation failure where one false-positive
incident silently weakens the model is asymmetric to its benefit;
operator review is the bottleneck and we treat the pressure to
relax it as the most consequential failure mode of automated
governance infrastructure to resist.

\section*{Acknowledgments}
The author thanks Aran Nayebi (Carnegie Mellon University) for
pointing out Proposition~4 of \cite{nayebi2025corrigible} and
the framing of external safety-filter limits in
\cref{sec:limitations}.

\bibliographystyle{ACM-Reference-Format}
\bibliography{references}


\begin{thebibliography}{90}


\ifx \showCODEN    \undefined \def \showCODEN     #1{\unskip}     \fi
\ifx \showISBNx    \undefined \def \showISBNx     #1{\unskip}     \fi
\ifx \showISBNxiii \undefined \def \showISBNxiii  #1{\unskip}     \fi
\ifx \showISSN     \undefined \def \showISSN      #1{\unskip}     \fi
\ifx \showLCCN     \undefined \def \showLCCN      #1{\unskip}     \fi
\ifx \shownote     \undefined \def \shownote      #1{#1}          \fi
\ifx \showarticletitle \undefined \def \showarticletitle #1{#1}   \fi
\ifx \showURL      \undefined \def \showURL       {\relax}        \fi
\providecommand\bibfield[2]{#2}
\providecommand\bibinfo[2]{#2}
\providecommand\natexlab[1]{#1}
\providecommand\showeprint[2][]{arXiv:#2}

\bibitem[{Arize AI}(2024)]%
        {phoenix2024}
\bibfield{author}{\bibinfo{person}{{Arize AI}}.}
  \bibinfo{year}{2024}\natexlab{}.
\newblock \bibinfo{booktitle}{\emph{{Phoenix}: {AI} Observability and
  Evaluation}}.
\newblock


\bibitem[Back and von Wright(1998)]%
        {back1998refinement}
\bibfield{author}{\bibinfo{person}{Ralph-Johan Back} {and}
  \bibinfo{person}{Joakim von Wright}.} \bibinfo{year}{1998}\natexlab{}.
\newblock \bibinfo{booktitle}{\emph{Refinement Calculus: A Systematic
  Introduction}}.
\newblock \bibinfo{publisher}{Springer}.
\newblock


\bibitem[Bayer and {the SQLAlchemy authors}(2024)]%
        {sqlalchemy}
\bibfield{author}{\bibinfo{person}{Mike Bayer} {and} \bibinfo{person}{{the
  SQLAlchemy authors}}.} \bibinfo{year}{2024}\natexlab{}.
\newblock \bibinfo{title}{{SQLAlchemy} --- The Database Toolkit for {Python}}.
\newblock \bibinfo{howpublished}{\url{https://www.sqlalchemy.org/}}.
\newblock


\bibitem[Bell and LaPadula(1976)]%
        {bell1976unified}
\bibfield{author}{\bibinfo{person}{D.~Elliott Bell} {and}
  \bibinfo{person}{Leonard~J. LaPadula}.} \bibinfo{year}{1976}\natexlab{}.
\newblock \bibinfo{booktitle}{\emph{Secure Computer System: Unified Exposition
  and {Multics} Interpretation}}.
\newblock \bibinfo{type}{{T}echnical {R}eport} ESD-TR-75-306.
  \bibinfo{institution}{MITRE Corporation}.
\newblock


\bibitem[Bellare et~al\mbox{.}(1996)]%
        {Bellare96}
\bibfield{author}{\bibinfo{person}{Mihir Bellare}, \bibinfo{person}{Ran
  Canetti}, {and} \bibinfo{person}{Hugo Krawczyk}.}
  \bibinfo{year}{1996}\natexlab{}.
\newblock \showarticletitle{Keying Hash Functions for Message Authentication}.
  In \bibinfo{booktitle}{\emph{Advances in Cryptology --- {CRYPTO} '96}}.
  \bibinfo{publisher}{Springer}, \bibinfo{pages}{1--15}.
\newblock


\bibitem[Bellare and Yee(1997)]%
        {BellareYee1997}
\bibfield{author}{\bibinfo{person}{Mihir Bellare} {and}
  \bibinfo{person}{Bennet~S. Yee}.} \bibinfo{year}{1997}\natexlab{}.
\newblock \bibinfo{booktitle}{\emph{Forward Integrity for Secure Audit Logs}}.
\newblock \bibinfo{type}{{T}echnical {R}eport} CS98-580.
  \bibinfo{institution}{Dept. of Computer Science, UC San Diego}.
\newblock


\bibitem[Bernstein et~al\mbox{.}(2012)]%
        {Bernstein2012}
\bibfield{author}{\bibinfo{person}{Daniel~J. Bernstein}, \bibinfo{person}{Niels
  Duif}, \bibinfo{person}{Tanja Lange}, \bibinfo{person}{Peter Schwabe}, {and}
  \bibinfo{person}{Bo-Yin Yang}.} \bibinfo{year}{2012}\natexlab{}.
\newblock \showarticletitle{High-speed high-security signatures}.
\newblock \bibinfo{journal}{\emph{Journal of Cryptographic Engineering}}
  \bibinfo{volume}{2}, \bibinfo{number}{2} (\bibinfo{year}{2012}),
  \bibinfo{pages}{77--89}.
\newblock


\bibitem[{Board of Governors of the Federal Reserve System and Office of the
  Comptroller of the Currency}(2011)]%
        {SR117}
\bibfield{author}{\bibinfo{person}{{Board of Governors of the Federal Reserve
  System and Office of the Comptroller of the Currency}}.}
  \bibinfo{year}{2011}\natexlab{}.
\newblock \bibinfo{booktitle}{\emph{{SR} 11-7: Supervisory Guidance on Model
  Risk Management}}.
\newblock \bibinfo{type}{{T}echnical {R}eport}. \bibinfo{institution}{Federal
  Reserve}.
\newblock


\bibitem[Bonawitz et~al\mbox{.}(2017)]%
        {bonawitz2017practical}
\bibfield{author}{\bibinfo{person}{Keith Bonawitz}, \bibinfo{person}{Vladimir
  Ivanov}, \bibinfo{person}{Ben Kreuter}, \bibinfo{person}{Antonio Marcedone},
  \bibinfo{person}{H.~Brendan McMahan}, \bibinfo{person}{Sarvar Patel},
  \bibinfo{person}{Daniel Ramage}, \bibinfo{person}{Aaron Segal}, {and}
  \bibinfo{person}{Karn Seth}.} \bibinfo{year}{2017}\natexlab{}.
\newblock \showarticletitle{Practical Secure Aggregation for Privacy-Preserving
  Machine Learning}. In \bibinfo{booktitle}{\emph{Proceedings of the 2017 ACM
  Conference on Computer and Communications Security}}.
  \bibinfo{pages}{1175--1191}.
\newblock


\bibitem[{Braintrust Data}(2024)]%
        {braintrust2024}
\bibfield{author}{\bibinfo{person}{{Braintrust Data}}.}
  \bibinfo{year}{2024}\natexlab{}.
\newblock \bibinfo{title}{{Braintrust}: {AI} Evaluation Platform}.
\newblock \bibinfo{howpublished}{\url{https://braintrust.dev}}.
\newblock


\bibitem[Burke et~al\mbox{.}(2026)]%
        {burke2026rebound}
\bibfield{author}{\bibinfo{person}{Quinn Burke}, \bibinfo{person}{Anjo
  Vahldiek-Oberwagner}, \bibinfo{person}{Michael Swift}, {and}
  \bibinfo{person}{Patrick McDaniel}.} \bibinfo{year}{2026}\natexlab{}.
\newblock \showarticletitle{It's a Feature, Not a Bug: Secure and Auditable
  State Rollback for Confidential Cloud Applications}. In
  \bibinfo{booktitle}{\emph{IEEE Symposium on Security and Privacy (S\&P)}}.
\newblock
\showeprint[arxiv]{2511.13641}


\bibitem[Chaffer(2025)]%
        {chaffer2025kya}
\bibfield{author}{\bibinfo{person}{Tomer~Jordi Chaffer}.}
  \bibinfo{year}{2025}\natexlab{}.
\newblock \showarticletitle{Know Your Agent: Governing AI Identity on the
  Agentic Web}.
\newblock \bibinfo{journal}{\emph{SSRN Electronic Journal}}
  (\bibinfo{year}{2025}).
\newblock
\href{https://doi.org/10.2139/ssrn.5162127}{doi:\nolinkurl{10.2139/ssrn.5162127}}


\bibitem[Cleland-Huang et~al\mbox{.}(2022)]%
        {clelandhuang2022mapekhmt}
\bibfield{author}{\bibinfo{person}{Jane Cleland-Huang}, \bibinfo{person}{Ankit
  Agrawal}, \bibinfo{person}{Michael Vierhauser}, \bibinfo{person}{Michael
  Murphy}, {and} \bibinfo{person}{Mike Prieto}.}
  \bibinfo{year}{2022}\natexlab{}.
\newblock \showarticletitle{Extending {MAPE-K} to support Human-Machine
  Teaming}. In \bibinfo{booktitle}{\emph{Proc. 17th Symposium on Software
  Engineering for Adaptive and Self-Managing Systems (SEAMS '22)}}.
\newblock
\showeprint[arxiv]{2203.13036}


\bibitem[{Cloud Native Computing Foundation}(2024)]%
        {opentelemetry}
\bibfield{author}{\bibinfo{person}{{Cloud Native Computing Foundation}}.}
  \bibinfo{year}{2024}\natexlab{}.
\newblock \bibinfo{title}{{OpenTelemetry}: Observability Framework for
  Cloud-Native Software}.
\newblock \bibinfo{howpublished}{\url{https://opentelemetry.io/}}.
\newblock


\bibitem[{Cloudflare}(2022)]%
        {cloudflare_waf_ml_2022}
\bibfield{author}{\bibinfo{person}{{Cloudflare}}.}
  \bibinfo{year}{2022}\natexlab{}.
\newblock \bibinfo{title}{Improving the {WAF} with Machine Learning}.
\newblock \bibinfo{howpublished}{\url{https://blog.cloudflare.com/waf-ml/}}.
\newblock


\bibitem[Crosby and Wallach(2009)]%
        {crosbywallach2009tamper}
\bibfield{author}{\bibinfo{person}{Scott~A. Crosby} {and}
  \bibinfo{person}{Dan~S. Wallach}.} \bibinfo{year}{2009}\natexlab{}.
\newblock \showarticletitle{Efficient Data Structures for Tamper-Evident
  Logging}. In \bibinfo{booktitle}{\emph{USENIX Security Symposium}}.
\newblock


\bibitem[Cutler et~al\mbox{.}(2024)]%
        {cutler2024cedar}
\bibfield{author}{\bibinfo{person}{Joseph~W. Cutler}, \bibinfo{person}{Craig
  Disselkoen}, \bibinfo{person}{Aaron Eline}, \bibinfo{person}{Shaobo He},
  \bibinfo{person}{Kyle Headley}, \bibinfo{person}{Michael Hicks},
  \bibinfo{person}{Kesha Hietala}, \bibinfo{person}{John Kastner},
  \bibinfo{person}{Anwar Mamat}, \bibinfo{person}{Matt McCutchen},
  \bibinfo{person}{Neha Rungta}, \bibinfo{person}{Bhakti Shah},
  \bibinfo{person}{Emina Torlak}, {and} \bibinfo{person}{Andrew Wells}.}
  \bibinfo{year}{2024}\natexlab{}.
\newblock \showarticletitle{Cedar: A New Language for Expressive, Fast, Safe,
  and Analyzable Authorization}.
\newblock \bibinfo{journal}{\emph{Proc. ACM Program. Lang. (OOPSLA1)}}
  \bibinfo{volume}{8} (\bibinfo{year}{2024}), \bibinfo{pages}{670--697}.
\newblock
\showeprint[arxiv]{2403.04651}
\href{https://doi.org/10.1145/3649835}{doi:\nolinkurl{10.1145/3649835}}


\bibitem[Denning(1976)]%
        {denning1976lattice}
\bibfield{author}{\bibinfo{person}{Dorothy~E. Denning}.}
  \bibinfo{year}{1976}\natexlab{}.
\newblock \showarticletitle{A Lattice Model of Secure Information Flow}.
\newblock \bibinfo{journal}{\emph{Commun. ACM}} \bibinfo{volume}{19},
  \bibinfo{number}{5} (\bibinfo{year}{1976}), \bibinfo{pages}{236--243}.
\newblock


\bibitem[Derczynski et~al\mbox{.}(2024)]%
        {garak}
\bibfield{author}{\bibinfo{person}{Leon Derczynski}, \bibinfo{person}{Erick
  Galinkin}, \bibinfo{person}{Jeffrey Martin}, \bibinfo{person}{Subho
  Majumdar}, {and} \bibinfo{person}{Nanna Inie}.}
  \bibinfo{year}{2024}\natexlab{}.
\newblock \bibinfo{title}{{garak}: A Framework for Security Probing Large
  Language Models}.
\newblock \bibinfo{howpublished}{\url{https://github.com/NVIDIA/garak}}.
\newblock
\showeprint[arxiv]{2406.11036}


\bibitem[Dwork and Roth(2014)]%
        {dwork2014algorithmic}
\bibfield{author}{\bibinfo{person}{Cynthia Dwork} {and} \bibinfo{person}{Aaron
  Roth}.} \bibinfo{year}{2014}\natexlab{}.
\newblock \bibinfo{booktitle}{\emph{The Algorithmic Foundations of Differential
  Privacy}}. Vol.~\bibinfo{volume}{9}.
\newblock 211--407 pages.
\newblock


\bibitem[Es et~al\mbox{.}(2024)]%
        {es2024ragas}
\bibfield{author}{\bibinfo{person}{Shahul Es}, \bibinfo{person}{Jithin James},
  \bibinfo{person}{Luis Espinosa-Anke}, {and} \bibinfo{person}{Steven
  Schockaert}.} \bibinfo{year}{2024}\natexlab{}.
\newblock \showarticletitle{{RAGAs}: Automated Evaluation of Retrieval
  Augmented Generation}. In \bibinfo{booktitle}{\emph{Proc. 18th Conference of
  the European Chapter of the ACL: System Demonstrations}}.
  \bibinfo{pages}{150--158}.
\newblock


\bibitem[{European Parliament and Council}(2024)]%
        {EUAIAct2024}
\bibfield{author}{\bibinfo{person}{{European Parliament and Council}}.}
  \bibinfo{year}{2024}\natexlab{}.
\newblock \bibinfo{title}{Regulation ({EU}) 2024/1689 on Artificial
  Intelligence (Artificial Intelligence Act)}.
\newblock \bibinfo{howpublished}{Official Journal of the European Union}.
\newblock


\bibitem[Fernandez(2026)]%
        {fernandez2026acp}
\bibfield{author}{\bibinfo{person}{Marcelo Fernandez}.}
  \bibinfo{year}{2026}\natexlab{}.
\newblock \bibinfo{title}{Agent Control Protocol: Admission Control for Agent
  Actions}.
\newblock
\showeprint[arxiv]{2603.18829}


\bibitem[{FIRST.Org, Inc.}(2019)]%
        {first_cvss31_2019}
\bibfield{author}{\bibinfo{person}{{FIRST.Org, Inc.}}}
  \bibinfo{year}{2019}\natexlab{}.
\newblock \bibinfo{title}{Common Vulnerability Scoring System v3.1:
  Specification Document}.
\newblock
  \bibinfo{howpublished}{\url{https://www.first.org/cvss/v3.1/specification-document}}.
\newblock


\bibitem[Gabison and Xian(2025)]%
        {gabison2025liability}
\bibfield{author}{\bibinfo{person}{Garry~A. Gabison} {and}
  \bibinfo{person}{R.~Patrick Xian}.} \bibinfo{year}{2025}\natexlab{}.
\newblock \showarticletitle{Inherent and emergent liability issues in
  {LLM}-based agentic systems: a principal-agent perspective}. In
  \bibinfo{booktitle}{\emph{REALM Workshop at ACL 2025}}.
\newblock
\showeprint[arxiv]{2504.03255}


\bibitem[{Galileo Technologies}(2024)]%
        {galileo2024}
\bibfield{author}{\bibinfo{person}{{Galileo Technologies}}.}
  \bibinfo{year}{2024}\natexlab{}.
\newblock \bibinfo{title}{{Galileo}: {AI} Reliability and Evaluation Platform}.
\newblock \bibinfo{howpublished}{\url{https://galileo.ai}}.
\newblock


\bibitem[Gao et~al\mbox{.}(2024)]%
        {eval-harness}
\bibfield{author}{\bibinfo{person}{Leo Gao}, \bibinfo{person}{Jonathan Tow},
  \bibinfo{person}{Baber Abbasi}, \bibinfo{person}{Stella Biderman},
  {et~al\mbox{.}}} \bibinfo{year}{2024}\natexlab{}.
\newblock \bibinfo{title}{A framework for few-shot language model evaluation}.
\newblock
\href{https://doi.org/10.5281/zenodo.12608602}{doi:\nolinkurl{10.5281/zenodo.12608602}}


\bibitem[Gaurav et~al\mbox{.}(2025)]%
        {gaurav2025gaas}
\bibfield{author}{\bibinfo{person}{Suyash Gaurav}, \bibinfo{person}{Jukka
  Heikkonen}, {and} \bibinfo{person}{Jatin Chaudhary}.}
  \bibinfo{year}{2025}\natexlab{}.
\newblock \showarticletitle{Governance-as-a-Service: A Multi-Agent Framework
  for {AI} System Compliance and Policy Enforcement}.
\newblock \bibinfo{journal}{\emph{arXiv preprint}} (\bibinfo{year}{2025}).
\newblock
\showeprint[arxiv]{2508.18765}


\bibitem[{Google Cloud}(2024)]%
        {gcp_deny_policies}
\bibfield{author}{\bibinfo{person}{{Google Cloud}}.}
  \bibinfo{year}{2024}\natexlab{}.
\newblock \bibinfo{title}{{Google Cloud} {IAM} Deny Policies}.
\newblock
  \bibinfo{howpublished}{\url{https://cloud.google.com/iam/docs/deny-overview}}.
\newblock


\bibitem[Greshake et~al\mbox{.}(2023)]%
        {greshake2023notwhat}
\bibfield{author}{\bibinfo{person}{Kai Greshake}, \bibinfo{person}{Sahar
  Abdelnabi}, \bibinfo{person}{Shailesh Mishra}, \bibinfo{person}{Christoph
  Endres}, \bibinfo{person}{Thorsten Holz}, {and} \bibinfo{person}{Mario
  Fritz}.} \bibinfo{year}{2023}\natexlab{}.
\newblock \showarticletitle{Not What You've Signed Up For: Compromising
  Real-World {LLM}-Integrated Applications with Indirect Prompt Injection}. In
  \bibinfo{booktitle}{\emph{Proc. 16th ACM Workshop on Artificial Intelligence
  and Security (AISec)}}. \bibinfo{pages}{79--90}.
\newblock


\bibitem[Huang et~al\mbox{.}(2025)]%
        {huang2025aagate}
\bibfield{author}{\bibinfo{person}{Ken Huang}, \bibinfo{person}{Kyriakos~Rock
  Lambros}, \bibinfo{person}{Jerry Huang}, \bibinfo{person}{Yasir Mehmood},
  \bibinfo{person}{Hammad Atta}, \bibinfo{person}{Joshua Beck},
  \bibinfo{person}{Vineeth~Sai Narajala}, \bibinfo{person}{Muhammad~Zeeshan
  Baig}, {et~al\mbox{.}}} \bibinfo{year}{2025}\natexlab{}.
\newblock \bibinfo{title}{{AAGATE}: A {NIST} {AI} {RMF}-Aligned Governance
  Platform for Agentic {AI}}.
\newblock
\showeprint[arxiv]{2510.25863}~[cs.CR]


\bibitem[{IBM}(2024)]%
        {ibm_soar_2024}
\bibfield{author}{\bibinfo{person}{{IBM}}.} \bibinfo{year}{2024}\natexlab{}.
\newblock \bibinfo{title}{What is {SOAR} (Security Orchestration, Automation
  and Response)?}
\newblock
  \bibinfo{howpublished}{\url{https://www.ibm.com/think/topics/security-orchestration-automation-response}}.
\newblock


\bibitem[{International Organization for Standardization}(2023)]%
        {ISO42001}
\bibfield{author}{\bibinfo{person}{{International Organization for
  Standardization}}.} \bibinfo{year}{2023}\natexlab{}.
\newblock \bibinfo{title}{{ISO}/{IEC} 42001:2023 --- Information technology ---
  Artificial intelligence --- Management system}.
\newblock


\bibitem[Ioannidis et~al\mbox{.}(2000)]%
        {ioannidis2000distfw}
\bibfield{author}{\bibinfo{person}{Sotiris Ioannidis},
  \bibinfo{person}{Angelos~D. Keromytis}, \bibinfo{person}{Steven~M. Bellovin},
  {and} \bibinfo{person}{Jonathan~M. Smith}.} \bibinfo{year}{2000}\natexlab{}.
\newblock \showarticletitle{Implementing a Distributed Firewall}. In
  \bibinfo{booktitle}{\emph{Proc. 7th ACM Conf. on Computer and Communications
  Security (CCS)}}. \bibinfo{publisher}{ACM}, \bibinfo{pages}{190--199}.
\newblock


\bibitem[Ip and Vongthongsri(2024)]%
        {deepeval2024}
\bibfield{author}{\bibinfo{person}{Jeffrey Ip} {and} \bibinfo{person}{Kritin
  Vongthongsri}.} \bibinfo{year}{2024}\natexlab{}.
\newblock \bibinfo{booktitle}{\emph{{DeepEval}}}.
\newblock
\urldef\tempurl%
\url{https://github.com/confident-ai/deepeval}
\showURL{%
\tempurl}


\bibitem[ISO/IEC(2025)]%
        {iso_42005_2025}
ISO/IEC \bibinfo{year}{2025}\natexlab{}.
\newblock \bibinfo{booktitle}{\emph{{ISO/IEC} 42005:2025 Information technology
  --- Artificial intelligence --- {AI} System Impact Assessment}}.
\newblock ISO/IEC.
\newblock


\bibitem[{Istio Project}(2024)]%
        {istio_authz}
\bibfield{author}{\bibinfo{person}{{Istio Project}}.}
  \bibinfo{year}{2024}\natexlab{}.
\newblock \bibinfo{title}{Istio Authorization Policy}.
\newblock
  \bibinfo{howpublished}{\url{https://istio.io/latest/docs/reference/config/security/authorization-policy/}}.
\newblock


\bibitem[Janani(2025)]%
        {janani2025humanmachineblur}
\bibfield{author}{\bibinfo{person}{Kush Janani}.}
  \bibinfo{year}{2025}\natexlab{}.
\newblock \showarticletitle{The Human-Machine Identity Blur: A Unified
  Framework for Cybersecurity Risk Management in 2025}.
\newblock  (\bibinfo{year}{2025}).
\newblock
\showeprint[arxiv]{2503.18255}


\bibitem[Jordan et~al\mbox{.}(2021)]%
        {STIX21_OASIS}
\bibfield{author}{\bibinfo{person}{Bret Jordan}, \bibinfo{person}{Rich Piazza},
  {and} \bibinfo{person}{Trey Darley}.} \bibinfo{year}{2021}\natexlab{}.
\newblock \bibinfo{booktitle}{\emph{{STIX} Version 2.1, {OASIS} Standard}}.
\newblock \bibinfo{type}{{T}echnical {R}eport}. \bibinfo{institution}{OASIS}.
\newblock


\bibitem[Kairouz et~al\mbox{.}(2021)]%
        {kairouz2021advances}
\bibfield{author}{\bibinfo{person}{Peter Kairouz}, \bibinfo{person}{H.~Brendan
  McMahan}, \bibinfo{person}{Brendan Avent}, \bibinfo{person}{Aurelien Bellet},
  {et~al\mbox{.}}} \bibinfo{year}{2021}\natexlab{}.
\newblock \showarticletitle{Advances and Open Problems in Federated Learning}.
\newblock \bibinfo{journal}{\emph{Foundations and Trends in Machine Learning}}
  \bibinfo{volume}{14}, \bibinfo{number}{1--2} (\bibinfo{year}{2021}),
  \bibinfo{pages}{1--210}.
\newblock


\bibitem[Kamvar et~al\mbox{.}(2003)]%
        {kamvar2003eigentrust}
\bibfield{author}{\bibinfo{person}{Sepandar~D. Kamvar},
  \bibinfo{person}{Mario~T. Schlosser}, {and} \bibinfo{person}{Hector
  Garcia-Molina}.} \bibinfo{year}{2003}\natexlab{}.
\newblock \showarticletitle{The {EigenTrust} Algorithm for Reputation
  Management in {P2P} Networks}. In \bibinfo{booktitle}{\emph{Proc. 12th Int.
  Conf. on World Wide Web (WWW)}}. \bibinfo{publisher}{ACM},
  \bibinfo{pages}{640--651}.
\newblock


\bibitem[Kaptein et~al\mbox{.}(2026)]%
        {kaptein2026runtime}
\bibfield{author}{\bibinfo{person}{Maurits Kaptein},
  \bibinfo{person}{Vassilis-Javed Khan}, {and} \bibinfo{person}{Andriy
  Podstavnychy}.} \bibinfo{year}{2026}\natexlab{}.
\newblock \showarticletitle{Runtime Governance for {AI} Agents: Policies on
  Paths}.
\newblock \bibinfo{journal}{\emph{arXiv preprint}} (\bibinfo{year}{2026}).
\newblock
\showeprint[arxiv]{2603.16586}


\bibitem[Kephart and Chess(2003)]%
        {kephart2003vision}
\bibfield{author}{\bibinfo{person}{Jeffrey~O. Kephart} {and}
  \bibinfo{person}{David~M. Chess}.} \bibinfo{year}{2003}\natexlab{}.
\newblock \showarticletitle{The Vision of Autonomic Computing}.
\newblock \bibinfo{journal}{\emph{IEEE Computer}} \bibinfo{volume}{36},
  \bibinfo{number}{1} (\bibinfo{year}{2003}), \bibinfo{pages}{41--50}.
\newblock


\bibitem[Klingen et~al\mbox{.}(2023)]%
        {langfuse2023}
\bibfield{author}{\bibinfo{person}{Marc Klingen}, \bibinfo{person}{Maximilian
  Deichmann}, {and} \bibinfo{person}{Clemens Rawert}.}
  \bibinfo{year}{2023}\natexlab{}.
\newblock \bibinfo{booktitle}{\emph{{Langfuse}: Open Source {LLM} Engineering
  Platform}}.
\newblock


\bibitem[Kuppusamy et~al\mbox{.}(2016)]%
        {kuppusamy2016uptane}
\bibfield{author}{\bibinfo{person}{Trishank~Karthik Kuppusamy},
  \bibinfo{person}{Akan Brown}, \bibinfo{person}{Sebastien Awwad},
  \bibinfo{person}{Damon McCoy}, \bibinfo{person}{Russ Bielawski},
  \bibinfo{person}{Cameron Mott}, \bibinfo{person}{Sam Lauzon},
  \bibinfo{person}{Andre Weimerskirch}, {and} \bibinfo{person}{Justin Cappos}.}
  \bibinfo{year}{2016}\natexlab{}.
\newblock \showarticletitle{{Uptane}: Securing Software Updates for
  Automobiles}. In \bibinfo{booktitle}{\emph{14th ESCAR Europe}}.
\newblock


\bibitem[{LangChain, Inc.}(2024)]%
        {langsmith2024}
\bibfield{author}{\bibinfo{person}{{LangChain, Inc.}}}
  \bibinfo{year}{2024}\natexlab{}.
\newblock \bibinfo{title}{{LangSmith}: Observability and Evaluation Platform
  for {LLM} Applications}.
\newblock \bibinfo{howpublished}{\url{https://smith.langchain.com}}.
\newblock


\bibitem[Laurie et~al\mbox{.}(2013)]%
        {RFC6962}
\bibfield{author}{\bibinfo{person}{Ben Laurie}, \bibinfo{person}{Adam Langley},
  {and} \bibinfo{person}{Emilia Kasper}.} \bibinfo{year}{2013}\natexlab{}.
\newblock \bibinfo{booktitle}{\emph{Certificate Transparency}}.
\newblock \bibinfo{type}{RFC} 6962. \bibinfo{institution}{IETF}.
\newblock


\bibitem[Liang et~al\mbox{.}(2023)]%
        {liang2023helm}
\bibfield{author}{\bibinfo{person}{Percy Liang}, \bibinfo{person}{Rishi
  Bommasani}, \bibinfo{person}{Tony Lee}, {et~al\mbox{.}}}
  \bibinfo{year}{2023}\natexlab{}.
\newblock \showarticletitle{Holistic Evaluation of Language Models}.
\newblock \bibinfo{journal}{\emph{Transactions on Machine Learning Research}}
  (\bibinfo{year}{2023}).
\newblock


\bibitem[Liang et~al\mbox{.}(2025)]%
        {liang2025donttrust}
\bibfield{author}{\bibinfo{person}{Ruichao Liang}, \bibinfo{person}{Le Yin},
  \bibinfo{person}{Jing Chen}, \bibinfo{person}{Yebo Feng},
  \bibinfo{person}{Cong Wu}, \bibinfo{person}{Xiaoyu Zhang},
  \bibinfo{person}{Huangpeng Gu}, \bibinfo{person}{Zijian Zhang}, {and}
  \bibinfo{person}{Yang Liu}.} \bibinfo{year}{2025}\natexlab{}.
\newblock \showarticletitle{Don't Trust Your Upstream: Exploiting LLM
  Multi-Agent System via Topology-Guided Adversarial Propagation}.
\newblock \bibinfo{journal}{\emph{arXiv preprint}} (\bibinfo{year}{2025}).
\newblock
\showeprint[arxiv]{2512.04129}


\bibitem[Mazzocchetti(2026)]%
        {Aegis2026}
\bibfield{author}{\bibinfo{person}{Adam~Massimo Mazzocchetti}.}
  \bibinfo{year}{2026}\natexlab{}.
\newblock \bibinfo{title}{Cryptographic Runtime Governance for Autonomous {AI}
  Systems: The {Aegis} Architecture for Verifiable Policy Enforcement}.
\newblock
\showeprint[arxiv]{2603.16938}~[cs.CR]


\bibitem[McMahan et~al\mbox{.}(2017)]%
        {FedAvg}
\bibfield{author}{\bibinfo{person}{Brendan McMahan}, \bibinfo{person}{Eider
  Moore}, \bibinfo{person}{Daniel Ramage}, \bibinfo{person}{Seth Hampson},
  {and} \bibinfo{person}{Blaise~Ag\"{u}era y Arcas}.}
  \bibinfo{year}{2017}\natexlab{}.
\newblock \showarticletitle{Communication-Efficient Learning of Deep Networks
  from Decentralized Data}. In \bibinfo{booktitle}{\emph{Artificial
  Intelligence and Statistics (AISTATS)}}.
\newblock


\bibitem[{Microsoft Agent Governance Toolkit contributors}(2026)]%
        {microsoft_agt_issue1386}
\bibfield{author}{\bibinfo{person}{{Microsoft Agent Governance Toolkit
  contributors}}.} \bibinfo{year}{2026}\natexlab{}.
\newblock \bibinfo{title}{Cross-organization policy federation (follow-up to
  {ADR}-0007)}.
\newblock
  \bibinfo{howpublished}{\url{https://github.com/microsoft/agent-governance-toolkit/issues/1386}}.
\newblock


\bibitem[{Microsoft AI Red Team}(2024)]%
        {pyrit}
\bibfield{author}{\bibinfo{person}{{Microsoft AI Red Team}}.}
  \bibinfo{year}{2024}\natexlab{}.
\newblock \bibinfo{title}{{PyRIT}: Python Risk Identification Toolkit for
  Generative AI}.
\newblock \bibinfo{howpublished}{\url{https://github.com/Azure/PyRIT}}.
\newblock


\bibitem[{Microsoft Learn}(2024)]%
        {windows_appcontrol_signed}
\bibfield{author}{\bibinfo{person}{{Microsoft Learn}}.}
  \bibinfo{year}{2024}\natexlab{}.
\newblock \bibinfo{title}{Use Signed Policies to Protect Windows Defender
  Application Control Against Tampering}.
\newblock
  \bibinfo{howpublished}{\url{https://learn.microsoft.com/en-us/windows/security/application-security/application-control/}}.
\newblock


\bibitem[{Microsoft Open Source}(2026)]%
        {microsoft_agt_2026}
\bibfield{author}{\bibinfo{person}{{Microsoft Open Source}}.}
  \bibinfo{year}{2026}\natexlab{}.
\newblock \bibinfo{title}{Agent Governance Toolkit: Open-source Runtime
  Security for {AI} Agents}.
\newblock
  \bibinfo{howpublished}{\url{https://github.com/microsoft/agent-governance-toolkit}}.
\newblock
\newblock
\shownote{Public preview, MIT License, April 2026}.


\bibitem[Miller et~al\mbox{.}(2007)]%
        {miller2007horton}
\bibfield{author}{\bibinfo{person}{Mark~S. Miller}, \bibinfo{person}{Jed
  Donnelley}, {and} \bibinfo{person}{Alan~H. Karp}.}
  \bibinfo{year}{2007}\natexlab{}.
\newblock \showarticletitle{Delegating Responsibility in Digital Systems:
  {Horton}'s ``{Who} Done It?''}. In \bibinfo{booktitle}{\emph{2nd USENIX
  Workshop on Hot Topics in Security (HotSec)}}.
\newblock


\bibitem[Nakajima(2026)]%
        {nakajima2026activegraph}
\bibfield{author}{\bibinfo{person}{Yohei Nakajima}.}
  \bibinfo{year}{2026}\natexlab{}.
\newblock \bibinfo{title}{The Log is the Agent: Event-Sourced Reactive Graphs
  for Auditable, Forkable Agentic Systems}.
\newblock \bibinfo{howpublished}{\url{https://arxiv.org/abs/2605.21997}}.
\newblock
\showeprint[arxiv]{2605.21997}~[cs.AI]


\bibitem[{National Institute of Standards and Technology}(2023)]%
        {NISTAIRMF}
\bibfield{author}{\bibinfo{person}{{National Institute of Standards and
  Technology}}.} \bibinfo{year}{2023}\natexlab{}.
\newblock \bibinfo{booktitle}{\emph{Artificial Intelligence Risk Management
  Framework ({AI} {RMF} 1.0)}}.
\newblock \bibinfo{type}{{T}echnical {R}eport} NIST AI 100-1.
  \bibinfo{institution}{U.S. Department of Commerce}.
\newblock


\bibitem[{National Institute of Standards and Technology}(2024)]%
        {nist_ai_600_1}
\bibfield{author}{\bibinfo{person}{{National Institute of Standards and
  Technology}}.} \bibinfo{year}{2024}\natexlab{}.
\newblock \bibinfo{booktitle}{\emph{Artificial Intelligence Risk Management
  Framework: Generative {AI} Profile}}.
\newblock \bibinfo{type}{{T}echnical {R}eport} NIST AI 600-1.
  \bibinfo{institution}{U.S. Department of Commerce}.
\newblock


\bibitem[Nayebi(2025)]%
        {nayebi2025corrigible}
\bibfield{author}{\bibinfo{person}{Aran Nayebi}.}
  \bibinfo{year}{2025}\natexlab{}.
\newblock \showarticletitle{Core Safety Values for Provably Corrigible Agents}.
  In \bibinfo{booktitle}{\emph{AAAI 2026 Machine Ethics Workshop (W37)}}.
\newblock
\showeprint[arxiv]{2507.20964}


\bibitem[Newman et~al\mbox{.}(2022)]%
        {Newman2022Sigstore}
\bibfield{author}{\bibinfo{person}{Zachary Newman}, \bibinfo{person}{John~Speed
  Meyers}, {and} \bibinfo{person}{Santiago Torres-Arias}.}
  \bibinfo{year}{2022}\natexlab{}.
\newblock \showarticletitle{{Sigstore}: Software Signing for Everybody}. In
  \bibinfo{booktitle}{\emph{ACM CCS}}.
\newblock


\bibitem[{Okta}(2026)]%
        {okta_nhi_unified_2026}
\bibfield{author}{\bibinfo{person}{{Okta}}.} \bibinfo{year}{2026}\natexlab{}.
\newblock \bibinfo{title}{Non-human and Human Identities: A Unified Approach}.
\newblock
  \bibinfo{howpublished}{\url{https://www.okta.com/blog/ai/non-human-and-human-identities-a-unified-approach/}}.
\newblock


\bibitem[{Open Policy Agent Authors}(2024)]%
        {OPA_SignedBundles}
\bibfield{author}{\bibinfo{person}{{Open Policy Agent Authors}}.}
  \bibinfo{year}{2024}\natexlab{}.
\newblock \bibinfo{title}{Open Policy Agent: Bundles --- Signing and
  Verification}.
\newblock
  \bibinfo{howpublished}{\url{https://www.openpolicyagent.org/docs/management-bundles}}.
\newblock


\bibitem[{OpenAI}(2023)]%
        {openai_evals}
\bibfield{author}{\bibinfo{person}{{OpenAI}}.} \bibinfo{year}{2023}\natexlab{}.
\newblock \bibinfo{title}{Evals: A Framework for Evaluating {LLMs}}.
\newblock \bibinfo{howpublished}{\url{https://github.com/openai/evals}}.
\newblock


\bibitem[{OWASP Foundation}(2025)]%
        {owasp_aivss_2025}
\bibfield{author}{\bibinfo{person}{{OWASP Foundation}}.}
  \bibinfo{year}{2025}\natexlab{}.
\newblock \bibinfo{booktitle}{\emph{{AIVSS} Scoring System For {OWASP} Agentic
  {AI} Core Security Risks v0.5}}.
\newblock \bibinfo{type}{{T}echnical {R}eport}. \bibinfo{institution}{OWASP}.
\newblock
\urldef\tempurl%
\url{https://aivss.owasp.org/}
\showURL{%
\tempurl}


\bibitem[Ramage and Mazzocchi(2020)]%
        {ramage2020federated}
\bibfield{author}{\bibinfo{person}{Daniel Ramage} {and}
  \bibinfo{person}{Stefano Mazzocchi}.} \bibinfo{year}{2020}\natexlab{}.
\newblock \bibinfo{title}{Federated Analytics: Collaborative Data Science
  without Data Collection}.
\newblock \bibinfo{howpublished}{Google Research Blog}.
\newblock


\bibitem[Ravi et~al\mbox{.}(2024)]%
        {ravi2024lynx}
\bibfield{author}{\bibinfo{person}{Selvan~Sunitha Ravi},
  \bibinfo{person}{Bartosz Mielczarek}, \bibinfo{person}{Anand Kannappan},
  \bibinfo{person}{Douwe Kiela}, {and} \bibinfo{person}{Rebecca Qian}.}
  \bibinfo{year}{2024}\natexlab{}.
\newblock \showarticletitle{{Lynx}: An Open Source Hallucination Evaluation
  Model}. In \bibinfo{booktitle}{\emph{arXiv:2407.08488}}.
\newblock


\bibitem[Rundgren et~al\mbox{.}(2020)]%
        {RFC8785}
\bibfield{author}{\bibinfo{person}{Anders Rundgren}, \bibinfo{person}{Bret
  Jordan}, {and} \bibinfo{person}{Samuel Erdtman}.}
  \bibinfo{year}{2020}\natexlab{}.
\newblock \bibinfo{booktitle}{\emph{{JSON} Canonicalization Scheme ({JCS})}}.
\newblock \bibinfo{type}{RFC} 8785. \bibinfo{institution}{IETF}.
\newblock


\bibitem[Samuel et~al\mbox{.}(2010)]%
        {samuel2010tuf}
\bibfield{author}{\bibinfo{person}{Justin Samuel}, \bibinfo{person}{Nick
  Mathewson}, \bibinfo{person}{Justin Cappos}, {and} \bibinfo{person}{Roger
  Dingledine}.} \bibinfo{year}{2010}\natexlab{}.
\newblock \showarticletitle{Survivable Key Compromise in Software Update
  Systems}. In \bibinfo{booktitle}{\emph{Proc. 17th ACM Conf. on Computer and
  Communications Security (CCS)}}. \bibinfo{publisher}{ACM},
  \bibinfo{pages}{61--72}.
\newblock


\bibitem[Sanwouo et~al\mbox{.}(2025)]%
        {sanwouo2025aware}
\bibfield{author}{\bibinfo{person}{Brell Sanwouo} {et~al\mbox{.}}}
  \bibinfo{year}{2025}\natexlab{}.
\newblock \showarticletitle{Breaking the Loop: {AWARE} is the New {MAPE-K}}. In
  \bibinfo{booktitle}{\emph{Proc. 33rd ACM Intl. Conf. on the Foundations of
  Software Engineering (FSE 2025) --- Ideas, Visions, and Reflections}}.
\newblock
\href{https://doi.org/10.1145/3696630.3728512}{doi:\nolinkurl{10.1145/3696630.3728512}}


\bibitem[Satta~Chiris et~al\mbox{.}(2025)]%
        {satta2025aura}
\bibfield{author}{\bibinfo{person}{Lorenzo Satta~Chiris},
  \bibinfo{person}{Ayush Mishra}, {et~al\mbox{.}}}
  \bibinfo{year}{2025}\natexlab{}.
\newblock \showarticletitle{AURA: An Agent Autonomy Risk Assessment Framework}.
\newblock \bibinfo{journal}{\emph{arXiv preprint}} (\bibinfo{year}{2025}).
\newblock
\showeprint[arxiv]{2510.15739}


\bibitem[Schneier and Kelsey(1998)]%
        {schneierkelsey1998logs}
\bibfield{author}{\bibinfo{person}{Bruce Schneier} {and} \bibinfo{person}{John
  Kelsey}.} \bibinfo{year}{1998}\natexlab{}.
\newblock \showarticletitle{Cryptographic Support for Secure Logs on Untrusted
  Machines}. In \bibinfo{booktitle}{\emph{USENIX Security Symposium}}.
\newblock


\bibitem[Schubert et~al\mbox{.}(2023)]%
        {schubert2023deep}
\bibfield{author}{\bibinfo{person}{Marius Schubert}, \bibinfo{person}{Tobias
  Riedlinger}, \bibinfo{person}{Karsten Kahl}, {and} \bibinfo{person}{Matthias
  Rottmann}.} \bibinfo{year}{2023}\natexlab{}.
\newblock \showarticletitle{Deep Active Learning with Noisy Oracle in Object
  Detection}.
\newblock \bibinfo{journal}{\emph{arXiv preprint}} (\bibinfo{year}{2023}).
\newblock
\showeprint[arxiv]{2310.00372}


\bibitem[Shen et~al\mbox{.}(2026)]%
        {shen2026sigil}
\bibfield{author}{\bibinfo{person}{Tingda Shen}, \bibinfo{person}{Yebo Feng},
  \bibinfo{person}{Konglin Zhu}, \bibinfo{person}{Xiaojun Jia},
  \bibinfo{person}{Yang Liu}, {and} \bibinfo{person}{Lin Zhang}.}
  \bibinfo{year}{2026}\natexlab{}.
\newblock \bibinfo{title}{Sealing the Audit-Runtime Gap for {LLM} Skills}.
\newblock
\showeprint[arxiv]{2605.05274}


\bibitem[Soares et~al\mbox{.}(2015)]%
        {soares2015corrigibility}
\bibfield{author}{\bibinfo{person}{Nate Soares}, \bibinfo{person}{Benja
  Fallenstein}, \bibinfo{person}{Eliezer Yudkowsky}, {and}
  \bibinfo{person}{Stuart Armstrong}.} \bibinfo{year}{2015}\natexlab{}.
\newblock \showarticletitle{Corrigibility}. In
  \bibinfo{booktitle}{\emph{Workshops at the Twenty-Ninth {AAAI} Conference on
  Artificial Intelligence}}.
\newblock


\bibitem[Souza et~al\mbox{.}(2025)]%
        {souza2025provagent}
\bibfield{author}{\bibinfo{person}{Renan Souza}, \bibinfo{person}{Amal
  Gueroudji}, \bibinfo{person}{Stephen DeWitt}, \bibinfo{person}{Daniel
  Rosendo}, \bibinfo{person}{Tirthankar Ghosal}, \bibinfo{person}{Robert Ross},
  \bibinfo{person}{Prasanna Balaprakash}, {and}
  \bibinfo{person}{Rafael~Ferreira da Silva}.} \bibinfo{year}{2025}\natexlab{}.
\newblock \showarticletitle{{PROV-AGENT}: Unified Provenance for Tracking {AI}
  Agent Interactions in Agentic Workflows}. In \bibinfo{booktitle}{\emph{IEEE
  Int'l Conf. on e-Science}}.
\newblock
\newblock
\shownote{arXiv:2508.02866}.


\bibitem[Tan et~al\mbox{.}(2026)]%
        {tan2026attesting}
\bibfield{author}{\bibinfo{person}{Zhuoran Tan}, \bibinfo{person}{Jeremy
  Singer}, {and} \bibinfo{person}{Christos Anagnostopoulos}.}
  \bibinfo{year}{2026}\natexlab{}.
\newblock \bibinfo{title}{Attesting {LLM} Pipelines: Enforcing Verifiable
  Training and Release Claims}.
\newblock
\showeprint[arxiv]{2603.28988}
\newblock
\shownote{2nd Intl. Workshop on Large Language Model Supply Chain Analysis
  (LLMSC 2026)}.


\bibitem[The Open Group(2021)]%
        {opengroup_fair_2021}
The Open Group \bibinfo{year}{2021}\natexlab{}.
\newblock \bibinfo{booktitle}{\emph{Risk Taxonomy ({O-RT}), Version 3.0}}.
\newblock The Open Group.
\newblock


\bibitem[Torres-Arias et~al\mbox{.}(2019)]%
        {TorresArias2019Intoto}
\bibfield{author}{\bibinfo{person}{Santiago Torres-Arias},
  \bibinfo{person}{Hammad Afzali}, \bibinfo{person}{Trishank~Karthik
  Kuppusamy}, \bibinfo{person}{Reza Curtmola}, {and} \bibinfo{person}{Justin
  Cappos}.} \bibinfo{year}{2019}\natexlab{}.
\newblock \showarticletitle{in-toto: Providing Farm-to-Table Guarantees for
  Bits and Bytes}. In \bibinfo{booktitle}{\emph{28th USENIX Security
  Symposium}}. \bibinfo{pages}{1393--1410}.
\newblock


\bibitem[{Truera}(2023)]%
        {trulens2023}
\bibfield{author}{\bibinfo{person}{{Truera}}.} \bibinfo{year}{2023}\natexlab{}.
\newblock \bibinfo{booktitle}{\emph{{TruLens}: Evaluation and Tracking for
  {LLM} Experiments}}.
\newblock
\urldef\tempurl%
\url{https://github.com/truera/trulens}
\showURL{%
\tempurl}


\bibitem[{TrustPact}(2026)]%
        {trustpact2026}
\bibfield{author}{\bibinfo{person}{{TrustPact}}.}
  \bibinfo{year}{2026}\natexlab{}.
\newblock \bibinfo{title}{{TrustPact}: Behavioral Trust Scanner for {MCP}
  Servers and {AI} Agents}.
\newblock \bibinfo{howpublished}{\url{https://pypi.org/project/trustpact/}}.
\newblock


\bibitem[{UK AI Security Institute}(2024)]%
        {aisi2024inspect}
\bibfield{author}{\bibinfo{person}{{UK AI Security Institute}}.}
  \bibinfo{year}{2024}\natexlab{}.
\newblock \bibinfo{booktitle}{\emph{{Inspect AI}: Framework for Large Language
  Model Evaluations}}.
\newblock
\urldef\tempurl%
\url{https://github.com/UKGovernmentBEIS/inspect_ai}
\showURL{%
\tempurl}


\bibitem[{U.S. Food and Drug Administration}(2024)]%
        {fda2024pccp}
\bibfield{author}{\bibinfo{person}{{U.S. Food and Drug Administration}}.}
  \bibinfo{year}{2024}\natexlab{}.
\newblock \bibinfo{title}{Marketing Submission Recommendations for a
  Predetermined Change Control Plan for Artificial Intelligence-Enabled Device
  Software Functions}.
\newblock \bibinfo{howpublished}{Final Guidance for Industry and FDA Staff}.
\newblock
\urldef\tempurl%
\url{https://www.fda.gov/medical-devices/software-medical-device-samd/artificial-intelligence-software-medical-device}
\showURL{%
\tempurl}


\bibitem[Vassilev et~al\mbox{.}(2025)]%
        {nist_ai_100_2_e2025}
\bibfield{author}{\bibinfo{person}{Apostol Vassilev}, \bibinfo{person}{Alina
  Oprea}, \bibinfo{person}{Alie Fordyce}, \bibinfo{person}{Hyrum Anderson},
  \bibinfo{person}{Xander Davis}, {and} \bibinfo{person}{Alina Marshall}.}
  \bibinfo{year}{2025}\natexlab{}.
\newblock \bibinfo{booktitle}{\emph{Adversarial Machine Learning: A Taxonomy
  and Terminology of Attacks and Mitigations}}.
\newblock \bibinfo{type}{{T}echnical {R}eport} NIST AI 100-2 E2025.
  \bibinfo{institution}{NIST}.
\newblock


\bibitem[{Veldt Labs}(2026)]%
        {veldt-kya}
\bibfield{author}{\bibinfo{person}{{Veldt Labs}}.}
  \bibinfo{year}{2026}\natexlab{}.
\newblock \bibinfo{title}{{veldt-kya}: Know Your Agents --- a trust and
  governance {SDK} for autonomous systems}.
\newblock \bibinfo{howpublished}{\url{https://github.com/veldtlabs/veldt-kya}}.
\newblock
\newblock
\shownote{Open-source under Apache 2.0; available on PyPI}.


\bibitem[{Weights and Biases}(2024)]%
        {weave2024}
\bibfield{author}{\bibinfo{person}{{Weights and Biases}}.}
  \bibinfo{year}{2024}\natexlab{}.
\newblock \bibinfo{title}{{Weave}: Toolkit for {LLM} Application Development}.
\newblock \bibinfo{howpublished}{\url{https://wandb.ai/site/weave}}.
\newblock


\bibitem[{World Wide Web Consortium}(2024)]%
        {w3c_baggage}
\bibfield{author}{\bibinfo{person}{{World Wide Web Consortium}}.}
  \bibinfo{year}{2024}\natexlab{}.
\newblock \bibinfo{booktitle}{\emph{Propagation Format for Distributed Context:
  {Baggage}}}.
\newblock \bibinfo{type}{W3C Working Draft}.
\newblock


\bibitem[Wu et~al\mbox{.}(2025)]%
        {wu2025monitoring}
\bibfield{author}{\bibinfo{person}{Chengcan Wu}, \bibinfo{person}{Zhixin
  Zhang}, \bibinfo{person}{Mingqian Xu}, \bibinfo{person}{Zeming Wei}, {and}
  \bibinfo{person}{Meng Sun}.} \bibinfo{year}{2025}\natexlab{}.
\newblock \showarticletitle{Monitoring LLM-based Multi-Agent Systems Against
  Corruptions via Node Evaluation}.
\newblock \bibinfo{journal}{\emph{arXiv preprint}} (\bibinfo{year}{2025}).
\newblock
\showeprint[arxiv]{2510.19420}


\bibitem[Yehudai et~al\mbox{.}(2025)]%
        {yehudai2025surveyagents}
\bibfield{author}{\bibinfo{person}{Asaf Yehudai} {et~al\mbox{.}}}
  \bibinfo{year}{2025}\natexlab{}.
\newblock \showarticletitle{Survey on Evaluation of {LLM}-based Agents}. In
  \bibinfo{booktitle}{\emph{KDD '25 Tutorials}}.
\newblock
\newblock
\shownote{arXiv:2503.16416}.


\bibitem[Zhang et~al\mbox{.}(2025)]%
        {zhang2025agentracer}
\bibfield{author}{\bibinfo{person}{Guibin Zhang}, \bibinfo{person}{Junhao
  Wang}, \bibinfo{person}{Junjie Chen}, \bibinfo{person}{Wangchunshu Zhou},
  \bibinfo{person}{Kun Wang}, {and} \bibinfo{person}{Shuicheng Yan}.}
  \bibinfo{year}{2025}\natexlab{}.
\newblock \showarticletitle{AgenTracer: Who Is Inducing Failure in the LLM
  Agentic Systems?}
\newblock \bibinfo{journal}{\emph{arXiv preprint}} (\bibinfo{year}{2025}).
\newblock
\showeprint[arxiv]{2509.03312}


\end{thebibliography}


\appendix

\section{Versioning and Drift Detection}
\label{appx:drift}

Observability platforms watch agent \emph{behavior}; they do not
watch agent \emph{configuration}. A one-line edit to a system prompt
silently changes the agent's effective policy without leaving any
trace in observability tooling. KYA addresses this with
content-addressed versioning + drift detection over the agent
definition, plus a separate \emph{behavioral}-drift mechanism (the
mode-vs-config gap) capturing EU AI Act Art.~14 evidentiary
requirements for exercised oversight.

\paragraph{Canonical hash over explicit policy-bearing fields.}
The canonical hash $H_c(D_A)$ is the SHA-256 of a canonical-JSON
encoding (RFC~8785~\cite{RFC8785}) over an explicit, enumerated
allowlist of 18 policy-bearing fields: \btt{agent\_key},
\btt{name}, \btt{description}, \btt{system\_prompt},
\btt{model}, \btt{tools}, \btt{denied\_tools},
\btt{human\_loop}, \btt{access\_level}, \btt{can\_override},
\btt{can\_revert}, \btt{can\_delegate\_to},
\btt{required\_roles}, \btt{extends}, \btt{data\_classes},
\btt{security\_caps}, \btt{provenance}, \btt{model\_trust},
and \btt{compliance\_scope}. Operational fields (counters,
timestamps, deployment metadata) are deliberately excluded so the
hash is stable across runs of the same definition. Two
semantically-equal definitions hash identically; a single byte
change to any of the 18 fields yields a different hash. The
\emph{explicit allowlist} is a small governance contribution: naming
the policy-bearing fields lets the auditor see what is monitored.

\paragraph{Event-time + ingest-time versioned snapshots.}
Definitions snapshot into an append-only history table keyed by
(tenant\_id, agent\_key, version\_no) with two timestamp axes:
\emph{occurred\_at} (caller's clock, event time) and
\emph{created\_at} (storage clock, ingest time). The delta
$\Delta = $ created\_at $-$ occurred\_at exposes pipeline lag,
clock skew, or backdated tampering. Rollback is append-only:
\btt{rollback\_to(v\_old)} creates a new version that copies the
old definition with a "rolled back" note. History is never
mutated.

\paragraph{Drift detection algorithm.}
Given a previously-declared hash $h_d$ and current definition $D'$,
\cref{alg:detect-drift} returns either $\bot$ (no drift) or a
structural diff identifying changed fields.

\begin{algorithm}[h]
\caption{\texttt{detect\_drift}. Let $F$ denote the
\texttt{HASHED\_FIELDS} allowlist of 18 policy-bearing fields and
$D\vert_F$ the projection of definition $D$ onto $F$.}
\label{alg:detect-drift}
\begin{algorithmic}[1]
\Function{detect\_drift}{$h_d$, $D'$}
  \State $h' \gets \mathrm{SHA256}(\mathsf{canonical\_json}(D'|_F))$
  \If{$h' = h_d$}
    \State \Return $\bot$
  \EndIf
  \State $D_d \gets \Call{lookup\_by\_hash}{h_d}$
  \State \Return $\Call{structural\_diff}{D_d|_F,\; D'|_F}$
\EndFunction
\end{algorithmic}
\end{algorithm}

Diffs are structural (identifying \texttt{system\_prompt},
\texttt{tools[3]}, \texttt{model}, etc.) rather than byte-level
patches, recorded to the evidence chain
(\cref{sec:evidence}), and trigger operator review queues.

\paragraph{Lineage and risk inheritance.}
Agents derive from other agents (tenant clones a platform default,
edits, deploys). Each snapshot carries an optional
$\mathit{parent\_agent\_key}$ forming a DAG. When a parent's risk
score elevates (new vulnerability in tool set, compliance scope
change), KYA traverses children with per-generation decay
(\texttt{\_INHERITANCE\_PER\_GEN}: direct $+8$, grandchild $+4$,
great-grandchild $+2$, minimum $1$ for distant generations).

\paragraph{Behavioral drift: the mode-vs-config gap.}
The canonical-hash detector catches \emph{declared}-definition
drift. The \texttt{invocations.py} module records each invocation
with both the configured \texttt{human\_loop} mode and the
exercised mode at runtime. An agent configured
\texttt{human\_loop="hybrid"} that actually runs
\texttt{autonomous} 90\% of the time is a regulatory gap under EU
AI Act Art.~14, not a code bug. The \texttt{mode\_distribution()}
read aggregates exercised modes and flags configurations whose
declared mode does not match the empirical distribution.

\paragraph{Definition signature verification.}
\texttt{verify\_signature()} extends drift detection with
cryptographic signing: tenants sign the canonical hash of approved
definitions at registration; auditors verify against the operator's
public key. Signatures compose with the HMAC evidence chain
(\cref{sec:evidence}) as an entry of kind
\texttt{definition\_signed}; subsequent drift alerts chain forward
from the signature event.

\paragraph{What drift detection does not catch.}
The detector catches \emph{declared}-definition changes. It does
not catch model-provider behavioral drift (same prompt against a
new checkpoint), environment drift (a tool returning different
data), or upstream context drift (the retrieval corpus changed).
Those classes require the runtime signals of \cref{sec:rogue}, the
LLM-judge infrastructure, or external evaluator pulls from
Phoenix~\cite{phoenix2024}.

\begin{figure}[h]
\centering
\includegraphics[width=\columnwidth]{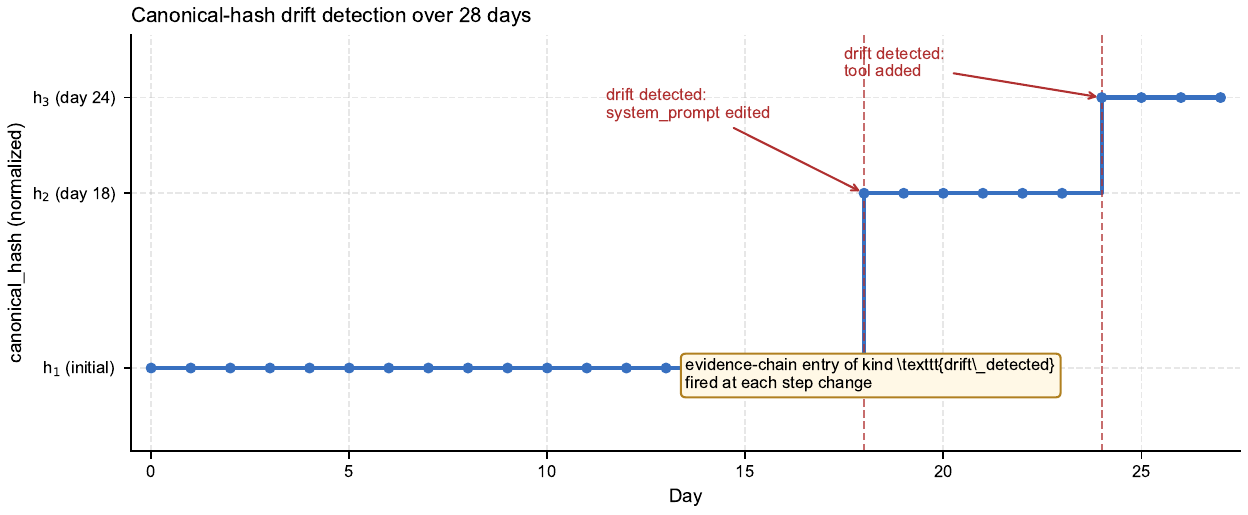}
\caption{Drift detection over a 28-day window. The hash is constant
for 17 days, jumps on day 18 when an unreviewed system-prompt edit
landed, and triggers an operator alert via the evidence-chain
event of kind \texttt{drift\_detected}.}
\label{fig:drift-timeline}
\end{figure}

\section{Evidence Chain --- Extended Details}
\label{appx:evidence-detail}

\paragraph{Type-marked canonicalization.}
\label{subsec:canon}
The hash input is canonical JSON (RFC~8785~\cite{RFC8785}) with a
specific extension: every non-JSON value wraps as
\verb|{"__t__": typename, "v": serialized_form}| before hashing.
This closes a collision-attack vector that the prior
\texttt{default=str} shortcut left open: a \texttt{datetime} and
its \texttt{isoformat()} string would hash identically without type
marking, letting an attacker swap a structured datetime for its
string representation without breaking the chain. The fix is small
(15 lines at \texttt{evidence.\_canonical\_default}) but
load-bearing.

\paragraph{Regime-aware retention floors.}
\label{subsec:retention}
The evidence module enforces per-regime retention floors at pruning
time: GDPR 6yr, HIPAA 6yr, NYDFS Part 500 5yr, SOX 7yr, PCI-DSS
1yr, EU AI Act 7yr, plus ISO 27001, SOC~2, FedRAMP, ITAR, NATO,
and EU classified-data regimes. Pruning treats these as
\emph{floors}, not ceilings: an operator may retain longer, never
shorter. Declarative, regime-aware, surfaces in the same
\texttt{prune\_expired\_evidence()} call.

\paragraph{Three-tier signing-key resolution.}
The HMAC signing key resolves through three tiers: (1)
\textbf{pluggable provider} ---
\btt{KYA\_EVIDENCE\_KEY\_PROVIDER} names a module:function that
returns the key, wired to KMS / Vault / HSM in production; (2)
\textbf{environment-mounted key} --- base64 in
\btt{KYA\_EVIDENCE\_SIGNING\_KEY}; (3) \textbf{development
fallback} --- process-local random key with a one-time WARN. The
dev fallback is intentional: a missing key cannot silently disable
evidence collection.

\paragraph{Verification.}
Single-event verification is $O(1)$: given $e_i$, $h_{i-1}$, $h_i$,
recompute and compare. Full-chain verification is $O(n)$. The SDK
exposes \btt{verify\_chain} and \btt{verify\_event}. The verifier
distinguishes three failure modes: \textbf{payload tamper}
(recomputed payload hash mismatches stored), \textbf{chain break
tamper} (\btt{prev\_hash} mismatches predecessor's signed hash),
and \textbf{clean cut} (expected predecessor pruned --- chain-head
mismatch at the first surviving entry of a pruned chain). The first
two are tamper findings; the third is a normal retention outcome.

\paragraph{Honest scope and upgrade path.}
KYA's evidence chain is verifiable by anyone holding the HMAC key.
For multi-party verification --- where no single party can forge the
chain --- threshold signatures over the chain root are future work
(\cref{sec:limitations}). Customer deployments requiring
multi-party audit currently publish the chain root to a separate
transparency log; a native primitive is planned for v2. External
notary anchoring (Sigstore Rekor~\cite{Newman2022Sigstore},
RFC~3161 timestamp authority) is similarly future work and
explicitly named in the module's v1-limitations docstring.

\section{Extended Related Work}
\label{appx:related}

\paragraph{Concurrent agent-governance systems (extended).}
\textbf{Aegis}~\cite{Aegis2026} proposes a runtime-governance
architecture with an Immutable Ethics Policy Layer and Immutable
Logging Kernel, claiming median proof-verification latency of
238\,ms. Submitted approximately six weeks before this work.
Aegis's stronger anchor (zero-knowledge verification of policy
enforcement) and our weaker anchor (HMAC chain with documented
Sigstore-anchoring upgrade path) represent a deliberate trade-off:
operator-key simplicity versus multi-party verifiability. Aegis
does not address tenant-override composition algebra, KYP, or
actor-agent dynamic attribution.

\textbf{AAGATE}~\cite{huang2025aagate} is a
Kubernetes-native NIST AI RMF control plane integrating MAESTRO
threat modeling, OWASP AIVSS, and SEI SSVC. AAGATE is the closest
academic prior art on overall scope. Our work differs by including
KYP, the formal only-tighten composition algebra, the four-gate
apply pipeline with operator-approval-as-default, the dynamic-trust
actor-agent attribution, and the interaction-multiplier
amplification with auditable codes. AAGATE is platform
infrastructure; KYA is an SDK that integrates into existing
platforms (including Kubernetes control planes such as AAGATE).

\textbf{SIGIL} (arXiv:2605.05274~\cite{shen2026sigil}) proposes a
tamper-evident on-chain registry for LLM skills with runtime
verification. The threat surface (compromised skills) and the trust
model (blockchain-anchored) are adjacent rather than directly
competitive with KYA. \textbf{PROV-AGENT}
(arXiv:2508.02866~\cite{souza2025provagent}) extends W3C PROV
provenance to agent workflows; KYA's evidence chain produces
compatible artifacts but uses a per-(tenant, invocation) HMAC chain
rather than a PROV graph.

\textbf{Microsoft Agent Governance
Toolkit}~\cite{microsoft_agt_2026} ships a multi-package suite
covering policy enforcement, DID-based agent identity, runtime
sandboxing, compliance mapping, and signed-plugin marketplace
verification. AGT is the closest commercial competitor on overall
feature breadth. Positioning differs on three axes: (a) KYA scores
users, agents, and service accounts in a single principal-trust
system, whereas AGT's Agent Mesh scores agents only (relying on
Entra ID upstream for users); (b) KYA's only-tighten composition
algebra is formalized with a soundness proof; (c) KYA's federated
weight-recommendation channel exists; AGT discusses cross-org
policy federation as future work
(issue~\#1386~\cite{microsoft_agt_issue1386}).

\textbf{TrustPact} (PyPI v0.1.0~\cite{trustpact2026}) implements a
behavioral trust scanner for MCP servers and AI agents using a
proprietary AEGIS five-dimensional scoring model. Closest direct
overlap on weighted-factor risk scoring. KYA's interaction-multiplier
amplification and per-(tenant, invocation) evidence chain have no
analog. \textbf{TrustEval-AI} offers benchmark prompts mapped to
HIPAA / GDPR / PCI-DSS / ABA Rules but is evaluation-only.
\texttt{ai-audit-sdk} provides asynchronous decision logging tied
to GDPR Article 22 without tamper-evidence, drift detection, or
federation.

\paragraph{Adjacent agent governance and admission control
(2025--2026).}
Gaurav et al.~\cite{gaurav2025gaas} propose Governance-as-a-Service
with runtime trust scoring but no signed-update distribution or
tenant override algebra. Kaptein et al.~\cite{kaptein2026runtime}
formalize proposed-action semantics for runtime governance but not
the apply-side gates of \cref{sec:fedrec}. Tan et
al.~\cite{tan2026attesting} (LLMSC 2026) target training-time
artifacts; KYA's four-gate apply targets runtime policy
distribution. Fernandez's Agent Control
Protocol~\cite{fernandez2026acp} uses Ed25519-signed execution
tokens for per-action admission control, which is action-scope
rather than policy-distribution-scope. Burke et al.'s
Rebound~\cite{burke2026rebound} (IEEE S\&P 2026) mediates
state-transition policy via a reference monitor for confidential
cloud applications --- adjacent authorization-mediation work but at
the cloud-state layer.

\paragraph{Multi-agent attack patterns and failure attribution.}
Liang et al.\ "Don't Trust Your
Upstream"~\cite{liang2025donttrust} demonstrates the exact attack
pattern that \cref{sec:static}'s delegation-trust attribution
defends against: topology-guided adversarial propagation from
exposed edge agents to high-privilege orchestrators (40--78\%
attack success across multiple frameworks). Wu et
al.~\cite{wu2025monitoring} propose node-evaluation monitoring on
signed agent-influence graphs --- the closest computational analog
to our delegation-trust mechanism, but using eigenvector-style
backpropagation rather than additive bucketed penalties with an
observation gate. AgenTracer~\cite{zhang2025agentracer} addresses
failure attribution via counterfactual replay (post-hoc), whereas
KYA's actor-agent attribution (\cref{subsec:actoragent}) debits the
orchestrator at runtime. Gabison and
Xian~\cite{gabison2025liability} (REALM at ACL 2025) provide the
legal-theoretic principal-agent framing.

\paragraph{Agentic risk-scoring frameworks beyond AIVSS.}
AURA, the Agent Autonomy Risk Assessment
framework~\cite{satta2025aura}, is the closest recent academic
competitor to KYA's static scoring layer. AURA uses
weighted-linear additive construction without per-interaction
multipliers or audit codes, reinforcing the position that KYA's
bounded asymmetric multiplicative amplification with per-interaction
codes is genuinely new.

\paragraph{LLM evaluation libraries.}
\texttt{lm-eval}~\cite{eval-harness},
DeepEval~\cite{deepeval2024}, RAGAS~\cite{es2024ragas},
TruLens~\cite{trulens2023}, HELM~\cite{liang2023helm}, OpenAI
Evals~\cite{openai_evals}, and Inspect AI~\cite{aisi2024inspect}
measure model behavior in isolation under fixed prompts; they do
not model the surrounding agent or produce the audit artifacts
regulators ask for. KYA composes with them: an evaluation verdict
from any can be recorded into the evidence chain as a
\texttt{quality} signal. Patronus AI Lynx~\cite{ravi2024lynx} is a
RAG-hallucination evaluator with per-criterion reasoning; KYA's
LLM-judge uses a similar architecture for fault attribution but
adds a verbatim-quote anti-hallucination guard that rejects judge
verdicts whose cited quotes fail to validate against the source.

\paragraph{Observability + eval platforms.}
By mid-2026 every major LLM observability platform markets audit
logs and EU AI Act readiness: LangSmith~\cite{langsmith2024}, Arize
Phoenix~\cite{phoenix2024} / Arize AX,
Langfuse~\cite{langfuse2023} Enterprise,
Braintrust~\cite{braintrust2024}, Galileo
Luna-2~\cite{galileo2024}. These produce \emph{platform-level}
audit logs (who-changed-what, who-ran-which-eval, RBAC actions).
None ships agent-identity-bound, cryptographically verifiable,
third-party-attestable governance artifacts. KYA composes:
\texttt{kya\_otlp\_bridge} ingests OpenTelemetry spans from any of
the above into evidence-chain entries.

\paragraph{Software provenance and signed-policy distribution.}
The HMAC-chain log construction is Bellare and
Yee~\cite{BellareYee1997} and Schneier and
Kelsey~\cite{schneierkelsey1998logs}; Merkle-tree alternative is
Crosby and Wallach~\cite{crosbywallach2009tamper}. Certificate
Transparency~\cite{RFC6962}, Trillian, and Sigstore
Rekor~\cite{Newman2022Sigstore} are industrial deployments;
in-toto and SLSA~\cite{TorresArias2019Intoto} address supply-chain
provenance. Signed policy distribution: KeyNote in the Distributed
Firewall~\cite{ioannidis2000distfw}, TUF and
Uptane~\cite{samuel2010tuf,kuppusamy2016uptane} for rollback
prevention, OPA Signed Bundles~\cite{OPA_SignedBundles} for
JWT-signed manifests, STIX/TAXII~\cite{STIX21_OASIS} for federated
threat-indicator sharing with analyst review. KYA's contribution is
not the signed-update transport but the four-gate apply pipeline
with operator-approval-as-default and the only-tighten composition
discipline.

\paragraph{Only-tighten composition --- Cedar lineage.}
The only-tighten algebra (\cref{sec:fedrec}) is structurally
adjacent to Cedar (Cutler et al.~\cite{cutler2024cedar}, OOPSLA
2024), which provides a Lean-mechanized formal proof that
\texttt{forbid} dominates \texttt{permit} under union composition
for a flat policy set. The same one-way property is operationally
deployed in Istio AuthorizationPolicy~\cite{istio_authz}
(CUSTOM~$\succ$~DENY~$\succ$~ALLOW, non-configurable), GCP IAM
Deny Policies~\cite{gcp_deny_policies} (descendant deny narrows
ancestor permits), and Windows App Control supplemental
policies~\cite{windows_appcontrol_signed} (supplementals may only
expand allow lists, never weaken the base policy's signer
requirements). We do not claim novelty in the one-way-tightening
property itself. KYA's contribution is the three-channel
hierarchical composition: platform default $\oplus$ tenant override
$\oplus$ signed external recommendation, with a soundness lemma
over the multi-authority hierarchy. The safety-lattice ancestor is
Denning~\cite{denning1976lattice} and Bell-LaPadula~\cite{bell1976unified};
the refinement-calculus framing is Back and von
Wright~\cite{back1998refinement}.

\paragraph{Adaptive governance and the never-auto-tune discipline.}
\Cref{subsec:closedloop}'s "never auto-tune" invariant intersects
AI corrigibility (Soares et
al.~\cite{soares2015corrigibility}; Nayebi's five-utility
construction~\cite{nayebi2025corrigible} is the closest formal
analog), the MAPE-K human-machine-teaming
lineage~\cite{kephart2003vision,clelandhuang2022mapekhmt,sanwouo2025aware}
(none formalize the never-auto-tune property on the adapted
parameters themselves), and active-learning noisy-oracle
gating~\cite{schubert2023deep}. The strongest regulatory analog is
the U.S.~FDA's PCCP~\cite{fda2024pccp}: changes outside a
pre-specified envelope require human re-submission. KYA's
closed-loop is the first SaaS-tenant analog of an FDA-PCCP-style
change-envelope for AI governance weights. The opposing position
--- Cloudflare WAF ML~\cite{cloudflare_waf_ml_2022} and
IBM/Acceldata enterprise-governance practice --- advocates automatic
operational-parameter adjustment without customer approval,
sharpening the contrast.

\paragraph{Federated learning, federated analytics, federated policy.}
KYA's federated recommendation channel borrows the topology of
federated learning but not the mechanism: no gradient training, no
model checkpoint --- the distributed artifact is a signed weight
recommendation. This is closer to federated policy distribution
than to federated model
training~\cite{FedAvg,kairouz2021advances}. Federated
analytics~\cite{ramage2020federated} is structurally closer for
the upward telemetry; STIX/TAXII and cloud-EDR indicator
distribution (CrowdStrike, MS~Defender) are the closest
operational analogues for the downward signed-recommendation path.

\paragraph{AI governance frameworks (citation context).}
KYA's compliance scope maps to NIST AI RMF~\cite{NISTAIRMF},
Generative AI Profile~\cite{nist_ai_600_1},
adversarial-ML taxonomy~\cite{nist_ai_100_2_e2025}, EU AI
Act~\cite{EUAIAct2024}, ISO/IEC 42001~\cite{ISO42001},
ISO/IEC 42005~\cite{iso_42005_2025}, SR-11/7~\cite{SR117}, NYDFS
Part 500, DORA, GDPR, HIPAA. KYA does not propose changes to these
frameworks; it produces artifacts populating their documentation
requirements.

\section{Cross-Cutting Design Disciplines --- Source Citations}
\label{appx:disciplines}

\paragraph{Closed-set whitelists.}
Signal kinds (\btt{ALLOWED\_SIGNAL\_KINDS},
\btt{kya/realtime.py:92}), principal kinds
(\btt{PRINCIPAL\_KINDS}, \btt{kya/principals.py:81}), evidence
kinds (\btt{VALID\_EVIDENCE\_KINDS}, \btt{kya/evidence.py:128}),
inbound scopes (\btt{KNOWN\_SCOPES}, \btt{kya/inbound.py:137}),
dual-write tables (\btt{ALLOWED\_TABLES},
\btt{kya/dualwrite.py:122}), data classes
(\btt{kya/data\_classes.py:67--92}), compliance regimes
(\btt{kya/compliance.py:95--130}).

\paragraph{Bounded composition.}
Interaction-multiplier cap (\btt{MAX\_MULTIPLIER=2.0} at
\btt{kya/interactions.py:36}), delegation-trust cap
(\btt{DELEGATION\_TRUST\_CAP=25} at
\btt{kya/delegation\_trust.py:37}), security-capabilities cap
(60), data sensitivity cap (60), supply-chain cap (35),
blast-radius cap (30), input-source cap (25), final score clamp
to 100.

\paragraph{Asymmetric composition.}
Data sensitivity \textbf{MAX} (\btt{kya/data\_classes.py:204});
security capabilities \textbf{SUM}
(\btt{kya/security\_caps.py:128--138}); input sources
\textbf{base + breadth premium}
(\btt{kya/input\_sources.py:86--93}); interaction multipliers
$\geq 1.0$ enforced at
\btt{kya/interactions.py:register\_interaction} line~268.

\paragraph{"Never X" invariants.}
Never auto-tune (\btt{kya/feedback.py:20--22}); never
apply-anyway on signature failure (\btt{kya/inbound.py:31--32});
never bypass human gate on critical \btt{flag\_for\_review}
verdicts; never break scoring on hot-path exceptions
(\texttt{kya/risk.py:486--490}); never leak across tenants via
schema translation.

\paragraph{Fail-soft observability.}
Valkey unreachable, Phoenix unreachable, collector unreachable,
\texttt{prometheus\_client} unavailable --- core scoring and
persistence continue unaffected.

\section{Factor Decomposition --- Full Weight Schedules}
\label{appx:factor-detail}

\paragraph{Write-tool count.}
Rule-based classifier with a 20-entry write-prefix list
(\texttt{create\_}, \texttt{delete\_}, \texttt{update\_},
\texttt{override\_}, \texttt{revert\_}, \texttt{ingest\_},
\texttt{execute\_}, \texttt{publish\_}, \texttt{ack\_},
\texttt{suspend\_}, \texttt{remove\_}, etc.); catalog-based
classifier promotes any tool with role requirements to write tool.

\paragraph{Provenance schedule.}
\texttt{builtin} (0) $\to$ \texttt{custom} (5) $\to$
\texttt{imported} (10) $\to$ \texttt{marketplace} (15) $\to$
\texttt{third\_party} (20).

\paragraph{Model trust schedule.}
\texttt{enterprise} (0) $\to$ \texttt{frontier} (3) $\to$
\texttt{open} (8) $\to$ \texttt{self\_hosted} (10).

\paragraph{Data sensitivity schedule.}
Closed civilian + defense + NATO + EU taxonomy:
\texttt{public} (0) $\to$ \texttt{pii} (15) $\to$
\texttt{phi} (30) $\to$ \texttt{cui} (35) $\to$
\texttt{itar} (50) $\to$ \texttt{us\_secret} (55) $\to$
\texttt{us\_top\_secret} (60).

\paragraph{Security capability schedule.}
\texttt{fs\_read} (5), \texttt{network\_egress} (10),
\texttt{code\_execution} (20), \texttt{shell\_access} (25),
\texttt{container\_exec} (30); SUM-aggregated, cap 60.

\paragraph{Deployment environment schedule.}
\texttt{dev} (0) $\to$ \texttt{staging} (5) $\to$ \texttt{prod}
(15) $\to$ \texttt{enclave} (25). Enclave weighted highest because
misbehavior in a classified enclave is a national-security
incident, not a Slack message.

\paragraph{Delegation depth.}
Maximum chain length reachable via \btt{can\_delegate\_to}, with
hard caps against pathological graphs (5000 nodes, 50 hops, 50
paths returned).

\paragraph{Supply chain.}
\btt{first\_party} (0) $\to$ \btt{marketplace} (5) $\to$
\btt{self\_hosted\_ext} (10) with breadth premium for $>5$
dependencies.

\paragraph{Input sources.}
\btt{web\_fetch} and \btt{user\_upload} tied at 15 as the
highest-risk vectors. \btt{unknown} weighted (10) higher than
declared \btt{external\_api} (8).

\paragraph{Lifecycle.}
Ownership signed; approval status (pending 10, rejected 30, expired
20, unknown 15); time-in-production (brand-new $<7$d $+8$, churning
$\geq 20$ versions in 30d $+10$). Approval auto-degrades to
\texttt{expired} when \texttt{review\_expires\_at} is past.

\paragraph{Compliance scope.}
Each regulatory regime (GDPR, HIPAA, EU AI Act, NYDFS Part 500,
DORA, SR-11/7, plus 26 additional regimes including ITAR, EAR,
CMMC, FedRAMP, NATO/EU classified) elevates baseline severity
multiplicatively rather than additively.

\paragraph{Trust signals.}
Red-team, fairness, citation evidence contribute negative deltas;
stale audits decay to half weight after 180 days; missing audit
evidence adds positive risk (untested = dangerous). Cited
explicitly to EU AI Act Article~13.

\paragraph{Cost burn.}
Hourly burst $>5\times$ monthly average, budget exhaustion, anomaly
factor. Cost burn is framed as a security signal first, financial
second: prompt-injection attacks often trigger expensive
long-context bursts.

\section{Dynamic Rogue Signals --- Source-Code Instrumentation}
\label{appx:rogue-detail}

\paragraph{Out-of-scope tool attempts (OOS).}
Instrumented in the \btt{agents.api} and
\btt{agents.streaming} agent loops at the point of tool
dispatch. Counter:
\btt{veldt\_agent\_oos\_tool\_attempts\_total}.

\paragraph{Cross-tenant attempts.}
Instrumented at the tool-call binding point, comparing the
declared \texttt{tenant\_id} on the call to the agent's bound
tenant. Counter: \btt{veldt\_agent\_cross\_tenant\_attempts\_total}.

\paragraph{Data leakage.}
Detected in the action gate's class-tagged output classifier; a
class mismatch records a leak signal. Counter:
\btt{veldt\_agent\_data\_leak\_total}.

\paragraph{Policy violations.}
Block verdicts from content-safety / jailbreak / refusal-fail
policies. Counter:
\btt{veldt\_agent\_policy\_violations\_total}.

\paragraph{Signal flow.}
\begin{figure}[h]
\centering
\resizebox{\columnwidth}{!}{

\begin{tikzpicture}[
  font=\sffamily\footnotesize,
  src/.style={draw, rounded corners=2pt, minimum width=30mm, minimum height=10mm,
              align=center, inner sep=3pt, fill=red!10, draw=red!50},
  proc/.style={draw, rounded corners=2pt, minimum width=28mm, minimum height=10mm,
               align=center, inner sep=3pt, fill=blue!10, draw=blue!50},
  sink/.style={draw, rounded corners=2pt, minimum width=28mm, minimum height=10mm,
               align=center, inner sep=3pt, fill=green!12, draw=green!60!black},
  arrow/.style={-{Stealth}, thick, gray!70},
]

\node[src] (oos) at (0,3) {OOS tool attempt\\\scriptsize Layer 3 RBAC refusal};
\node[src] (ct)  at (0,1.5) {Cross-tenant attempt\\\scriptsize tenant\_id mismatch};
\node[src] (dl)  at (0,0) {Data leak\\\scriptsize class-tagged emit};
\node[src] (pv)  at (0,-1.5) {Policy violation\\\scriptsize action-gate block};

\node[proc] (cnt) at (5,1.5) {Prometheus counter\\\scriptsize \texttt{veldt\_agent\_*}};
\node[proc] (chn) at (5,0) {Evidence chain\\\scriptsize HMAC-signed row};
\node[proc] (kyp) at (5,-1.5) {KYP debit\\\scriptsize \texttt{actor\_agent\_key}};

\node[sink] (delta) at (9.8,1.5) {rogue-score delta\\\scriptsize $\log_2(1+c_s)$};
\node[sink] (alert) at (9.8,0) {operator alert\\\scriptsize burst, threshold};
\node[sink] (bucket) at (9.8,-1.5) {bucket reclass.\\\scriptsize neutral $\to$ risky};

\foreach \s in {oos,ct,dl,pv} {
  \draw[arrow] (\s) -- (cnt);
  \draw[arrow] (\s) -- (chn);
  \draw[arrow] (\s) -- (kyp);
}

\draw[arrow] (cnt) -- (delta);
\draw[arrow] (chn) -- (alert);
\draw[arrow] (kyp) -- (bucket);

\end{tikzpicture}}
\caption{Runtime signal flow. Each observed event is counted,
recorded to the evidence chain, and surfaced as a score delta.}
\label{fig:signal-flow}
\end{figure}

\paragraph{Rogue heatmap.}
\begin{figure}[h]
\centering
\includegraphics[width=\columnwidth]{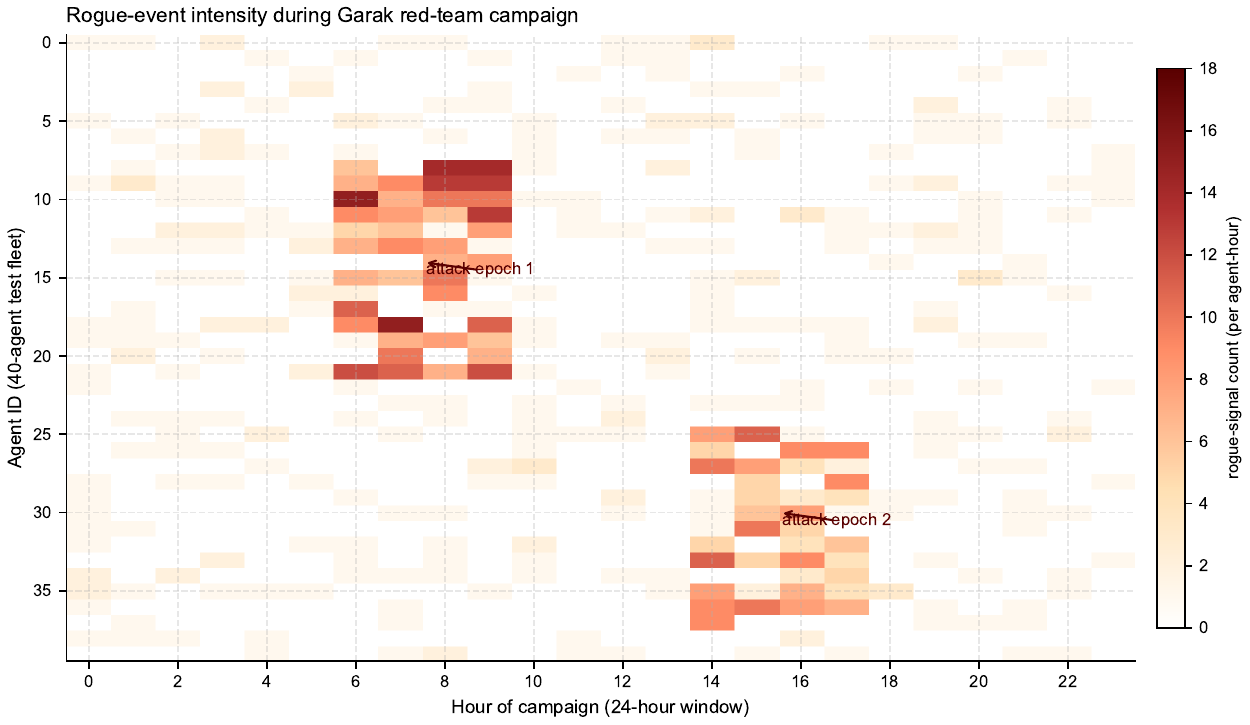}
\caption{Rogue-event heatmap across a 40-agent test fleet during a
24-hour Garak red-team campaign. Clusters of high intensity
co-occur with attack epochs (annotated arrows).}
\label{fig:rogue-heatmap}
\end{figure}

\section{Deployment Topology Figure}
\label{appx:topology-figure}

\begin{figure}[h]
\centering
\resizebox{\columnwidth}{!}{

\begin{tikzpicture}[
  font=\sffamily\footnotesize,
  swim/.style={draw, rounded corners=2pt, minimum height=8mm, minimum width=22mm,
               align=center, inner sep=3pt},
  cust/.style={swim, fill=blue!8, draw=blue!50},
  vendor/.style={swim, fill=orange!10, draw=orange!60!black},
  arrow/.style={-{Stealth}, thick, gray!70},
  agg/.style={font=\sffamily\footnotesize\itshape, gray!60!black},
]

\node[cust] (agent) at (0,3.2) {customer agent\\\scriptsize LangChain / Claude / etc.};
\node[cust] (hooks) at (0,1.8) {\texttt{kya\_hooks}\\\scriptsize callback / RunHooks};
\node[cust] (otel) at (0,0.4) {OTel collector};
\node[cust] (bridge) at (0,-1.0) {\texttt{kya\_otlp\_bridge}\\\scriptsize sidecar container};
\node[cust] (core) at (0,-2.4) {\texttt{kya} core\\\scriptsize evidence, scoring, KYP};

\draw[arrow] (agent) -- node[right, font=\sffamily\scriptsize] {spans} (hooks);
\draw[arrow] (hooks) -- node[right, font=\sffamily\scriptsize] {OTLP} (otel);
\draw[arrow] (otel) -- (bridge);
\draw[arrow] (bridge) -- (core);

\node[draw, dashed, gray, fit={(agent)(core)}, inner sep=10pt, rounded corners,
      label={[font=\sffamily\bfseries,gray]above:Customer process / tenant boundary}] {};

\node[vendor] (col) at (7.5,-2.4) {Veldt collector\\\scriptsize aggregate counts};
\node[vendor] (review) at (7.5,-0.6) {analyst review};
\node[vendor] (sign) at (7.5,1.0) {Ed25519 signing\\\scriptsize $key\_id$ + KMS};

\node[draw, dashed, gray, fit={(col)(sign)}, inner sep=10pt, rounded corners,
      label={[font=\sffamily\bfseries,gray]above:Vendor side}] {};

\draw[arrow, blue!60] (core.east) to[bend right=15] node[below, agg]
  {aggregate-only telemetry $\rightarrow$} (col.west);
\draw[arrow, orange!60!black] (sign.west) to[bend right=15] node[above, agg]
  {$\leftarrow$ signed recommendations} (core.east);

\node[font=\sffamily\footnotesize\itshape, text width=70mm, align=center,
      fill=yellow!10, draw=yellow!50!black, rounded corners, inner sep=4pt]
  at (3.5,-5) {Raw prompts, completions, and tool I/O never leave
  the customer process. Only counters cross outward; only signed
  recommendations cross inward.};

\end{tikzpicture}}
\caption{Deployment topology. Raw prompt/completion payloads remain
in the customer process; only aggregated telemetry crosses the
organizational boundary, and only signed policy recommendations
come back.}
\label{fig:topology}
\end{figure}

\section{Clinical Triage Scoring --- Full Breakdown}
\label{appx:clinical-scoring}

Scoring the clinical-triage fleet of \cref{subsec:second-domain}
on the identical factor pipeline used for the loan-decisioning
fleet: the \textbf{Diagnosis Suggestion Sub-Agent} scores
\textbf{84 (critical)} --- no write authority, but \texttt{phi}
data class (+30), \texttt{phi\_genetic} adds further via MAX
aggregation, \texttt{frontier} model (+3), and the
\texttt{phi\_AND\_LLM\_uncertainty} interaction multiplier fires
at $\times 1.20$. The \textbf{HL7 Lookup Sub-Agent} scores
\textbf{56 (medium)} --- read-only, but \texttt{phi} sensitivity
floor + EHR provenance push it just under the high threshold. The
\textbf{Triage Orchestrator} scores \textbf{70 (high)} via the
delegation-trust premium for delegating to the critical-bucket
Diagnosis sub-agent --- the same mechanism that lifts the bank's
Loan Triage Agent into the high bucket via its Risk Review
delegate.

\section{Full Compliance Regime Coverage}
\label{appx:compliance-regimes}

KYA 0.1.0 maps to 32 regulatory regimes; the 27 not in
\cref{tab:compliance}:

\begin{table}[h]
\centering
\scriptsize
\setlength{\tabcolsep}{4pt}
\begin{tabular}{lrrl}
\toprule
\textbf{Regime} & \textbf{Retention} & \textbf{Breach} & \textbf{Tier model} \\
\midrule
SOX             & 7 yr    & --- & --- \\
PCI-DSS         & 1 yr    & --- & --- \\
SR 11-7         & 7 yr    & --- & Model tiers \\
ISO 42001       & 3 yr    & --- & --- \\
ITAR/EAR        & 5 yr    & --- & Export-control \\
NIST 800-171/53 & 6 yr    & --- & --- \\
IL5 / IL6       & 6/25 yr & --- & DoD IL \\
CCPA            & 2 yr    & --- & --- \\
GLBA            & 5 yr    & --- & --- \\
FERPA           & 5 yr    & --- & Student-record \\
FedRAMP         & 3 yr    & --- & Low/Mod/High \\
CMMC            & 6 yr    & --- & Level 1--3 \\
IRAP (AU)       & 7 yr    & --- & Prot./Sec./TS \\
CCCS (CA)       & 6 yr    & --- & --- \\
C5 (DE)         & 3 yr    & --- & --- \\
ENS (ES)        & 3 yr    & --- & Bajo/Med/Alto \\
NATO            & 25 yr   & --- & Restr.--Cosmic \\
EU classified   & 25 yr   & --- & R/C/S/TS \\
ISO 27001       & 3 yr    & --- & --- \\
ISO 42005       & 3 yr    & --- & Impact-assess. \\
SOC 2           & 3 yr    & --- & --- \\
SR 21-8         & 7 yr    & --- & --- \\
NIST AI RMF     & 3 yr    & --- & GMMM functions \\
NIST AI 600-1   & 3 yr    & --- & GenAI Profile \\
NIST AI 100-2   & 3 yr    & --- & Adversarial-ML \\
EU AI Act DA    & 10 yr   & --- & Delegated acts \\
HIPAA Title III & 6 yr    & 60 d & Min.-necessary \\
\bottomrule
\end{tabular}
\end{table}

The mapping is advisory: KYA cannot itself certify a deployment as
compliant, but it produces the artifacts (canonical hashes,
evidence chain, scored decisions, drift records, regulatory
breach-notification fan-out per the
\texttt{compliance\_shim.py} idempotent multi-regime emitter) that
compliance teams need to assemble certification packages. The
multi-regime fan-out is itself notable: a single incident with PII +
NYDFS scope produces two distinct notifications with different
formats and SLAs (72\,h EDPB + 72\,h NYDFS), with
\texttt{UNIQUE(incident\_id, regime)} preventing double-fire.

\section{Extended Limitations}
\label{appx:limitations}

\paragraph{Reference-implementation collector.}
The federated channel (\cref{sec:fedrec}) is fully specified on the
SDK side. The eval-side collector is a reference implementation.
The 0.1.0 release ships an empty \texttt{DEFAULT\_PINNED\_KEYS}
map; operators configure trust anchors via
\texttt{KYA\_INBOUND\_PUBLIC\_KEY} for self-operated deployments.
Hardening the reference collector into a multi-tenant production
service --- key custody, analyst review workflows, cross-organization
telemetry ingestion, and the operational discipline that supports
them --- is out of scope.

\paragraph{Conceptual proximity to provenance and delegated authorization.}
The actor-agent attribution mechanism (\cref{subsec:actoragent}) is
conceptually adjacent to OTel baggage
propagation~\cite{w3c_baggage}, the RFC~8693 \texttt{act} chain for
delegated authorization, and correspondent-banking attribution
rules in AML. KYA's specific composition --- runtime
trust-counter debit at signal emission, with a default-key
convention across three SDK hook surfaces --- is distinct in
mechanism and direction. We do not claim the conceptual move
("blame the orchestrator at runtime, not the puppet") is novel in
principle; only that the runtime counter-debit primitive composed
with the three-hook default-key convention has no equivalent in
the surveyed agent-governance systems.

\paragraph{Closed-set whitelists require SDK release for new kinds.}
The closed-set discipline (\cref{sec:disciplines}) is a strength
(auditable, caller-controlled-keyspace prevention) and a
limitation: a customer with a novel signal kind, principal kind,
or evidence kind cannot add it via runtime configuration; they
must wait for an SDK release. For most deployments the trade-off is
acceptable, but high-velocity customers may experience the
closed-set discipline as friction.

\paragraph{Behavioral drift is partially out of scope.}
The drift detector (\cref{appx:drift}) catches changes to the
agent's declared definition. It does not catch model-provider
behavioral drift, tool-environment drift, or retrieval-corpus
drift. The mode-vs-config gap catches one axis of behavioral
drift but the broader category requires the runtime signals of
\cref{sec:rogue}, an LLM-judge regression harness, or external
evaluator pulls.

\paragraph{Adapter coverage is broad but not exhaustive.}
KYA ships 15+ native adapters; several emerging frameworks
(LangGraph, LangFlow, DSPy, proprietary enterprise frameworks) are
covered via the generic fallback with reduced fidelity
(framework-specific metadata is not extracted). First-party
support is on the roadmap.

\paragraph{Privacy of aggregate telemetry.}
The federated channel sends counts only, never payloads. It does
not currently apply differential privacy~\cite{dwork2014algorithmic}
or secure aggregation~\cite{bonawitz2017practical} to the counts.
At small fleet sizes the counts may leak coarse information about a
customer's deployment. Adding a differential-privacy budget per
tenant per epoch is future work and a prerequisite for
cross-industry deployments where counts could be sensitive.

\paragraph{Red-team campaign distribution is curated.}
The 1,200 adversarial probes in \cref{subsec:attack-eval} were
sampled from PyRIT and Garak default test sets plus the Liang
topology attack~\cite{liang2025donttrust}. These distributions
oversample published attack templates and undersample novel
zero-day attacks --- the 11\% miss rate on multi-turn
prompt-injection sequences likely understates real-world risk. A
production deployment should pair KYA with an ongoing red-team
program rather than relying solely on the published harness.

\paragraph{Insider-threat collusion.}
\Cref{sec:fedrec} bounds the blast radius of compromised collector
keys via the only-tighten algebra, but KYA does not defend against
a tenant operator who knowingly applies harmful recommendations
against their own organization. Insider threats remain in scope for
the customer's authentication, RBAC, and audit infrastructure, not
KYA's federation channel. Similarly, KYA does not defend against a
platform admin who lowers the platform default itself.

\paragraph{Local vs concurrent agent-governance work.}
Several concurrent 2025--2026 systems
(AAGATE~\cite{huang2025aagate}, Aegis~\cite{Aegis2026},
SIGIL~\cite{shen2026sigil}, Microsoft Agent Governance
Toolkit~\cite{microsoft_agt_2026}) target adjacent or overlapping
problems. We do not claim categorical priority over these systems;
\cref{appx:related} identifies the specific delta KYA provides
against each. Coordination on shared schema conventions for
cross-system audit interoperability is desirable future work.

\section{Only-Tighten Algebra --- Definition, Lemma, Proof}
\label{appx:tighten}

We restate the formal block from \cref{sec:fedrec} here for
reference. The body presents the result in prose; this appendix
gives the definition, lemma, and proof sketch.

Let $W_0: (\mathit{scope}, \mathit{key}) \to \mathbb{N}$ be the
platform-default weight function. Define the partial order $\succeq$
on weights by $w' \succeq w \iff w' \ge w$ (higher = tighter).

\begin{definition}[Only-tighten override]
\label{def:tighten}
A tenant override $u: (\mathit{scope}, \mathit{key}) \to \mathbb{N}$
is \emph{only-tighten} if for every $(s, k)$,
$u(s, k) \succeq W_0(s, k)$.
\end{definition}

\begin{lemma}[Only-tighten soundness]
\label{lem:tighten}
For any sequence of \texttt{set\_override} calls (tenant-scoped) and
signed-recommendation applies starting from $W_0$, the effective
tenant weight function $W_t$ satisfies $W_t \succeq W_0$ at all
times $t$, and the composition is monotone non-decreasing in
$\succeq$ within the tenant scope.
\end{lemma}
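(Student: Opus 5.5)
The proof is an induction on the length of the update sequence, with the invariant $W_t \succeq W_0$ pointwise, that is, $W_t(s,k) \ge W_0(s,k)$ for every $(s,k)$. First I will fix the semantics of the effective weight: $W_t(s,k)$ is the tenant override value if one is stored for $(s,k)$, and $W_0(s,k)$ otherwise. This is the lookup the SDK performs. The base case $t=0$ has no overrides, so $W_t = W_0$ and the invariant holds by reflexivity of $\succeq$.

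For the inductive step I will split on the kind of the $(t{+}1)$-th event. In every case the update touches at most one key $(s,k)$, so only that key needs checking.

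\emph{(i) Tenant-scoped \texttt{set\_override}$(s,k,v)$.} The guard \texttt{\_check\_only\_tighten} raises \texttt{OverrideLoosensError} unless $v \succeq W_0(s,k)$. If it raises, the state is unchanged and the invariant is inherited. Otherwise $W_{t+1}(s,k)=v \succeq W_0(s,k)$, and every other key is unchanged.

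\emph{(ii) Signed-recommendation apply.} Gates~1 and~2 (\cref{subsec:fedinbound}) only filter; they never write. Gate~4, whether by operator approval or the auto-apply allowlist, routes the write through the same \texttt{set\_override} call. So this case reduces to (i). Rejected or pending recommendations, and the status-guarded race path, leave the weight table untouched.

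For monotonicity I need the stronger per-step statement $W_{t+1}(s,k) \succeq W_t(s,k)$. I plan to read the guard as comparing against the current effective value $W_t(s,k)$, which by the induction hypothesis dominates $W_0(s,k)$. That reading gives both claims at once by transitivity of $\ge$ on $\mathbb{N}$. It also matches the worked example's check "$1.30 \geq$ current override".

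The main obstacle is this gap between the two readings. Definition~\ref{def:tighten} and the stated guard only compare against $W_0$, so a second override can lower an earlier tenant tightening while staying above $W_0$. Under the $W_0$-only reading, $W_t \succeq W_0$ holds but monotonicity fails. I would resolve this in one of two ways. The first is to state explicitly that the guard compares against $\max(W_0(s,k), \text{current override})$, citing \texttt{tenant\_weights.py:189}. The second is to weaken the second conjunct to "monotone relative to $W_0$". My first attempt would be the former, proving monotonicity as a separate one-line lemma about the guard.

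I also need to state the scope restrictions explicitly. Platform-admin writes with \texttt{tenant\_id=None} fall outside the lemma, since they bypass the guard and may lower $W_0$ itself. The invariant is therefore relative to the current $W_0$, and concurrency is assumed serialized by the database transaction.
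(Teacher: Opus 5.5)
Your base case, your handling of rejected or pending events, the reduction of signed-recommendation applies to the same \texttt{set\_override} guard, and the \texttt{tenant\_id=None} scope caveat all match the paper's sketch. The real divergence is the update semantics behind the monotonicity conjunct.

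You model an accepted override as overwriting the stored value. Under that model you correctly note that a guard checking only $v \succeq W_0(s,k)$ admits a sequence like $12 \to 30 \to 20$. You then restore monotonicity by strengthening the guard to compare against the current effective value. The Gate~3 wording and the OFAC example support that reading, but Definition~\ref{def:tighten} does not.

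The paper keeps the $W_0$-relative predicate of Definition~\ref{def:tighten}. It instead models composition as a pointwise join, $W_{t+1}(s,k)=\max(W_t(s,k),u(s,k))$. Both conjuncts then follow from the single chain $\max(W_t,u)\succeq W_t\succeq W_0$. Your counterexample is absorbed, because the later, lower value loses the $\max$.

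The trade-off between the two routes is instructive. In the paper's chain the guard $u\succeq W_0$ is never actually used, so all the content sits in the modeling assumption that overrides compose by $\max$. In your model the guard does all the work, and your analysis pinpoints exactly what \texttt{set\_override} must check if it really overwrites.

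Either is a valid proof relative to its own semantics, but you should commit to one. Your fallback of weakening the second conjunct to ``monotone relative to $W_0$'' would not prove the lemma as stated.
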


\begin{proof}[Proof sketch]
By induction on the apply sequence. Base case: $W_0 \succeq W_0$.
Inductive step: every \texttt{set\_override} call satisfies the
only-tighten predicate by construction (the function raises on
violation), so any applied override $u$ satisfies $u \succeq W_0$.
Therefore $W_{t+1}(s, k) = \max(W_t(s, k), u(s, k)) \succeq W_t(s, k)
\succeq W_0(s, k)$. Every applied recommendation $r$ passes the same
predicate before persistence (\cref{subsec:fedinbound}); monotonicity
within the tenant scope follows directly.
\end{proof}

\paragraph{Threat-model precision.} The invariant binds
\textbf{tenants}, not platform admins. The
\texttt{\_check\_only\_tighten} function returns early when
\texttt{tenant\_id=None} (\texttt{tenant\_weights.py:195-197}). The
intent is asymmetric: platform admins remain trusted to lower the
platform default itself (a deliberate management decision); tenants
and inbound recommendations cannot lower below it. The value to a
tenant is immutability against vendor recommendations and against
compromised platform admin --- not absolute immutability.

\section{Discussion}
\label{appx:discussion}

We step back to discuss what KYA implies for AI governance practice,
where its limits sit relative to broader accountability questions,
and which design moves we believe will or will not generalize.

\paragraph{Governance artifacts are not governance.}
KYA produces artifacts --- canonical hashes, HMAC-chained evidence,
signed recommendations, per-(tenant, principal) trust counters,
multi-regime breach-notification fan-outs --- that regulator
documentation requirements consume. Producing these artifacts does
not make a deployment safe; it makes a deployment \emph{accountable}.
An agent that performs catastrophically can have an immaculate
evidence chain documenting exactly how. KYA is necessary
infrastructure for the audit, attestation, and incident-reporting
demands of modern AI governance frameworks; it is not sufficient for
trustworthy autonomy.

\paragraph{Agent-level scoring vs.\ fleet-level fairness.}
\label{appx:fairness}
KYA's static and dynamic scoring operates at the level of an
individual agent or principal. Much of the regulatory and ethics
literature operates at the level of a \emph{deployment} or a
\emph{population}. A loan-decisioning fleet
(\cref{subsec:running-example}) can have every individual agent
scored \emph{high} or below without the fleet being collectively
compliant with ECOA's disparate-impact thresholds. KYA does not
aggregate to that level. Concretely, an ECOA disparate-impact audit
would consume: (i) per-invocation evidence-chain entries keyed by
applicant identifier and decision outcome, (ii) the action-gate
verdict at $e_9$ in each chain, (iii) the static risk score of the
agent that produced each decision for control-stratification, and
(iv) the per-principal trust counters to identify operators whose use
correlates with outcome disparity. An external auditor pulls these
via the read-only SDK (\texttt{kya.list\_evidence},
\texttt{kya.list\_invocations}), computes the four-fifths-rule
selection ratios stratified by protected class, and writes the
finding back as a \texttt{quality} signal in the same evidence
chain --- closing the loop without KYA's core ever modeling
"protected class" as a first-class concept. The same pattern
generalizes to HIPAA Title~III minimum-necessary audits on the
clinical fleet of \cref{subsec:second-domain}, ITAR-flow audits on
defense deployments, and EU AI Act Article~10 data-governance
audits. We treat building this audit layer as out of scope but
explicitly load-bearing for the claim that KYA is "the substrate
fairness audits run on."

\paragraph{The operator-in-the-loop invariant under pressure.}
The "never auto-tune" invariant (\cref{subsec:closedloop}) is the
most operationally consequential design choice. Operator review is
the bottleneck at scale: a tenant managing 200 agents with 5--10
weekly weight-adjustment suggestions per agent quickly exceeds
human attention. The pressure to relax --- to auto-approve "small"
in-envelope changes --- will be significant. The right resolution
is not to relax the invariant but to make operator review better
(batched approval UIs, approve-with-comment for pre-registered
envelopes, score-shifting-suggestion surfacing). The FDA
PCCP~\cite{fda2024pccp} precedent supports this: an
operator-defined change envelope is the unit of authority. The cost
of getting this wrong is asymmetric --- a feedback loop where one
false-positive incident silently weakens the model is the single
most plausible failure mode of automated governance adaptation.

\paragraph{Cross-vendor trust portability.}
A regulator should be able to take an agent scored by vendor~$X$
and verify its score using vendor~$Y$. KYA's canonical-form
scoring (\cref{sec:static}) is a step toward this. Trust
portability also requires runtime signals (rogue events, evidence
chain) verifiable across vendor boundaries. The fairness community
benefits more from interoperability profiles mapping among
candidates (W3C PROV, OTel semantic conventions, in-toto
attestation, AAGATE and Microsoft AGT schemas) than from picking a
winner. \texttt{kya\_otlp\_bridge} already operates as such a shim
for OpenInference- and OTel-instrumented frameworks.

\paragraph{What we do not believe will generalize.}
We are skeptical of two design moves that recur in adjacent
proposals. \textbf{Vendor-trained model updates without
customer-side gate} (Cloudflare WAF
ML~\cite{cloudflare_waf_ml_2022}, several
SOAR~\cite{ibm_soar_2024}) is a viable shape for \emph{detection
rules} but not for \emph{governance policy weights}. The blast
radius of a bad detection rule is bounded by false positives in
alert queues; the blast radius of a bad weight change is silent
reclassification of agents into looser buckets across an entire
customer base. \textbf{Blockchain-anchored audit logs}
(SIGIL~\cite{shen2026sigil}) solve a problem most regulated
customers do not have. Regulators want a verifiable log they can
read; they do not want to operate a node on a public ledger.
Sigstore Rekor~\cite{Newman2022Sigstore} and Certificate
Transparency-style public logs strike a better balance:
transparency without participation requirement, which we see as the
credible upgrade path for KYA's evidence chain.

\end{document}